\documentclass[12pt]{article}
\usepackage[english]{babel}
\usepackage[utf8]{inputenc}
\usepackage{amssymb,amsmath,amsthm}
\usepackage{thmtools}
\usepackage{thm-restate}
\theoremstyle{plain}
\newtheorem{theorem}{Theorem}[section]
\newtheorem{lemma}[theorem]{Lemma}
\newtheorem{corollary}[theorem]{Corollary}
\newtheorem{claim}[theorem]{Claim}
\newtheorem{fact}[theorem]{Fact}

\theoremstyle{definition}
\newtheorem{definition}[theorem]{Definition}
\usepackage{mathtools} 
\usepackage{bbm}
\usepackage[backref,colorlinks,citecolor=blue,bookmarks=true,linktocpage]{hyperref}
\usepackage{url}
\usepackage[algo2e,ruled]{algorithm2e} 
\usepackage{todonotes}
\usepackage{natbib}
\usepackage{fullpage}
\usepackage{mleftright}
\usepackage{microtype}
\usepackage{verbatim}
\newcommand{\Paren}[1]{\mleft(#1\mright)}
\newcommand{\bigO}[1]{O\Paren{#1}}

\newcommand{\eps}{\varepsilon}

\newcommand{\norm}[1]{\mleft\|#1\mright\|}
\newcommand{\ignore}[1]{}

\newcommand{\eqdef}{\coloneqq}
\renewcommand{\leq}{\leqslant}
\renewcommand{\geq}{\geqslant}

\let\orgvec\vec

\allowdisplaybreaks

\newcommand{\new}[1]{#1}

\title{Independence Testing for Bounded Degree Bayesian Network}
\author{Arnab Bhattacharyya
\and Cl\'ement L. Canonne
\and Joy Qiping Yang}
\usepackage{times}

\newcommand{\assign}{:=}
\newcommand{\cdummy}{\cdot}
\newcommand{\infixand}{\text{ and }}

\newcommand{\nin}{\not\in}

\newcommand{\tmcolor}[2]{{\color{#1}{#2}}}

\newcommand{\tmem}[1]{{\em #1\/}}

\newcommand{\tmop}[1]{\ensuremath{\operatorname{#1}}}

\newcommand{\p}{\mathbf{p}}

\newcommand{\of}{:}
\newcommand{\E}{\mathbb{E}}

\renewcommand{\todo}[1]{\marginpar{\tiny\tmcolor{orange}{TODO:} #1}}

\begin{document}

\maketitle

\begin{abstract}
We study the following {\em independence testing} problem: given access to samples from a distribution $P$ over $\{0,1\}^n$, decide whether $P$ is a product distribution or whether it is $\eps$-far in total variation distance from any product distribution. For arbitrary distributions, this problem requires $\exp(n)$ samples. We show in this work that if $P$ has a sparse structure, then in fact only linearly many samples are required.
Specifically, if $P$ is Markov with respect to a Bayesian network whose underlying DAG has in-degree bounded by $d$, then $\tilde{\Theta}(2^{d/2}\cdot n/\eps^2)$ samples are necessary and sufficient for independence testing.
\end{abstract}

\section{Introduction}
\ignore{

      \begin{itemize}
      \item
      Fundamental importance of independence testing. Core component in algorithms for variable selection and causal discovery.
      \item
       	Define the problem formally.
	\item
	    Claim that without any distributional assumptions, exponentially many samples needed. Point out that this is common in statistics and algorithms, and often improvements possible if distribution has sparse structure.
	    \item
	    Introduce Bayesian networks. Mention they succinctly describe how samples are generated. Explain that we restrict independence testing to distributions consistent with a bounded-degree Bayes net.       	
		\item
		Give formal theorem
		\item
		Overview of upper bound. 
		\item 
		Overview of lower bound      
      \end{itemize}
 }

It is often convenient to model high-dimensional datasets as probability distributions. An important reason is that the language of probability formalizes what we intuitively mean by two features of the data being {\em independent}: the marginal distributions along the two corresponding coordinates are statistically independent. Independence is a basic probabilistic property that, when true, enables better interpretability of data as well as computationally fast inference algorithms. 

In this work, we study the problem of testing whether a collection of $n$ binary random variables are mutually independent. Independence testing is an old and foundational problem in hypothesis testing \citep{neyman1933ix, lehmann2005testing}, and we consider it from the perspective of {\em property testing} \citep{rubinfeld1996robust, goldreich1998property}. Our goal is to design a tester that accepts product distributions on $n$ variables and rejects distributions that have statistical distance at least $\eps$ from every product distribution. The tester gets access to i.i.d.~samples from the input distribution and can fail (in either case) with probability at most $1/3$. The objective is to minimize the number of samples as a function of $n$ and $\eps$. 

Previous work \citep{diakonikolas2016new} shows\footnote{See Appendix \ref{sec:gen} for a more general result.} that testing independence of a distribution on $\{0,1\}^n$ requires $2^{\Omega(n)}$ queries which is intractable. However, there is an aspect of this proof that is unsatisfying. The ``hard distributions'', which are shown to require many queries to distinguish from product distributions, cannot be described succinctly, as they require exponentially many bits to describe. Thus, a natural question arises: can we test independence efficiently for distributions on $\{0,1\}^n$ that have a sparse description?

One of the most canonical ways to describe high-dimensional distributions is as {\em Bayesian networks} (or {\em Bayes nets} in short). A Bayes net specifies how to generate an $n$-dimensional sample in an iterative way and is especially useful for modeling causal relationships. Formally, a Bayes net on $\{0,1\}^n$ is given by a directed acyclic graph (DAG) $G$ on $n$ vertices and probability distributions $p_{i,\pi}$ on $\{0,1\}$ for all $i \in [n]$ and all assignments $\pi$ to the parents of the $i$'th node in the graph $G$. An $n$-dimensional sample is obtained by sampling the nodes in a topological order of $G$, where the $i$'th node is sampled according to $p_{i,\pi}$ for the assignment $\pi$ that is already fixed by the samples of the parent nodes of $i$. The generated distribution on $\{0,1\}^n$ is said to be {\em Markov with respect to} $G$.

In this work, we consider independence testing on distributions having a sparse Bayes net description, a class of distributions naturally arising in, and with numerous applications to, machine learning~\citep{WainwrightJ08}, robotics, natural language processing, medicine, and  biological settings such as gene expression data~\citep{genexpression1,genexpression2,geneexpression3} (where it is known that most genetic networks are actually sparse). In these cases, one hopes to leverage that additional knowledge to test whether those sparse, local dependencies are actually present without having to pay the prohibitive exponential cost in the dimension $n$. Specifically, we analyze independence testing on distributions that are promised to be Markov with respect to a DAG of maximum in-degree $d$, where $d \ll n$. While the learning sample complexity, known to be $\tilde{O}(2^d \cdot n / \eps^2)$~\citep{BhattacharyyaGMV20}, provides a baseline for the \emph{testing} question, it is not at all obvious that this is tight, and what the correct dependence on $d$ and even $n$ are. Our main result essentially settles this question, and establishes the following: 
\begin{theorem}[Informal Main Theorem]
    \label{theo:main:informal}
Suppose an unknown distribution $P$ on $\{0,1\}^n$ is Markov with respect to an unspecified degree-$d$ DAG.
 The sample complexity of testing whether $P$ is a product
      distribution or is at least $\eps$-far from any product distribution is $\tilde{\Theta} (2^{d / 2} \cdot n /
      \eps^2)$.
\end{theorem}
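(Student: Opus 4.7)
The proof establishes both an upper bound and a matching lower bound of order $\tilde{\Theta}(2^{d/2}\cdot n/\eps^2)$.

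\textbf{Upper bound.} The structural key is the Markov decomposition of KL divergence: writing $Q = P_1\otimes\cdots\otimes P_n$ for the product of marginals of $P$, and $\Pi_1,\dots,\Pi_n$ for the parent sets of the unspecified DAG $G$ that $P$ factorizes over,
\[
D_{\mathrm{KL}}(P \,\|\, Q) \;=\; \sum_{i=1}^n I(X_i \,;\, X_{\Pi_i}).
\]
Pinsker's inequality forces $D_{\mathrm{KL}}(P \,\|\, Q) \geq 2\eps^2$ under the alternative $d_{\mathrm{TV}}(P,Q) \geq \eps$, so some index $i^\star$ satisfies $I(X_{i^\star};X_{\Pi_{i^\star}}) \geq 2\eps^2/n$; combined with $I \leq \chi^2$, the joint distribution of $(X_{i^\star},X_{\Pi_{i^\star}})$ on $\{0,1\}^{d+1}$ has $\chi^2$-divergence from its product at least $2\eps^2/n$. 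Since the DAG is unknown, the tester enumerates every pair $(i,S)$ with $i \in [n]$ and $S \subseteq [n]\setminus\{i\}$ of size at most $d$, and runs a standard $\chi^2$-style independence tester on the $(X_i,X_S)$-marginals against $\chi^2$-threshold $\tau = 2\eps^2/n$. Such a tester uses $\tilde{O}(\sqrt{2^{d+1}}/\tau) = \tilde{O}(2^{d/2}\cdot n/\eps^2)$ samples; the same sample set is reused across all $n\cdot\binom{n-1}{\leq d}$ tests, and the $O(d\log n)$ union-bound boost is absorbed into $\tilde{O}$.

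\textbf{Lower bound.} For the matching lower bound, the plan is to follow a standard Ingster-style second-moment argument on a family of degree-$d$ Bayes nets. Partition $[n]$ into $m = \lfloor n/(d+1)\rfloor$ blocks of $d+1$ variables each, each wired as a star DAG (one parent, $d$ children), and consider a prior $\pi$ on alternatives in which each block is independently perturbed by a random $\pm\rho$ sign pattern on its $2^{d+1}$ atoms, chosen to preserve uniformity of every per-coordinate marginal (so that the uniform product distribution is the best product approximation). Calibrate $\rho$ so that each realization is $\eps$-far from every product distribution while the per-block $\chi^2$ budget stays tight. A tensorized second-moment bound on $\chi^2(\mathbb{E}_\pi[Q_\theta^{\otimes N}] \,\|\, Q_0^{\otimes N})$ then shows it is $o(1)$ for $N = o(2^{d/2}\cdot n/\eps^2)$; by Le Cam no tester can succeed with constant advantage.

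\textbf{Main obstacles.} I expect the main difficulty to be twofold. For the upper bound, handling agnosticism to the DAG forces the tester to try all $O(n^{d+1})$ candidate parent sets, and crucially relies on a \emph{$\chi^2$-based} independence subroutine achieving the $\tilde{O}(\sqrt{|\mathrm{supp}|}/\tau)$ scaling; a naive TV-tester plugged in via Pinsker would lose an extra factor of $n$, because only $I \leq \chi^2$ (and not $I \lesssim d_{\mathrm{TV}}^2$) goes in the right direction. For the lower bound, the central challenge is calibrating the perturbation prior so that every realization is simultaneously $\eps$-far from all product distributions while the per-block $\chi^2$-contributions tensorize cleanly to produce the matching $2^{d/2}$ and $n/\eps^2$ factors at the same time.
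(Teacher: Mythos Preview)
Both halves of your proposal have genuine gaps.

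\textbf{Upper bound.} Your localization via the KL chain rule $D_{\mathrm{KL}}(P\|Q)=\sum_i I(X_i;X_{\Pi_i})$ is a perfectly good alternative to the paper's Hellinger subadditivity, but the step ``$I\leq\chi^2$, hence test $\chi^2\geq\tau$ with $\tilde O(\sqrt{2^{d+1}}/\tau)$ samples'' fails. The inequality $I\leq\chi^2$ is true but useless here: it lower-bounds $\chi^2$, and testing ``product vs.\ $\chi^2(P,\prod P_i)\geq\tau$'' is \emph{not} achievable with $O(\sqrt{k}/\tau)$ samples. Concretely, on $\{0,1\}^2$ take $P(0,0)=1-3\alpha$, $P(0,1)=P(1,0)=P(1,1)=\alpha$: one checks $\chi^2(P,P_1\otimes P_2)\to 1/4$ as $\alpha\to 0$, yet with $o(1/\alpha)$ samples you only ever see $(0,0)$, which is consistent with a product distribution. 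So no $O(1)$-sample tester can reject, while your bound would promise $O(\sqrt{4}/(1/4))=O(1)$. The fix is to go the other way: $I=D_{\mathrm{KL}}\geq 2d_{\mathrm H}^2$, giving $d_{\mathrm H}^2(P_T,\prod P_{T,i})\geq\eps^2/n$, and then you need an \emph{independence tester in Hellinger distance} over $\{0,1\}^{d+1}$ with $O(2^{d/2}/\eps'^2)$ samples. No such tester was previously known; the paper builds one (its Lemma~\ref{lemma:hellinger_independence_tester}) by combining a $\chi^2$-proper-learner for products with the $\chi^2$-vs-Hellinger identity tester of Daskalakis--Kamath--Wright. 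This is the step you are missing, and it is one of the paper's main contributions.

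\textbf{Lower bound.} Your fixed-partition-into-blocks construction only yields $\Omega(2^{d/2}\sqrt{n}/\eps^2)$, not $\Omega(2^{d/2}n/\eps^2)$. With $m=n/(d+1)$ independent Paninski blocks on $[2^{d+1}]$ at per-block parameter $\gamma=\eps/\sqrt{m}$, the tensorized Ingster bound gives $\mathbb{E}[(1+\tau)^N]\leq\exp(mN^2\gamma^4/2^{d+1})$, which is $1+o(1)$ only for $N=o(\sqrt{2^{d+1}m}/\eps^2)=o(2^{d/2}\sqrt{n}/\eps^2)$. The paper explicitly identifies this as the rate one gets when the \emph{structure is known}; the extra $\sqrt{n}$ comes from hiding the structure. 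Their construction reserves $d-1$ bits as a pointer to one of $2^{d-1}$ tree Bayes nets on the remaining $N=n-d+1$ nodes, where the \emph{random matching} $\lambda$ on those $N$ nodes is common to all $2^{d-1}$ components but the $\pm\delta$ edge correlations are chosen independently per component. The shared matching keeps the in-degree at $d$ (independent matchings would blow it up to $\Omega(2^d)$), but it destroys independence across components in the Ingster calculation, and controlling the resulting cross-terms requires new machinery on MGFs of squared and truncated Binomials (the paper's Section~\ref{ssec:tools:mgf}). One must then also show the mixture is far from \emph{every} product distribution, not just uniform, which takes a separate argument. None of this is captured by your block construction.
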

In the course of proving this theorem, we additionally derive several technical results that are of independent interest, such as bounds on the moment generating function of squared binomials and an independence testing algorithm for arbitrary distributions in Hellinger distance.

Our work explicitly initiates the study of {\em testing graphical structure} in the context of property testing. That is, instead of testing a statistical property of a distribution (as is typical in distribution property testing), we can interpret our problem as that of testing a graphical property of the underlying graph that describes the distribution. \new{In particular, testing independence can be viewed as testing maximum degree-0 of a distribution's graph.} This point of view opens the door to testing many other relevant graphical properties of graphical models, e.g., maximum degree $k$, being a forest, being connected, etc. \new{Hence, by analyzing the ``base case'' of independence testing, we provide a necessary first step to solving other graphical testing problems.} Information-theoretic bounds for related problems were studied recently by \cite{neykov2019combinatorial}.

\subsection{Related Work}
Distribution testing has been an active and rapidly progressing research program for the last 20+ years; see \cite{rubinfeld2012taming} and \cite{canonne2020survey} for surveys. One of the earliest works in this history was that of \cite{batu2001testing} who studied testing independence of two random variables. There followed a series of papers \citep{alon2007testing, levi2013testing, AcharyaDK15}, strengthening and generalizing bounds for this problem, culminating in the work of \cite{diakonikolas2016new} who gave tight bounds for testing independence of distributions over $[n_1]\times \cdots \times [n_d]$. \citet{HaoL20} recently considered the (harder) problem of estimating the distance to the closest product distribution (i.e., \emph{tolerant} testing), showing this task could, too, be performed with a sublinear sample complexity.

Though most of the focus has been on testing properties of arbitrary input distributions, it has long been recognized that distributional restrictions are needed to obtain sample complexity improvements. For example, \cite{rubinfeld2009testing, adamaszek2010testing} studied testing uniformity of monotone distributions on the hypercube. Similarly, \cite{daskalakis2012learning} considered the problem of testing monotonicity of $k$-modal distributions. The question of independence testing of structured high-dimensional distributions was considered in~\citet{DaskalakisD019} in the context of Ising models. We note that while their work is in the same spirit as ours, Ising models and Bayes nets are incomparable modeling assumptions, and their results (and techniques) and ours \new{are mostly disjoint. Further, while the results may overlap in some special cases, the conversion between parameterizations makes them difficult to compare even in these cases (e.g., dependence on the maximum edge value parameter $\beta$ for Ising models, and max-undirected-degree vs. max-in-degree).} More recently, \cite{CanonneDKS20, daskalakis2016square, BhattacharyyaGMV20, bhattacharyya2021testing} have studied identity testing and closeness testing for distributions that are structured as degree-$d$ Bayes nets. Our work here continues this research direction in the context of independence testing.

\subsection{Our techniques}
\noindent\textit{Upper bound.} The starting point of our upper bound is the (standard) observation that a distribution $P$ over $\{0,1\}^n$ is far from being a product if, and only if, it is far from the product of its marginals. By itself, this would not lead to any savings over the trivial exponential sample complexity. However, we can combine this with a localization result due to~\cite{daskalakis2016square}, which then guarantees that if the degree-$d$ Bayes net $P$ is at \emph{Hellinger} distance $\eps$ from the product of its marginals $P'$, then there exists \emph{some} vertex $i\in[n]$ such that $P_{i,\Pi_i}$ (the marginalization of $P$ onto the set of nodes consisting of $i$ and its $d$ parents) is at Hellinger distance at least $\frac{\eps}{\sqrt{n}}$ from $P'_{i,\Pi_i}$. 
These two facts, combined, seem to provide exactly what is needed: indeed, given access to samples from $P$ and any fixed set of $d+1$ vertices $S$, one can simulate easily samples from both $P_S$ and $P'_S$ (for the second, using $d+1$ samples from $P$ to generate one from $P'_S$, as $P'_S$ is the product of marginals of $P_S$). A natural idea is then to iterate over all $\binom{n}{d+1}$ possible subsets $S$ of $d+1$ variables and check whether $P_S=P'_S$ for each of them using a closeness testing algorithm for arbitrary distributions over $\{0,1\}^{d+1}$: the overhead due to a union bound and the sampling process for $P_S,P'_S$ adds a factor $O(d\cdot \log\binom{n}{d+1}) = O(d^2\log n)$ to the closeness testing procedure. However, since testing closeness over $\{0,1\}^{d+1}$ to total variation distance $\eps'$ has sample complexity $O(2^{2d/3}/\eps'^2)$ and, by the quadratic relation between Hellinger and total variation distances, we need to take $\eps' = \frac{\eps^2}{n}$, we would then expect the overall test to result in a $\tilde{O}(2^{2d/3}n^2/\eps^4)$ sample complexity~--~much more than what we set out for.

\new{A first natural idea to improve upon this is to use the refined
identity testing result of~\citet[Theorem 4.2]{daskalakis2016square} for Bayes nets,
which avoids the back and forth between Hellinger and total variation
distance and thus saves on this quadratic blowup. Doing so, we could in the last step keep $\eps' =\frac{\eps}{\sqrt{n}}$ (saving on this quadratic blowup), and pay only overall $\tilde{O}(2^{3d/4}/\eps'^2)=\tilde{O}(2^{3d/4}n/\eps^2)$. This is better, but still falls short of our original goal.}

The second idea is to forego closeness testing in the last step entirely, and instead use directly an \emph{independence} testing algorithm for arbitrary distributions over $\{0,1\}^{d+1}$, to test if $P_S$ is indeed a product distribution for every choice of $S$ considered. Unfortunately, while promising, this idea suffers from a similar drawback as our very first attempt: namely, the known independence testing algorithms are all designed for testing in total variation distance (not Hellinger)! Thus, even using an optimal TV testing algorithm for independence~\citep{AcharyaDK15,diakonikolas2016new} would still lead to this quadratic loss in the distance parameter $\eps'$, and a resulting $\tilde{O}(2^{d/2}n^2/\eps^4)$ sample complexity.

To combine the best of our last two approaches and achieve the claimed $\tilde{O}(2^{d/2}n/\eps^2)$, we combine the two insights and perform, for each set $S$ of $d+1$ variables, independence testing on $P_S$ in Hellinger distance. In order to do so, however, we first need to design a testing algorithm for this task, as none was previously available in the literature. Fortunately for us, we are able to design such a testing algorithm (Lemma~\ref{lemma:hellinger_independence_tester}) achieving the desired~--~and optimal~--~sample complexity. Combining this Hellinger independence testing algorithm over $\{0,1\}^{d+1}$ with the above outline finally leads to the $\tilde{O}(2^{d/2}n/\eps^2)$ upper bound of Theorem~\ref{theo:main:informal}.\medskip

\noindent\textit{Lower bound.} To obtain our $\Omega(2^{d/2}n/\eps^2)$ lower bound on testing independence of a degree-$d$ Bayes net, we start with the construction introduced by \cite{CanonneDKS20} to prove an $\Omega(n/\eps^2)$ sample complexity lower bound on testing \emph{uniformity} of degree-1 Bayes nets. At a high level, this construction relies on picking uniformly at random a perfect matching $M$ of the $n$ vertices, which defines the structure of the Bayes net; and, for each of the $n/2$ resulting edges, picking either a positive or negative correlation (with value $\pm \eps/\sqrt{n}$) between the two vertices, again uniformly at random. One can check relatively easily that every Bayes net $P_{\lambda}$ obtained this way, where $\lambda$ encodes the matching $M$ and the $2^{n/2}$ signs, is (1)~a degree-$1$ Bayes net, (2)~at total variation $\eps$ from the uniform distribution $U$. The bulk of their analysis then lies in showing that (3)~$\Omega(n/\eps^2)$ samples are necessary to distinguish between $U$ and such a randomly chosen $P_\lambda$. Generalizing this lower bound construction and analysis to independence testing (not just uniformity), and to degree-$d$ (and not just degree-$1$) Bayes nets turns out to be highly non-trivial, and is our main technical contribution.

Indeed, in view of the simpler and different $\Omega(2^{d/2}\sqrt{n}/\eps^2)$ sample complexity lower bound for uniformity testing degree-$d$ Bayes nets obtained in~\cite{CanonneDKS20} \emph{when the structure of the Bayes net is known},\footnote{We note that generalizing this (weaker, in view of the dependence on $n$) $\Omega(2^{d/2} \sqrt{n}/\eps^2)$ sample complexity lower bound to our setting is relatively simple, and we do so in Appendix~\ref{section:lb_2} (specifically, Theorem~\ref{theorem:lb_2}).} one is tempted to adapt the same idea to the matching construction: that is, reserve $d-1$ out of the $n$ vertices to ``encode'' a pointer towards one of $2^{d-1}$ independently chosen $P_\lambda$'s as above (i.e., the hard instances are now uniform mixtures over $2^{d-1}$ independently generated degree-$1$ hard instances).
The degree-$1$ hard instances lead in previous work to a tighter dependence on $n$ (linear instead of $\sqrt{n}$) because their Bayesian structure is \emph{unknown}.
Thus, by looking at the mixture of these degree-$1$ Bayes nets, one could hope to extend the analysis of that second lower bound from~\cite{CanonneDKS20} and get the desired  $\Omega(2^{d/2}n/\eps^2)$ lower bound.
Unfortunately, there is a major issue with this idea: namely, if the $2^{d-1}$ matchings are chosen independently, then the resulting overall distribution is unlikely to be a degree-$d$ Bayes net~--~instead, each vertex will have expected degree $\Omega(2^d)$! (This was not an issue in the corresponding lower bound of~\cite{CanonneDKS20}, since for them each of the $2^d$ components of the mixture was a degree-$0$ Bayes net, i.e., a product distribution; so degrees could not ``add up'' across the components).

To circumvent this, we instead choose the matching $M$ to be common to all $2^{d-1}$ components of the mixture and only pick the sign of their $n/2$ correlations independently; thus ensuring that every node in the resulting distribution $P$ has degree $d$. This comes at a price, however: the analysis of the $\Omega(2^{d/2}\sqrt{n}/\eps^2)$ lower bound from~\cite{CanonneDKS20} crucially relied on independence across those components, and thus can no longer be extended to our case (where the $2^{d-1}$ distributions $P_\lambda$ share the same matching $M$, and thus we only have independence across components conditioned on $M$). Handling this requires entirely new ideas, and constitutes the core of our lower bound. In particular, from a technical point of view this requires us to handle the moment-generating-function of squares of Binomials (Lemma~\ref{lemma:MGF_Binomial_square_bound}), as well as that of (squares of) truncated Binomials (Lemma~\ref{lemma:MGF_Binomial_square_bound_with_min}). To do so, we develop in Section~\ref{ssec:tools:mgf} a range of results on Binomials and Multinomial distributions which we believe are of independent interest.

Finally, after establishing that $\Omega(2^{d/2}{n}/\eps^2)$ samples are necessary to distinguish the resulting ``mixture of trees'' $P$ from the uniform distribution $U$ (Lemma~\ref{theorem:lower_bound_uniformity_testing_on_Bayes_Net}), it remains to show that this implies our stronger statement on testing \emph{independence} (not just uniformity). To do so, we need to show that $P$ is not only far from $U$, but from \emph{every} product distribution: doing so is itself far from immediate, and is established in Lemma~\ref{lemma;distance:mixture:of:trees} by relating the distance from the mixture $P$ to every product distribution to the distance between distinct components of the mixture (Lemma~\ref{lemma:technical:TV_lower_bound_of_its_marginals}), and lower bounding those directly by analyzing the concentration properties of each component $P_\lambda$ of our construction (Lemma~\ref{lemma:technical_distance_lb_2xbinomials_conditional_tree}).

\section{Preliminaries}
We use the standard asymptotic notation $O(\cdot)$, $\Omega(\cdot)$ $\Theta(\cdot)$, and write $\tilde{O}(\cdot)$ to omit polylogarithmic factors in the argument. Throughout, we identify probability distributions over discrete sets with their probability mass functions (pmf), and further denote by $U$ (resp., $U_d$) the uniform distribution on $\{0, 1\}^n$ (resp., $\{0, 1\}^d$). We also write $P^{\otimes m}$
for the $m$-fold product of a distribution $P$, that is, $P \otimes \cdots \otimes P$ (the distribution of a tuple of $m$ i.i.d.\ samples from $P$); and $[n]$ for the set $\{1,\dots,n\}$.\smallskip

\noindent\textbf{Bayesian networks.} Given a directed acyclic graph (DAG) over $n$ nodes, a probability distribution $P$ over $\{0,1\}^n$ is said to be \emph{Markov with respect to $G$} if $P$ factorizes according to $G$; we will also say that $P$ has \emph{structure} $G$. A DAG $G$ is said to have \new{in-}degree $d$, if every node has at most $d$ parents\new{; for convenience, we use \emph{degree-$d$} exclusively as ``in-degree $d$'' throughout the paper}; we will denote by $\Pi_i\subseteq [n]$ the set of parents of a node $i$. Finally, a distribution $P$ over $\{0,1\}^n$ is a \emph{degree-$d$ Bayes net} if $P$ is Markov w.r.t. some degree-$d$ DAG.\smallskip

\noindent\textbf{Distances between distributions.} Given two distributions $P,Q$ over the same (discrete) domain $\mathcal{X}$, the \emph{total variation distance} (TV) between $P$ and $Q$ is defined as
\begin{equation}
d_{\tmop{TV}}(P,Q) 
= \sup_{S\subseteq \mathcal{X}}(P(S)-Q(S)) 
= \frac{1}{2}  \sum_{x\in\mathcal{X}}^n | P(x)-Q(x) | \in [0,1]\,.
\end{equation}
While TV distance will be our main focus, we will also rely in our proofs on two other notions of distance between distributions: the \emph{Hellinger distance}, given by 
$
d_{\rm{}H} (P,Q) 
= \frac{1}{\sqrt{2}}\|\sqrt{P}-\sqrt{Q}\|_2
$, and the \emph{chi-squared divergence}, defined by 
\smash{$
d_{\chi^2}(P,Q) = \sum_{x\in\mathcal{X}} {(P(x)-Q(x))^2}/{Q(x)}
$}. TV distance, squared Hellinger distance, and $\chi^2$ divergence are all instances of $f$-divergences, and as such satisfy the data processing inequality; further, they are related by the following sequence of inequalities:
\begin{equation}
    \label{relation:distances}
    d_{\rm{}H}^2 (P,Q) \leq d_{\tmop{TV}} (P,Q) \leq \sqrt{2} d_{\rm{}H} (P,Q)
   \leqslant \sqrt{d_{\chi^2} (P,Q)} 
\end{equation}
\noindent\textbf{Tools from previous work.} 
We finally state results from the literature which we will rely upon.
\begin{corollary}[{\citet[Corollary 2.4]{daskalakis2016square}}]
\label{corollary:squared_hellinger_subadditivity}    
    Suppose $P$ and $Q$ are distributions on $\Sigma^n$ with common factorization structure
	\[ P (x) = P_{X_1} (x_1)  \prod_{i = 2}^n P_{X_i \mid X_{\Pi_i}} (x_i
	|x_{\Pi_i}), 
	\qquad  Q (x) = Q_{X_1} (x_1)  \prod_{i = 2}^n Q_{X_i \mid X_{\Pi_i}} (x_i
	|x_{\Pi_i}), \qquad x\in\Sigma^n
	\]
	where we assume the nodes are topologically ordered, and $\Pi_i$ is the
	set of parents of $i$. Then
	\[ d_{\rm{}H}^2 (P, Q) \leq d_{\rm{}H}^2 (P_{X_1}, Q_{X_1}) + d_{\rm{}H}^2 \mleft( P_{X_2,
		X_{\Pi_2}}, Q_{X_2, X_{\Pi_2}} \mright) + \cdots + d_{\rm{}H}^2 \mleft( P_{X_n,
		X_{\Pi_n}}, Q_{X_n, X_{\Pi_n}} \mright) . \]
	In particular, if $d_{\rm{}H}^2 (P, Q) \geq \eps$ then there exists some $i$ such that
	$ d_{\rm{}H}^2 \mleft( P_{X_i, X_{\Pi_i}}, Q_{X_i, X_{\Pi_i}} \mright) \geq
	\frac{\eps}{n} .
	$
\end{corollary}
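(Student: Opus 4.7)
The plan is to prove the inequality by induction on $n$, peeling off one topologically-last node at a time, together with a two-term ``chain rule'' that separates the contribution of that node. Let $X_n$ be a topologically last node of the common DAG $G$; being last it has no children, so $X_n$ is a sink and the marginals $P_{[n-1]} \eqdef P_{X_1,\ldots,X_{n-1}}$ and $Q_{[n-1]}$ still factor according to the induced DAG on $[n-1]$ with the same conditional factors. I would therefore apply the induction hypothesis to these marginals and aim for the two-term chain rule
\begin{equation*}
d_{\rm{}H}^2(P,Q) \;\leq\; d_{\rm{}H}^2(P_{[n-1]}, Q_{[n-1]}) \;+\; d_{\rm{}H}^2\mleft(P_{X_n,X_{\Pi_n}}, Q_{X_n,X_{\Pi_n}}\mright).
\end{equation*}
Unrolling this bound gives the stated subadditivity, and the ``in particular'' clause is then immediate from pigeonhole: if every term on the right were strictly less than $\eps/n$, the sum would fall below $\eps$, contradicting $d_{\rm{}H}^2(P,Q) \geq \eps$.

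To establish the chain rule I would work with the Bhattacharyya coefficient $\rho(P,Q) \eqdef \sum_x \sqrt{P(x)Q(x)} = 1 - d_{\rm{}H}^2(P,Q)$. Writing $P(x) = P_{[n-1]}(x_{[n-1]})\, P_{X_n\mid X_{\Pi_n}}(x_n \mid x_{\Pi_n})$ and similarly for $Q$, and performing the $x_n$-sum first, one obtains
\begin{equation*}
\rho(P,Q) \;=\; \sum_{x_{[n-1]}} \sqrt{P_{[n-1]}(x_{[n-1]})\, Q_{[n-1]}(x_{[n-1]})}\;\rho\mleft(P_{X_n\mid x_{\Pi_n}}, Q_{X_n\mid x_{\Pi_n}}\mright).
\end{equation*}
Subtracting this identity from the analogous expansion of $\rho(P_{[n-1]}, Q_{[n-1]})$, and using $\rho = 1 - d_{\rm{}H}^2$, rewrites $d_{\rm{}H}^2(P,Q) - d_{\rm{}H}^2(P_{[n-1]}, Q_{[n-1]})$ as a single nonnegative sum whose summand is $\sqrt{P_{[n-1]}(x_{[n-1]}) Q_{[n-1]}(x_{[n-1]})}\cdot d_{\rm{}H}^2(P_{X_n \mid x_{\Pi_n}}, Q_{X_n \mid x_{\Pi_n}})$.

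Since the second factor depends only on $x_{\Pi_n}$, I would marginalize out the remaining coordinates using Cauchy--Schwarz, $\sum_{x_{[n-1]\setminus\Pi_n}} \sqrt{P_{[n-1]} Q_{[n-1]}} \leq \sqrt{P_{X_{\Pi_n}}(x_{\Pi_n})\, Q_{X_{\Pi_n}}(x_{\Pi_n})}$, and then re-identify the resulting expression as precisely the analogous expansion of $d_{\rm{}H}^2(P_{X_n, X_{\Pi_n}}, Q_{X_n, X_{\Pi_n}}) - d_{\rm{}H}^2(P_{X_{\Pi_n}}, Q_{X_{\Pi_n}})$, which is in turn upper bounded by $d_{\rm{}H}^2(P_{X_n, X_{\Pi_n}}, Q_{X_n, X_{\Pi_n}})$. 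Combining with the induction hypothesis on $d_{\rm{}H}^2(P_{[n-1]}, Q_{[n-1]})$ closes the induction.

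The main obstacle I anticipate is the bookkeeping around conditional Hellinger distances: unlike the KL chain rule (an equality with expectation under $P$), the Hellinger chain rule naturally involves the mixed ``measure'' $\sqrt{P Q}$, so one must be careful about which base measure weights the conditional squared distances. It is exactly this $\sqrt{PQ}$ weighting, combined with the Cauchy--Schwarz collapse of the non-parent coordinates into a square-root of a product of $\Pi_n$-marginals, that allows the two-term chain rule~--~and hence the full subadditivity~--~without picking up a spurious factor of $n$.
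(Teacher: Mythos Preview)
The paper does not give its own proof of this statement; it is quoted under ``Tools from previous work'' as Corollary~2.4 of \citet{daskalakis2016square}, so there is nothing to compare against directly. That said, your proposal is correct and is essentially the standard argument: the Bhattacharyya-coefficient identity $\rho(P,Q)=\sum_{x_{[n-1]}}\sqrt{P_{[n-1]}Q_{[n-1]}}\cdot \rho(P_{X_n\mid x_{\Pi_n}},Q_{X_n\mid x_{\Pi_n}})$ is exact, the Cauchy--Schwarz collapse to $\sqrt{P_{X_{\Pi_n}}Q_{X_{\Pi_n}}}$ is the right step, and the re-identification with $d_{\rm H}^2(P_{X_n,X_{\Pi_n}},Q_{X_n,X_{\Pi_n}})-d_{\rm H}^2(P_{X_{\Pi_n}},Q_{X_{\Pi_n}})$ is an equality (by the same identity applied to the smaller pair), so the chain rule and hence the induction go through cleanly.
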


\begin{lemma}[{\citet[Lemma~4]{AcharyaDK15}}] \label{lemma:product_learning_chi_square} There exists an efficient algorithm which, given 
	samples from a distribution $P$ over $[n_1]\times\cdots\times[n_d]$, outputs a product distribution $Q$ over $[n_1]\times\cdots\times[n_d]$ such that, if $P$ is a product distribution, then
	$
	    d_{\chi^2} (p, q) \leq \eps^2 . 
	$
	with probability at least 5/6. The algorithm uses 
	$
	O( (\sum_{\ell = 1}^d n_\ell)/{\eps^2} )
	$
	samples from $P$.
\end{lemma}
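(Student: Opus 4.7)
The plan is to output the product $Q \eqdef Q_1 \otimes \cdots \otimes Q_d$ of additively smoothed empirical marginals. Specifically, with $m = C \sum_{\ell=1}^d n_\ell/\eps^2$ samples from $P$ (for a sufficiently large absolute constant $C$), I project each sample onto every coordinate and set
\begin{equation*}
Q_\ell(j) \eqdef \frac{N_{\ell,j} + 1}{m + n_\ell}, \qquad \ell \in [d],\ j \in [n_\ell],
\end{equation*}
where $N_{\ell,j}$ is the number of samples whose $\ell$-th coordinate equals $j$. The add-one smoothing ensures $Q_\ell(j) \geq 1/(m + n_\ell)$ everywhere, so that $d_{\chi^2}(P_\ell, Q_\ell)$ (and hence $d_{\chi^2}(P, Q)$) is always finite.

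The core identity driving the analysis is the product rule for chi-squared divergence: since $1 + d_{\chi^2}(P,Q) = \sum_x P(x)^2/Q(x)$, whenever both $P$ and $Q$ factor as products of marginals one has
\begin{equation*}
1 + d_{\chi^2}(P, Q) = \prod_{\ell=1}^d \Paren{1 + d_{\chi^2}(P_\ell, Q_\ell)}.
\end{equation*}
When $P$ is genuinely a product distribution, the coordinate-wise projections of the samples are mutually independent across $\ell$, so the random variables $d_{\chi^2}(P_\ell, Q_\ell)$ are independent. Taking expectations therefore yields
\begin{equation*}
\E\mleft[1 + d_{\chi^2}(P, Q)\mright] = \prod_{\ell=1}^d \Paren{1 + \E\mleft[d_{\chi^2}(P_\ell, Q_\ell)\mright]}.
\end{equation*}

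The technical heart, and the step I expect to be the main obstacle, is proving the per-coordinate bound $\E[d_{\chi^2}(P_\ell, Q_\ell)] \leq c\, n_\ell/m$ for some absolute constant $c$. Writing this expectation as $\sum_{j=1}^{n_\ell} \E[(P_\ell(j) - Q_\ell(j))^2/Q_\ell(j)]$ and using $N_{\ell,j} \sim \mathrm{Binomial}(m, P_\ell(j))$, the naive approach is stymied because $Q_\ell(j)$ sits in the denominator and can be as small as $1/(m+n_\ell)$ when $N_{\ell,j}=0$. However, the event $\{N_{\ell,j}=0\}$ happens with probability $(1-P_\ell(j))^m$, decaying in $mP_\ell(j)$, while in the ``typical'' regime $N_{\ell,j}$ concentrates around $mP_\ell(j)$ so that $Q_\ell(j) \approx P_\ell(j)$ and the ratio is well-behaved. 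Splitting the expectation by whether $N_{\ell,j}$ is below or above, say, $mP_\ell(j)/2$ and bounding each piece separately using standard Binomial moment estimates should deliver the desired $O(n_\ell/m)$ bound.

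Combining the ingredients, provided $C$ is large enough,
\begin{equation*}
\E\mleft[1 + d_{\chi^2}(P, Q)\mright] \leq \prod_{\ell=1}^d \Paren{1 + \frac{c\, n_\ell}{m}} \leq \exp\Paren{c \sum_{\ell=1}^d n_\ell/m} \leq 1 + \eps^2/6.
\end{equation*}
Markov's inequality applied to the nonnegative random variable $d_{\chi^2}(P, Q)$ then gives $d_{\chi^2}(P, Q) \leq \eps^2$ with probability at least $5/6$, as required.
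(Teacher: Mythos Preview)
The paper does not prove this lemma; it is stated in the preliminaries as a tool imported from \citet[Lemma~4]{AcharyaDK15}. Your proposal is correct and is essentially the argument given in that reference: add-one smoothed empirical marginals, the tensorization identity $1+d_{\chi^2}(P,Q)=\prod_\ell(1+d_{\chi^2}(P_\ell,Q_\ell))$ for products, independence of the per-coordinate divergences when $P$ is a product, the per-coordinate expectation bound $\E[d_{\chi^2}(P_\ell,Q_\ell)]=O(n_\ell/m)$, and Markov to finish. The one step you flag as the ``main obstacle''---the $O(n_\ell/m)$ bound for the Laplace estimator---is indeed the only place requiring real work, and your suggested case split on $N_{\ell,j}$ is along the right lines (the cited paper handles it via a direct computation of $\E[(p_j-Q_\ell(j))^2/Q_\ell(j)]$ using Binomial moments and the lower bound $Q_\ell(j)\geq 1/(m+n_\ell)$).
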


\begin{theorem}[{\citet[Theorem~1]{daskalakis2018distribution}}]
	\label{theorem:identity testing_chi_square_Hellinger} There exists an efficient algorithm which, given samples from a distribution $P$ and a known reference distribution $Q$ over $[n]$, as well as a distance parameter $\eps\in(0,1]$, distinguishes between the cases
	(i)~$d_{\chi^2} (P,Q) \leq \eps^2$ 
	and (ii)~$d_{\rm{}H} (P,Q) \geq \eps$ 
	with probability at least $5/6$. 
	The algorithm uses $O \mleft( {n^{1 / 2}}/{\eps^2} \mright)$
	samples from $P$.
\end{theorem}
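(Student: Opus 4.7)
The plan is to combine the chi-squared identity tester of Acharya--Daskalakis--Kamath with a pointwise inequality that converts the Hellinger guarantee in case (ii) into a chi-squared guarantee. The key observation is the pointwise identity
\begin{equation*}
\frac{(p_i-q_i)^2}{q_i} = (\sqrt{p_i}-\sqrt{q_i})^2 \cdot \frac{(\sqrt{p_i}+\sqrt{q_i})^2}{q_i} \geq (\sqrt{p_i}-\sqrt{q_i})^2,
\end{equation*}
which holds for every $i$ since $\sqrt{p_i}+\sqrt{q_i}\geq\sqrt{q_i}$. Summing over $i$ gives $d_{\chi^2}(P,Q) \geq 2\,d_{\rm H}^2(P,Q)$, so in the far case $d_{\rm H}(P,Q) \geq \eps$ we automatically obtain $d_{\chi^2}(P,Q) \geq 2\eps^2$.

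Given this, it suffices to distinguish $d_{\chi^2}(P,Q) \leq \eps^2$ from $d_{\chi^2}(P,Q) \geq 2\eps^2$. For this, I would draw $m = \Theta(\sqrt{n}/\eps^2)$ i.i.d.\ samples from $P$, let $N_i$ denote the count of element $i$, and form the standard Valiant--Valiant-style statistic
\begin{equation*}
Z = \sum_{i=1}^n \frac{(N_i - m q_i)^2 - N_i}{m q_i}\,,
\end{equation*}
for which a direct moment computation (using $\mathrm{Var}(N_i) = mp_i(1-p_i)$) yields $\E[Z] = (m-1)\,d_{\chi^2}(P,Q) - 1$. The tester accepts iff $Z \leq \tfrac{3}{2}(m-1)\eps^2 - 1$, so that the expectation gap between the two cases is $\Omega(m\eps^2)$, and then Chebyshev's inequality needs to close this gap.

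The main obstacle will be bounding $\mathrm{Var}[Z]$ uniformly across both cases: naively, the contributions from elements where $p_i \gg q_i$ can blow the variance up well beyond what the gap can absorb (the term $\sum_i p_i^2/q_i^2$ appearing in the calculation can be as large as $n^2$ in the worst case). I would handle this exactly as in Acharya--Daskalakis--Kamath: a flattening preprocessing of $Q$ into $O(n)$ buckets of nearly equal $Q$-mass, so that after refinement each bin has $q_i = \Theta(1/n)$. Chi-squared divergence behaves well under such a refinement (the chi-squared to $P$ is preserved on one side and only decreases on the other by data processing, which is enough since the gap goes through), and the flattened statistic has $\mathrm{Var}[Z] = O(n)$, small compared to the $\Omega(m\eps^2) = \Omega(\sqrt{n})$ gap. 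Applying Chebyshev then separates the two cases with probability $5/6$ using $m = \Theta(\sqrt{n}/\eps^2)$ samples, as claimed.
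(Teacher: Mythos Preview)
The paper does not prove this statement: it is quoted verbatim from \citet{daskalakis2018distribution} in the ``Tools from previous work'' part of the preliminaries and then invoked as a black box inside the proof of Lemma~\ref{lemma:hellinger_independence_tester}. There is therefore no in-paper argument to compare against.

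As for your sketch, the overall strategy is sound and is essentially what the cited reference does: the pointwise inequality gives $d_{\chi^2}(P,Q)\geq 2\,d_{\rm H}^2(P,Q)$, so case~(ii) becomes $d_{\chi^2}\geq 2\eps^2$, and one then analyzes the flattened chi-square statistic. The gap is in your last step. The claim ``the flattened statistic has $\mathrm{Var}[Z]=O(n)$'' is not correct in general: flattening forces all $q_i=\Theta(1/n)$ but says nothing about the $p_i$, and the variance of $Z$ (under Poissonization, say) contains the terms $\sum_i p_i^2/q_i^2$ and $m\sum_i p_i(p_i-q_i)^2/q_i^2$, both of which can be $\gg n$ in case~(ii) when $P$ is concentrated. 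What the actual analysis does is bound $\mathrm{Var}[Z]$ by an expression that also involves $\E[Z]$ --- schematically $O(n)+O(\sqrt{n})\cdot\E[Z]+o\big((\E[Z])^2\big)$ --- so that whenever the variance blows up, the mean has blown up faster, and Chebyshev still separates $\E[Z]\leq m\eps^2$ from $\E[Z]\geq 2m\eps^2$. Your write-up stops exactly where the real technical content lies; filling it in is not difficult, but it is not the one-liner you suggest.
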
   

\begin{lemma}[{\citet[Proposition~1]{batu2001testing}}]\label{lemma:technical_approximated_lb_product_distributions}
  Let $P$ be a discrete distribution supported on $[n]
  \times [m]$, with marginals $P_1$ and $P_2$. Then, we have
  $
    \min_{Q_1, Q_2} d_{\tmop{TV}} (P, Q_1 \otimes Q_2) \geq
    \frac{1}{3} d_{\tmop{TV}} (P, P_1 \otimes P_2)\,,
  $
  where the minimum is taken over all distributions $Q_1$ over $[n]$ and $Q_2$ over $[m]$.
\end{lemma}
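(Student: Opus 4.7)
The plan is to prove the contrapositive-style bound directly via triangle inequality, subadditivity of total variation under products, and the data-processing inequality.

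First I would let $(Q_1^*, Q_2^*)$ be a pair of distributions on $[n]$ and $[m]$ achieving the infimum on the left-hand side (existence follows from compactness of the simplex and continuity of TV distance, or one can just work with a near-optimal pair and take limits). Write $\alpha \eqdef d_{\tmop{TV}}(P, Q_1^* \otimes Q_2^*)$, so the goal is to show $d_{\tmop{TV}}(P, P_1 \otimes P_2) \leq 3\alpha$.

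The triangle inequality gives
\[
d_{\tmop{TV}}(P, P_1 \otimes P_2) \leq d_{\tmop{TV}}(P, Q_1^* \otimes Q_2^*) + d_{\tmop{TV}}(Q_1^* \otimes Q_2^*, P_1 \otimes P_2) = \alpha + d_{\tmop{TV}}(Q_1^* \otimes Q_2^*, P_1 \otimes P_2),
\]
so it suffices to bound the second term by $2\alpha$. For this I would invoke the standard subadditivity of TV distance under tensor products, namely $d_{\tmop{TV}}(Q_1^* \otimes Q_2^*, P_1 \otimes P_2) \leq d_{\tmop{TV}}(Q_1^*, P_1) + d_{\tmop{TV}}(Q_2^*, P_2)$, which follows by writing $Q_1^*(x)Q_2^*(y) - P_1(x)P_2(y)$ as a telescoping sum $Q_1^*(x)(Q_2^*(y)-P_2(y)) + (Q_1^*(x)-P_1(x))P_2(y)$ and using the triangle inequality inside the $\ell_1$-norm.

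The final ingredient is the observation that each of the two marginal distances is itself at most $\alpha$. Indeed, the marginal of $P$ on the first coordinate is $P_1$ by definition, while the marginal of $Q_1^* \otimes Q_2^*$ on the first coordinate is $Q_1^*$; since marginalization is a deterministic map, the data-processing inequality (which TV distance satisfies as an $f$-divergence) yields $d_{\tmop{TV}}(P_1, Q_1^*) \leq d_{\tmop{TV}}(P, Q_1^* \otimes Q_2^*) = \alpha$, and symmetrically $d_{\tmop{TV}}(P_2, Q_2^*) \leq \alpha$. Combining the three displayed bounds gives $d_{\tmop{TV}}(P, P_1 \otimes P_2) \leq \alpha + \alpha + \alpha = 3\alpha$, as desired.

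There is no real obstacle here; the statement is essentially a short triangle-inequality calculation and the only thing to be careful about is invoking the data-processing inequality with the correct coupling (the projection to the first or second coordinate), together with the product-form subadditivity. The constant $3$ emerges naturally as the sum of the one triangle-inequality step plus the two marginal-projection steps.
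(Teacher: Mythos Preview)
Your argument is correct: the triangle inequality, subadditivity of TV under products, and data processing for the coordinate projections combine exactly as you describe to give the factor $3$. The paper does not provide its own proof of this lemma; it simply cites it as \citet[Proposition~1]{batu2001testing}, so there is nothing to compare against beyond noting that your proof matches the standard derivation.
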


%
%
%
%
%
%

\section{Upper Bound for Independence Testing}
In this section, we establish the upper bound part of Theorem~\ref{theo:main:informal}, stated below:
\begin{theorem}
  \label{theorem:independence_testing_ub}
  There exists an algorithm (Algorithm~\ref{alg:indep}) with the following guarantees: given a parameter $\eps\in(0,1]$ and sample access to an unknown degree-$d$ Bayes net $P$, the algorithm takes 
  $O(d^2 2^{d / 2} n \log(n)  / \eps^2)$
  samples from $P$, and distinguishes with probability at least $2/3$ between the cases (i)~$P$ is a product distribution, and (ii)~$P$ is at total variation distance at least $\eps$ from every product distribution.
\end{theorem}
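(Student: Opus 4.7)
The plan is to reduce testing independence of $P$ globally to testing independence of its $(d+1)$-dimensional marginals $P_S$ on small subsets $S \subseteq [n]$, then union-bound over all candidate subsets. The bridge between the two scales is the Hellinger subadditivity result, Corollary~\ref{corollary:squared_hellinger_subadditivity}, which gives a clean per-vertex decomposition for two distributions sharing a common Bayes net structure.

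For soundness, suppose $P$ is $\eps$-far in total variation from every product distribution, and let $P'$ denote the product of $P$'s own marginals. By the natural $n$-dimensional generalization of Lemma~\ref{lemma:technical_approximated_lb_product_distributions}, we still have $d_{\tmop{TV}}(P, P') = \Omega(\eps)$, and therefore $d_{\rm H}^2(P, P') = \Omega(\eps^2)$ via the relations in~\eqref{relation:distances}. Since $P'$ trivially factors according to $P$'s DAG $G$ (any product does), Corollary~\ref{corollary:squared_hellinger_subadditivity} applies to the pair $(P, P')$ and produces a node $i$ for which $d_{\rm H}^2(P_{X_i, X_{\Pi_i}}, P'_{X_i, X_{\Pi_i}}) = \Omega(\eps^2/n)$. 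Observing that $P'_{X_i, X_{\Pi_i}}$ is exactly the product of the one-dimensional marginals of $P_{X_i, X_{\Pi_i}}$, the marginal $P_S$ on $S = \{i\} \cup \Pi_i$ is $\Omega(\eps/\sqrt{n})$-far in Hellinger from being a product distribution over $\{0,1\}^{|S|}$.

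The algorithm therefore iterates over all subsets $S \subseteq [n]$ with $|S| \leq d+1$: for each, it projects the samples of $P$ onto $S$ (so all tests share the same sample stream), and invokes the Hellinger-distance independence tester of Lemma~\ref{lemma:hellinger_independence_tester} with distance parameter $\eps' = \Theta(\eps/\sqrt{n})$ and failure probability $O(1/\binom{n}{d+1})$, accepting iff every invocation accepts. Completeness is immediate, since every marginal of a product is a product; soundness follows from the previous paragraph, since some enumerated $S$ contains $\{i\} \cup \Pi_i$. The per-test cost of Lemma~\ref{lemma:hellinger_independence_tester} on $\{0,1\}^{d+1}$ is $O(2^{d/2}/\eps'^2) = O(2^{d/2} n / \eps^2)$; union-bound amplification contributes a $\log\binom{n}{d+1} = O(d \log n)$ factor, and bookkeeping over all $(d+1)$-subsets absorbs the remaining $O(d)$ to yield the claimed $O(d^2 2^{d/2} n \log(n) / \eps^2)$.

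The main obstacle is Lemma~\ref{lemma:hellinger_independence_tester} itself, as no off-the-shelf Hellinger independence tester over $\{0,1\}^{d+1}$ with the needed optimal sample complexity appears in the literature. I would construct one by a learn-then-test strategy: first use Lemma~\ref{lemma:product_learning_chi_square} to produce a product distribution $Q$ over $\{0,1\}^{d+1}$ satisfying $d_{\chi^2}(P_S, Q) \leq \eps'^2$ whenever $P_S$ is itself a product (cost $O(d/\eps'^2)$), then invoke Theorem~\ref{theorem:identity testing_chi_square_Hellinger} to distinguish $d_{\chi^2}(P_S, Q) \leq \eps'^2$ from $d_{\rm H}(P_S, Q) \geq c\eps'$ (cost $O(2^{(d+1)/2}/\eps'^2)$, which dominates). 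In the soundness case, $P_S$ is $\Omega(\eps')$-far in Hellinger from every product distribution, so in particular from the learned $Q$, making the identity tester reject regardless of how $Q$ turned out; in the completeness case the first guarantee places $Q$ close enough in $\chi^2$ that the identity tester accepts.
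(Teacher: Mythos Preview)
Your approach mirrors the paper's, but there is one genuine gap in the soundness argument. You correctly deduce $d_{\rm H}(P_S, P'_S) = \Omega(\eps/\sqrt{n})$ for $S = \{i\} \cup \Pi_i$, where $P'_S$ is the product of the one-dimensional marginals of $P_S$. You then assert that $P_S$ is $\Omega(\eps/\sqrt{n})$-far in Hellinger \emph{from every product distribution}---which is what Lemma~\ref{lemma:hellinger_independence_tester} actually requires to guarantee rejection. But this does not follow: being far from one's own product of marginals does not by itself imply being far from \emph{all} products. The paper closes this gap with Lemma~\ref{corollary:hellinger_projection_lb_product}, which gives $\min_{Q\text{ product}} d_{\rm H}(P_S, Q) \geq \frac{1}{1+\sqrt{|S|}}\, d_{\rm H}(P_S, P'_S)$, at the cost of a factor $(1+\sqrt{d+1})$ in the distance parameter. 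That factor squared is the true source of one of the two $d$'s in the final sample complexity---not ``bookkeeping over $(d+1)$-subsets,'' which affects only the \emph{number} of tests (hence the $O(d\log n)$ amplification), not the per-test distance parameter.

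Two minor remarks. Your invocation of Lemma~\ref{lemma:technical_approximated_lb_product_distributions} at the global level is superfluous: $P'$ is itself a product distribution, so $d_{\tmop{TV}}(P,P') \geq \eps$ is immediate from the hypothesis that $P$ is $\eps$-far from every product. And your learn-then-test construction of the Hellinger independence tester (Lemma~\ref{lemma:product_learning_chi_square} followed by Theorem~\ref{theorem:identity testing_chi_square_Hellinger}) is exactly the paper's proof of Lemma~\ref{lemma:hellinger_independence_tester}.
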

\begin{algorithm2e}[htp]
\SetKwComment{Comment}{/* }{ */}
\SetKwInOut{Input}{Input}
\DontPrintSemicolon
\Input{Independent samples from a degree-$d$ Bayes net $P$ over $\{0,1\}^n$, $\eps\in(0,1]$}
Set $\delta \gets \frac{1}{3\binom{n}{d+1}}$, $\eps' \gets \frac{\eps}{\sqrt{2n}(1+\sqrt{d+1})}$, and $m \gets O\mleft(\frac{2^{d/2}}{\eps'^2}\log\frac{1}{\delta}\mright)$\;
Take a multiset $S$ of $m$ samples from $P$, where $\delta \gets \frac{1}{3\binom{n}{d+1}}$\;
\For{every subset $T\subseteq [n]$ of $d+1$ nodes }
{
    \Comment{We can generate one sample from $P_T$ given one from $P$}
     Use the $m$ samples from $S$ to generate $m$ i.i.d.\ samples from $P_T$\;
     Call Algorithm~\ref{alg:indep_general_Hellinger} with distance parameter $\eps'$ and failure probability $\delta$, using the $m$ samples from $P_T$
     \Comment{This is over $\{0,1\}^{d+1}$.}
}
\lIf{all $\binom{n}{d+1}$ tests accepted}
{
    \Return{accept}
} {
    \Return{reject}
}
\caption{Independence testing for degree-$d$ Bayes net}\label{alg:indep}
\end{algorithm2e}

As discussed in the introduction, the key components of our algorithm are performing the testing in Hellinger distance, in order to use the subadditivity result of~\citet{daskalakis2016square}; and using as subroutine an independence testing algorithm under Hellinger distance. As no optimal tester for this latter task was known prior to our work, our first step is to derive such an algorithm by adapting the ``testing-by-learning'' framework of~\citet{AcharyaDK15}. We emphasize that, in view of the relation $\frac{1}{2}d_{\tmop{TV}}^2\leq d_{\tmop{H}}^2 \leq d_{\tmop{TV}}$, this is a strictly stronger statement than the analogous known  $O(2^{n / 2}/ \eps^2)$-sample algorithm for testing independence under total variation distance~\citep{diakonikolas2016new,AcharyaDK15}, which only implies an $O(2^{n / 2}/ \eps^4)$ sample complexity for testing independence under Hellinger distance.
\begin{lemma}
  \label{lemma:hellinger_independence_tester}
  There exists an algorithm with the following guarantees: given a parameter $\eps\in(0,1]$ and sample access to an unknown distribution $P$ over $\{0,1\}^n$, the algorithm takes 
  $O(2^{n / 2}/ \eps^2)$
  samples from $P$, and distinguishes with probability at least $2/3$ between the cases (i)~$P$ is a product distribution, and (ii)~$P$ is at \emph{Hellinger} distance at least $\eps$ from every product distribution.
\end{lemma}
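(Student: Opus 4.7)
My plan is to adapt the classical test-by-learning paradigm of \citet{AcharyaDK15}: first learn a product distribution $Q$ that is close to $P$ in $\chi^2$ divergence under the hypothesis that $P$ is itself a product, and then run an \emph{asymmetric} identity test between $P$ and this explicit $Q$.

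\textbf{Step 1 (Learning).} I would invoke Lemma~\ref{lemma:product_learning_chi_square}, specialized to the two-point alphabet (each $n_\ell = 2$), with accuracy parameter $\eps$. This costs $O(n/\eps^2)$ samples from $P$ and returns a product distribution $Q$ over $\{0,1\}^n$, specified by its $n$ marginal parameters (so $Q(x)$ can be evaluated explicitly at any $x \in \{0,1\}^n$), with the guarantee that \emph{if} $P$ is a product distribution, then $d_{\chi^2}(P,Q) \leq \eps^2$ with probability at least $5/6$.

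\textbf{Step 2 (Testing).} I would then apply Theorem~\ref{theorem:identity testing_chi_square_Hellinger} to test $P$ against the now-known reference distribution $Q$ with distance parameter $\eps$. Since the domain has size $2^n$, this uses $O(2^{n/2}/\eps^2)$ fresh samples from $P$ and, with probability at least $5/6$, distinguishes (i)~$d_{\chi^2}(P,Q) \leq \eps^2$ from (ii)~$d_{\rm{}H}(P,Q) \geq \eps$. In the completeness case, Step 1 guarantees $d_{\chi^2}(P,Q) \leq \eps^2$, so the tester accepts; in the soundness case, $Q$ is (unconditionally) a product distribution, so by hypothesis $d_{\rm{}H}(P,Q) \geq \eps$ and the tester rejects. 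A union bound over the two steps yields overall success probability at least $2/3$, and the total sample complexity is $O(n/\eps^2) + O(2^{n/2}/\eps^2) = O(2^{n/2}/\eps^2)$, as required.

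No step here is combinatorially difficult; the only genuine subtlety lies in the choice of distances. The YES and NO cases are expressed in incomparable divergences ($\chi^2$-closeness versus Hellinger-farness), so one cannot simply plug in a standard TV-based identity tester: invoking~\eqref{relation:distances} to pass from $d_{\rm{}H}(P,Q) \geq \eps$ to $d_{\tmop{TV}}(P,Q) \geq \eps^2$ would cost an extra $\eps^{-2}$ factor in the testing step and yield only $O(2^{n/2}/\eps^4)$. Theorem~\ref{theorem:identity testing_chi_square_Hellinger} is precisely the tool that bridges $\chi^2$ and Hellinger without such a quadratic loss, and is what makes the claimed $O(2^{n/2}/\eps^2)$ bound attainable.
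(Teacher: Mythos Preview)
Your proposal is correct and is essentially identical to the paper's own proof: both invoke Lemma~\ref{lemma:product_learning_chi_square} to properly learn a product $Q$ close in $\chi^2$ under the product hypothesis, then apply the $\chi^2$-vs-Hellinger identity tester of Theorem~\ref{theorem:identity testing_chi_square_Hellinger}, and conclude via a union bound. Your remark about why a standard TV-based tester would lose a quadratic factor in $\eps$ also mirrors the paper's discussion.
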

\begin{proof}%
	    We analyze Algorithm~\ref{alg:indep_general_Hellinger}:
		first we use the algorithm of Lemma~\ref{lemma:product_learning_chi_square} to learn $P$
		to $d_{\chi^2}$ distance $\eps^2$ \emph{as if} it was a product distribution, using $O \mleft( \frac{n}{\eps^2} \mright)$
		samples. Let $\hat{P}$ be the output of the learning algorithm. Note that since Lemma~\ref{lemma:product_learning_chi_square} guarantees \emph{proper} learning, $\hat{P}$ is a product distribution.
	
		We then want to check that $\hat{P}$ is indeed close to $P$ (in Hellinger distance), as it should if $P$ were indeed a product distribution. To do this, we use the algorithm of Theorem~\ref{theorem:identity testing_chi_square_Hellinger} on $P$, with reference distribution $\hat{P}$ and distance parameter $\eps$; and reject if, and only if, this algorithm rejects.
			
		By a union bound, since both algorithms are correct with probability at least $5/6$, both are simultaneously correct with probability at least $2/3$; we hereafter assume this is the case in our analysis. The total sample complexity is $O\mleft( {n}/{\eps^2} \mright) + O\mleft( {\sqrt{2^n}}/{\eps^2} \mright) = O\mleft( {2^{n/2}}/{\eps^2} \mright)$, as claimed. We now argue correctness.
		\begin{itemize}
		    \item\textit{Soundness.} Proof by contrapositive: if the algorithm accepts, this means, from the guarantees of Theorem~\ref{theorem:identity testing_chi_square_Hellinger}, that $d_{\rm H}(P,\hat{P}) \leq \eps$. Since $\hat{P}$ is a product distribution, we conclude that $P$ is $\eps$-close (in Hellinger distance) to being a product distribution.
		    
		    \item\textit{Correctness.} Assume that $P$ is a product distribution. Then, Lemma~\ref{lemma:product_learning_chi_square} ensures that $d_{\rm \chi^2}(P,\hat{P}) \leq \eps^2$; and thus, by Theorem~\ref{theorem:identity testing_chi_square_Hellinger} the second step will not reject, and the algorithm overall accepts.
		\end{itemize}
	Note that, by a standard amplification trick (independent repetition and majority vote), the probability of error can be reduced from $1/3$ to any $\delta\in(0,1)$ at the price of a $O(\log(1/\delta))$ factor in the number of samples.
\end{proof}

\begin{algorithm2e}[htp]
		\SetKwComment{Comment}{/* }{ */}
		\SetKwInOut{Input}{Input}
		\DontPrintSemicolon
		\Input{Independent samples from the target distribution over $\{0,1\}^n$, $\eps\in(0,1]$}
		Set $m_1 \gets O \mleft( n/\eps^2 \mright)$, and $m_2 \gets O\mleft( 2^{n/2}/\eps^2 \mright)$\;
		
		Take two multisets $S_1$ and $S_2$ of samples with size $m_1$, $m_2$ respectively from $P$\;
		
		$\hat{P} \gets$ call the algorithm of Lemma~\ref{lemma:product_learning_chi_square} with the samples from $S_1$\;
		
		Call the tester in Theorem~\ref{theorem:identity testing_chi_square_Hellinger} with the samples from $S_2$, to test $P$ with reference to $\hat{P}$ with distance parameter $\eps$\;
		
		\lIf{tester accepted}
		{
			\Return{accept}
		} {
			\Return{reject}
		}
		\caption{Hellinger independence testing for general distributions}\label{alg:indep_general_Hellinger}
\end{algorithm2e}

\new{We will next require the following lemma, whose proof is deferred to Appendix~\ref{proof:lemma:hellinger_projection_lb_bayes_net}:
	\begin{restatable}[]{lemma}{CorollaryHellingerProjectionLbProduct}
		\label{corollary:hellinger_projection_lb_product}
		Let $P$ be a distribution on $\{0,1\}^n$, and $P' = (\pi_1 P) \otimes \cdots
		\otimes (\pi_n P)$ be the product of marginals of $P$. Denoting by  $\mathcal{Q}$ the set of all product distributions on $\{ 0, 1 \}^n$,
		we have
		$
			\min_{Q \in \mathcal{Q}} d_{\rm{}H} (P, Q) \geq \frac{1}{1 + \sqrt{n}} d_{\rm{}H}(P, P') .
		$
	\end{restatable}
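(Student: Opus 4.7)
The plan is to reduce to an application of Corollary~\ref{corollary:squared_hellinger_subadditivity} for two product distributions, combined with the triangle inequality for Hellinger distance. Let $Q = Q_1 \otimes \cdots \otimes Q_n \in \mathcal{Q}$ be arbitrary. The target inequality will follow if I can show that
\[
d_{\rm H}(P,P') \;\leq\; (1+\sqrt{n})\, d_{\rm H}(P,Q),
\]
for then taking the minimum over $Q$ yields the claim. By the triangle inequality, $d_{\rm H}(P,P') \leq d_{\rm H}(P,Q) + d_{\rm H}(Q,P')$, so it suffices to establish the ``product-vs-product'' bound $d_{\rm H}(Q,P') \leq \sqrt{n}\, d_{\rm H}(P,Q)$.

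The key observation is that both $P'$ and $Q$ are product distributions on $\{0,1\}^n$, i.e., they share the trivial (edge-less) factorization structure. Applying Corollary~\ref{corollary:squared_hellinger_subadditivity} to $P'$ and $Q$ with $\Pi_i = \emptyset$ for every $i$ gives
\[
d_{\rm H}^2(P',Q) \;\leq\; \sum_{i=1}^n d_{\rm H}^2(\pi_i P,\, Q_i).
\]
Now for each coordinate $i$, $Q_i$ is exactly the $i$-th marginal of $Q$, so applying the data processing inequality for squared Hellinger distance to the marginalization map gives $d_{\rm H}^2(\pi_i P, Q_i) \leq d_{\rm H}^2(P,Q)$. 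Summing over $i$ and taking square roots yields $d_{\rm H}(P',Q) \leq \sqrt{n}\, d_{\rm H}(P,Q)$, as desired.

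Plugging this back into the triangle inequality gives $d_{\rm H}(P,P') \leq (1+\sqrt{n})\, d_{\rm H}(P,Q)$, and minimizing over $Q \in \mathcal{Q}$ completes the proof. There is no real obstacle here: the argument is a clean combination of three standard facts (triangle inequality, data processing, and Hellinger subadditivity for common-structure factorizations). The only subtlety worth flagging is the mild but crucial point that $P'$ and $Q$ trivially share a factorization structure as product distributions, which is precisely what unlocks Corollary~\ref{corollary:squared_hellinger_subadditivity}; this is the Hellinger analogue of the TV statement in Lemma~\ref{lemma:technical_approximated_lb_product_distributions}, with the factor $\tfrac{1}{3}$ replaced by $\tfrac{1}{1+\sqrt{n}}$ precisely because Hellinger subadditivity across $n$ coordinates costs a $\sqrt{n}$ factor whereas TV does not.
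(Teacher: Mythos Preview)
Your proposal is correct and follows essentially the same approach as the paper's own proof: triangle inequality, then Corollary~\ref{corollary:squared_hellinger_subadditivity} applied to the two product distributions $P'$ and $Q$, then the data processing inequality on each marginal to bound $d_{\rm H}^2(\pi_i P, Q_i)\leq d_{\rm H}^2(P,Q)$. The paper's write-up additionally makes explicit the trivial observation $P'_{X_i}=\pi_i P$, but otherwise the argument is identical.
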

}
We are now ready to prove Theorem~\ref{theorem:independence_testing_ub}. As discussed in the introduction, the idea is the following: when $P$ is a product distribution, then the distribution induced by $P'$ on any subset of $d+1$ nodes is still a product distribution. However, by the above localization Corolloary \ref{corollary:squared_hellinger_subadditivity}, when $P$ is far from every product distribution, we can localize the farness between $P$ and $Q$ in one of the subsets of $d+1$ nodes. We also know that $P'$, the product of marginals $P$, is not too bad of an approximation to the closest $Q$ in the product space as suggested by Lemma \ref{corollary:hellinger_projection_lb_product}. Thus, testing independence using $P'$ as a proxy and paying an extra factor of $O(d)$ to compensate in accuracy suffice.

  \begin{proof}[Proof of Theorem~\ref{theorem:independence_testing_ub}]
    We analyze Algorithm~\ref{alg:indep}, denoting as in the algorithm by $P'$ the product of marginals of $P$, and setting 
    $\delta \eqdef \frac{1}{3\binom{n}{d+1}}$, $\eps' \eqdef \frac{\eps}{\sqrt{2n}(1+\sqrt{d+1})}$, and 
    \[
        m \eqdef O\mleft(\frac{2^{d/2}}{\eps'^2}\log\frac{1}{\delta}\mright)
        = O\mleft(\frac{2^{d/2}n}{\eps^2}\cdot d^2\log n\mright).
    \]
    The sample complexity is thus immediate; further, note that, as stated in the algorithm, given $m$ i.i.d.\ samples from $P$ and a fixed set $T\subseteq[n]$ of nodes, one can generate $m$ i.i.d.\ samples from $P_T$ by only keeping the relevant variables (those in $T$) of each sample from $P$.
    \begin{itemize}
        \item \textit{Completeness.} Assume $P$ is a product distribution. Then, $P_T$ is a product distribution for every choice of $T$, and each of the $\binom{n}{d+1}$ performs thus accepts with probability at least $1-\delta$ by Lemma~\ref{lemma:hellinger_independence_tester}. Thus, by a union bound, all tests simultaneously accept with probability at least $1-\binom{n}{d+1}\cdot \delta = 2/3$, and Algorithm~\ref{alg:indep} returns ``accept.''
        
        \item \textit{Soundness.} Assume now that $P$ is $\eps$-far (in total variation distance) from every product distribution over $\{0,1\}^n$. A fortiori, it is $\eps$-far from the product distribution $P'$, and thus we have
        \[
            d_{\rm{}H}  (P, P') \geq \frac{1}{\sqrt{2}}d_{\tmop{TV}}  (P, P') \geq \frac{\eps}{\sqrt{2}}
        \]
        By Corollary~\ref{corollary:squared_hellinger_subadditivity}, this means there exists some node $i\in[n]$ along with the set of its (at most) $d$ parents $\Pi_i$ such that, setting $T\eqdef \{i\}\cup\Pi_i$,
        \[
            d_{\rm{}H}^2(P_{T}, P'_T) \geq \frac{\eps^2}{2n}\,.
        \]
        Now, we can invoke our localization lemma, Lemma~\ref{corollary:hellinger_projection_lb_product}, to conclude that $P_T$ is not only far from $P'_T$, it is far from \emph{every} product distribution on $T$:
        \[
            \min_{Q\text{ product}} d_{\rm{}H}^2(P_{T}, Q) \geq \frac{\eps^2}{2n(1+\sqrt{d+1})^2} = \eps'^2\,.
        \]
        Thus, when this particular set $T$ of $d+1$ nodes is encountered by the algorithm, the corresponding independence test will reject with probability at least $1-\delta$ by Lemma~\ref{lemma:hellinger_independence_tester}, and the overall algorithm will thus reject with probability at least $1-\delta$.
    \end{itemize}
    This concludes the proof: the sample complexity is $O(d^2 2^{d / 2} n \log(n)  / \eps^2)$ as claimed, and the tester is correct in both cases with probability at least $2/3$.
  \end{proof}

\section{The $\Omega(2^{d/2}{n} / \eps^2)$ Lower Bound}
    \label{sec:lb:outline}
In this section, we prove our main technical result: a lower bound on the sample complexity of testing whether a degree-$d$ Bayes net is a product distribution.
\begin{theorem}
    \label{theo:main:lb}
  Let $d \leq c\cdot \log n$, where $c>0$ is a sufficiently small absolute constant. Then, testing whether an arbitrary
  degree-$d$ Bayes net over $\{0, 1\}^n$ is a product distribution or is
  $\eps$-far from every product distribution requires $\Omega (2^{d / 2} n /
  \eps^2)$ samples.
\end{theorem}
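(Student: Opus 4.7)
The plan is a standard Ingster/Le Cam two-point argument: construct a prior over hard degree-$d$ Bayes nets $\{P_\lambda\}$ such that (a) with probability $1-o(1)$ over $\lambda$, $P_\lambda$ is $\eps$-far in TV from every product distribution, yet (b) $\E_\lambda[P_\lambda^{\otimes m}]$ is $o(1)$-close in TV to $U^{\otimes m}$ whenever $m=o(2^{d/2}n/\eps^2)$. Since $U$ is itself a product distribution, this forces any tester to use $m=\Omega(2^{d/2}n/\eps^2)$ samples.

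\textbf{Construction.} Reserve $A\subseteq[n]$ of size $d-1$ as ``pointer'' coordinates and fix a perfect matching $M$ on the remaining $n-(d-1)$ coordinates (padding by one dummy coordinate if parity fails). Parameterize $\lambda=(s_{a,e})_{a\in\{0,1\}^{d-1},\,e\in M}\in\{\pm 1\}^{2^{d-1}|M|}$, drawn uniformly at random. Under $P_\lambda$, the pointer coordinates $X_A$ are uniform i.i.d.; for each matched pair $e=(u,v)\in M$, conditional on $X_A=a$, the pair $(X_u,X_v)$ is drawn as an $\eps/\sqrt{n}$-correlated bit pair with sign $s_{a,e}$. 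Topologically ordering $A$ first, then left endpoints (no parents), then right endpoints (parents $A\cup\{u\}$), gives a DAG of in-degree $d$ under which $P_\lambda$ factorizes. Crucially the matching $M$ is \emph{shared} across all $2^{d-1}$ pointer branches, so in-degrees do not add up across branches.

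\textbf{Sample complexity via $\chi^2$.} Let $Q_m\eqdef\E_\lambda[P_\lambda^{\otimes m}]$; by Le Cam it suffices to show $d_{\chi^2}(Q_m,U^{\otimes m})=o(1)$ at the target $m$. A standard expansion gives
\[
1+d_{\chi^2}(Q_m,U^{\otimes m}) = \E_{\lambda,\lambda'}\!\left[\left(\E_{X\sim U}\tfrac{P_\lambda(X)P_{\lambda'}(X)}{U(X)^2}\right)^m\right].
\]
Conditioning on the pointer value $a\in\{0,1\}^{d-1}$ and computing per-edge contributions, the inner expectation evaluates to $\frac{1}{2^{d-1}}\sum_{a}\prod_{e\in M}(1+\rho^2 s_{a,e}s'_{a,e})$ with $\rho=\eps/\sqrt n$. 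Raising to the $m$-th power, taking expectation over the independent $\pm 1$ signs, and using $\log(1+x)\leq x$ reduces the task to controlling MGFs of the form $\E[\exp(C m \rho^2 Z^2/|M|)]$ for a centered signed Binomial $Z$ on $|M|\approx n/2$ terms (plus a truncated variant governing the ``diagonal'' $\lambda\approx\lambda'$ regime). This is precisely where Lemmas~\ref{lemma:MGF_Binomial_square_bound} and~\ref{lemma:MGF_Binomial_square_bound_with_min} — MGF bounds for squared and truncated Binomials — drive the argument, yielding $d_{\chi^2}=o(1)$ whenever $m\leq c'\,2^{d/2}n/\eps^2$, in the regime $d\leq c\log n$.

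\textbf{Farness from every product, and main obstacle.} A single $P_\lambda$ has near-uniform pair marginals after integrating out $X_A$, so $\Omega(\eps)$ distance from \emph{every} product distribution (and not merely from $U$) is nontrivial. The plan is to invoke Lemma~\ref{lemma;distance:mixture:of:trees}, which reduces the mixture-to-product distance to the pairwise TV distances $d_{\tmop{TV}}(P_{\lambda,a},P_{\lambda,a'})$ between distinct pointer components (each a matching-correlated tree distribution), and to lower-bound those via Lemma~\ref{lemma:technical_distance_lb_2xbinomials_conditional_tree} together with Rademacher concentration on the $|M|\approx n/2$ independent signs; a union bound over the $\binom{2^{d-1}}{2}$ pairs then places all pairwise distances at $\Omega(\eps)$ with probability $1-o(1)$ over $\lambda$, which transfers to the mixture. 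The main obstacle throughout is the $\chi^2$ computation in the previous paragraph: because $M$ is shared across branches (as it must be, to keep in-degree $d$), the cross-branch contributions are \emph{dependent}, and the clean argument behind the $\Omega(2^{d/2}\sqrt{n}/\eps^2)$ lower bound of~\cite{CanonneDKS20} no longer applies. Replacing independence by the MGF bounds on squared and truncated Binomials is the technical device that lets the stronger $n/\eps^2$ (rather than $\sqrt{n}/\eps^2$) scaling survive the mixture-of-trees construction.
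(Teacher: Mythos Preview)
Your construction has a genuine gap: you \emph{fix} the matching $M$, whereas to obtain the linear-in-$n$ dependence the matching must be drawn uniformly at random (as part of the prior $\lambda$). With a fixed matching, the $\chi^2$ computation you outline only yields $\Omega(2^{d/2}\sqrt{n}/\eps^2)$, which is exactly the weaker bound the paper proves separately in Appendix~\ref{section:lb_2}.

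To see why, carry your own calculation through. With fixed $M$ the quantities $W_a=\prod_{e\in M}(1+\rho^2\sigma_{a,e})$ are i.i.d.\ across $a$, so by the multinomial expansion and $\cosh u\le e^{u^2/2}$ one gets
\[
\mathbb{E}_{\sigma}\!\left[\Big(\tfrac1D\textstyle\sum_a W_a\Big)^m\right]
\le \mathbb{E}_{\alpha}\!\left[\exp\!\Big(\tfrac{|M|\rho^4}{2}\textstyle\sum_a\alpha_a^2\Big)\right],
\]
and with $|M|\asymp n$, $\rho^2=\eps^2/n$ this is $\mathbb{E}_\alpha[\exp(\Theta(\eps^4/n)\sum_a\alpha_a^2)]$. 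Lemma~\ref{lemma:MGF_Binomial_square_bound} then gives $\exp(\Theta(\eps^4 m^2/(nD)))$, which is $1+o(1)$ only for $m=o(\sqrt{D}\,\sqrt{n}/\eps^2)$. (Your stated exponent $Cm\rho^2 Z^2/|M|$ does not arise from this computation; and your mention of a ``diagonal $\lambda\approx\lambda'$ regime'' has no meaning when $M$ is deterministic.)

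The paper's fix (Definition~\ref{def:mixture:trees}) is to draw the matching $\lambda$ uniformly at random as well, shared across all $2^{d-1}$ branches. In the $\chi^2$ bound one then takes $\mathbb{E}_{\lambda,\lambda'}[\cdot]$ over two \emph{independent} random matchings; the crucial random variable is $R$, the number of pairs common to $\lambda$ and $\lambda'$, which satisfies $\Pr[R>k]\le 1/k!$ (eq.~\eqref{eq:tail:bound:R}) and is therefore $O(1)$ with overwhelming probability, rather than $N/2$ as in your fixed-$M$ case. This is what upgrades $\sqrt{n}$ to $n$. The price is that the $D$ branches now share $\lambda$, so the cross terms are only \emph{conditionally} independent given $\lambda$; handling this is where the cycle decomposition, the refinement~\eqref{fact:intermedia_upper_bound_on_multigraph_cycle_term:restated}, and the truncated-MGF machinery (Lemmas~\ref{lemma:MGF_Binomial_square_bound_with_min} and~\ref{lemma:bound_on_truncated_multinomial_MGF}) actually enter. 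Your farness argument (via Lemmas~\ref{lemma:technical:TV_lower_bound_of_its_marginals} and~\ref{lemma:technical_distance_lb_2xbinomials_conditional_tree}) is essentially the right plan and is unaffected by whether $M$ is random or fixed.
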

This theorem considerably generalizes the lower bound of $\Omega(n/\eps^2)$ established in~\cite{CanonneDKS20} for the case of trees\footnote{\new{While this can be better visualized as a forest, we use the term “tree” to refer to the structure of degree-$1$ Bayes nets. Technically, it still factorizes as a Bayesian path (tree), but not all edges are necessary.}} ($d=1$). To establish our result, we build upon (and considerably extend) their analysis; in particular, we will rely on the following ``mixture of trees'' construction, which can be seen as a careful mixture of ($2^{d-1}$ of) the hard instances from the lower bound of~\cite[Theorem 14]{CanonneDKS20}. \new{Figure \ref{fig:lower_bound_construction} shows an illustrative example of our lower bound constructions.}

\paragraph{Notation.}
  Throughout, for given $n, d, \eps$, we let $N\eqdef n-d+1$ (without loss of generality assumed to be even), $D \eqdef 2^{d-1}$, $\delta \eqdef \frac{\eps}{\sqrt{n}}$, and 
  \[
  z_0 = \frac{1 + 4 \delta}{1 - 4 \delta}, \quad
  z_1 = \frac{1 + (4
  \delta)^2}{1 - (4 \delta)^2}, \quad
  z_2 = \frac{1 + (4 \delta)^4}{1 - (4
  \delta)^4}.
  \]

\begin{definition}[Mixture of Trees]
    \label{def:mixture:trees}
Given parameters $0\leq d \leq n$ and $\delta\in(0,1]$, we define the probability distribution $\mathcal{D}_{n,d,\delta}$ over degree-$d$ Bayes nets by the following process.
\begin{enumerate}
    \item Choose a perfect matching $\lambda$ of $[N]$ uniformly at random (where $N=n-d+1$), i.e., a set of $N/2$ disjoint pairs;
    \item Draw i.i.d.\ $\mu_1,\dots, \mu_D$ uniformly at random in $\{0,1\}^{N/2}$;
    \item For $\ell\in[D]$, let $p_\ell$ be the distribution over $\{0,1\}^N$ defined as the Bayes net with tree structure $\lambda$, such that if $\lambda_k = (i,j)\in\lambda$ then the corresponding covariance between variables $X_i,X_j$ is 
    \[
    \operatorname{Cov}(X_i, X_j) = (- 1)^{\mu_{\ell, k}}
\delta
    \]
    \item\label{def:mixture:trees:item4} Let the resulting distribution $P_{\lambda,\mu}$ over $\{0,1\}^n$ be
    \[
        P_{\lambda,\mu}(x) = \frac{1}{D}\sum_{\ell=1}^D p_\ell(x_{d},\dots,x_{n}) \mathbbm{1}[\iota(x_1,\dots,x_{d-1})=\ell]
    \]
    where $\iota\colon \{0,1\}^{d-1} \to [D]$ is the indexing function, mapping the binary representation to the corresponding number. 
\end{enumerate}
That is, $\mathcal{D}_{n,d,\delta}$ is the uniform distribution over the set of degree-$d$ Bayes nets where the first $d-1$ coordinates form a ``pointer'' to one of the $2^{d-1}$ tree Bayes nets sharing the same tree structure (the matching $\lambda$), but with independently chosen covariance parameters (the $D$ parameters $\mu_1,\dots, \mu_D$).
\end{definition}

\new{
  \begin{figure}[h!]
  \centering
  \includegraphics[width=1.0\linewidth]{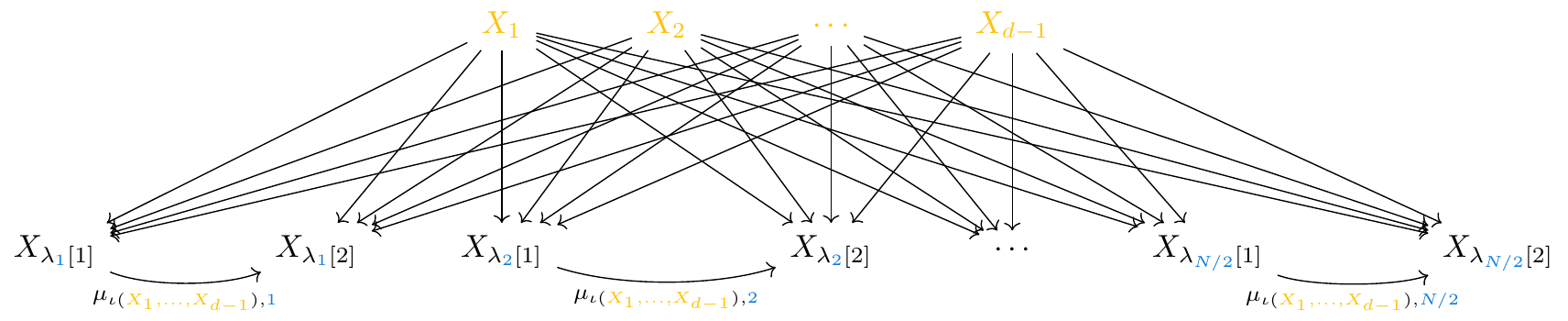}
    \caption{\small
      A depiction of our lower bound construction, where the arrows are the edges of the underlying Bayes net. The first $(d-1)$ nodes can be seen as a $(d-1)$-bit string that acts as a pointer to one of the distributions, i.e., the value of the first $d$ nodes encodes which distribution the rest of the nodes are on.
      We write $\lambda_k[1] \eqdef i$ and $\lambda_k[2]\eqdef j$ when $\lambda_k=(i,j)$. There are two sources of randomness in the construction. The first one is the random matching on the $N/2$ pairs represented in graph in the form of
      $ (\lambda_1 [1], \lambda_1 [2]), \ldots, (\lambda_{N / 2} [1], \lambda_{N / 2} [2]) $, which is independent of the particular configuration $X_1, \ldots, X_d$ takes. The second one is the covariance parameters between each pair; these parameters are randomly and independently chosen for each configuration (in total $2^d$ of them) of the first $d$ nodes. Color (yellow and blue) marks how the value of covariance parameter $\mu_{\ell,k}$ in each pair depends on the first $d$ nodes' configurations as well as the matching parameter $k$.
    }
    \label{fig:lower_bound_construction}
  \end{figure}
}

With this construction in hand, Theorem~\ref{theo:main:lb} will follow from the next two lemmas:
\begin{lemma}[Indistinguishability]
  \label{theorem:lower_bound_uniformity_testing_on_Bayes_Net}
  There exist absolute constants $c,C>0$ such that the following holds. 
  For $\Omega(1) \leq d \leq c\log n$, no $m$-sample algorithm can distinguish with probability at least $2/3$ between a (randomly chosen) mixture of trees $P\sim \mathcal{D}_{n,d,\eps/\sqrt{n}}$ and the uniform distribution $U$ over $\{0,1\}^n$, unless $m \geq C\cdot \frac{2^{d / 2} n}{\eps^2}$.
\end{lemma}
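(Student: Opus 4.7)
I follow the standard second-moment (Ingster / Le Cam) method: to establish the $2/3$-indistinguishability of the planted mixture from $U$, it suffices to bound the chi-squared divergence
\[
d_{\chi^2}\mleft(\E_{P\sim \mathcal{D}_{n,d,\delta}}[P^{\otimes m}],\, U^{\otimes m}\mright) = \E_{P, P'}\bigl[Z(P, P')^m\bigr] - 1,
\]
where $Z(P, P') \eqdef 2^n \langle P, P'\rangle$ and $P, P'$ are independent draws from $\mathcal{D}_{n,d,\delta}$ (with $\delta = \eps/\sqrt{n}$). Showing this quantity is $O(1)$ whenever $m \leq C\cdot 2^{d/2} n/\eps^2$ yields the theorem via the chi-squared--TV inequality \eqref{relation:distances} applied to the $m$-fold product measures.

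The first step is an explicit evaluation of $Z(P, P')$. Writing $P = P_{\lambda,\mu}$ and $P' = P_{\lambda',\mu'}$, the pointer coordinates $x_1,\dots,x_{d-1}$ align the two mixture components on the same index $\ell$, reducing a $D^2$-sum to a $D$-sum:
\[
    Z(P,P') = \frac{1}{D}\sum_{\ell=1}^D 2^N\bigl\langle p_\ell^{(\lambda,\mu_\ell)},\, p_\ell^{(\lambda',\mu'_\ell)}\bigr\rangle.
\]
When $\lambda = \lambda'$, each pair $\lambda_k = (i,j)$ contributes an independent factor, and a direct calculation gives
\[
    2^N\bigl\langle p_\ell^{(\lambda,\mu_\ell)},\, p_\ell^{(\lambda,\mu'_\ell)}\bigr\rangle = \prod_{k=1}^{N/2}\bigl(1 + \sigma_{\ell,k}\cdot (4\delta)^2\bigr) = (1-(4\delta)^4)^{N/4}\, z_1^{B_\ell/2},
\]
where $\sigma_{\ell,k} \eqdef (-1)^{\mu_{\ell,k}+\mu'_{\ell,k}}$ are Rademacher, $B_\ell \eqdef \sum_k \sigma_{\ell,k}$ is a signed centered Binomial on $N/2$ trials, and the $B_\ell$'s are i.i.d.\ across $\ell\in[D]$ conditional on $\lambda$. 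When $\lambda \neq \lambda'$, the inner product instead factors along the (even) cycles of $\lambda \cup \lambda'$: each $2s$-cycle contributes an $O(\delta^{2s})$ factor with a sign depending on the telescoped $\mu_\ell, \mu'_\ell$ bits around the cycle.

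The key observation is that, after averaging over $\mu_\ell, \mu'_\ell$, the conditional mean $\E[Z\mid \lambda, \lambda']$ equals $1$ (Rademacher products average to zero) while the conditional variance is controlled by the number of edges \emph{shared} between $\lambda$ and $\lambda'$. Two independent uniformly random perfect matchings share only $O(1)$ edges in expectation, so after also integrating over $\lambda, \lambda'$ one expects $\operatorname{Var}(Z) = O(\delta^4/D)$ --- the crucial $1/N$ improvement over the ``known-structure'' variance $O(N\delta^4/D)$. This is precisely the gain from the matching being hidden that upgrades $\sqrt{n}$ to $n$ in the final lower bound, paralleling (and considerably generalizing) the $d = 1$ analysis of \cite{CanonneDKS20}.

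The main obstacle is to extend this heuristic from the variance to the full $m$-th moment $\E[Z^m]$. Conditioning on $\lambda, \lambda'$ and expanding,
\[
    \E\bigl[Z^m \,\big|\, \lambda, \lambda'\bigr] = \frac{1}{D^m}\sum_{c_1+\cdots+c_D = m}\binom{m}{c_1,\dots,c_D}\prod_{\ell=1}^D \E\bigl[Z_\ell^{c_\ell} \,\big|\, \lambda, \lambda'\bigr],
\]
where $Z_\ell \eqdef 2^N\langle p_\ell^{(\lambda,\mu_\ell)}, p_\ell^{(\lambda',\mu'_\ell)}\rangle$ and, in the aligned case, $\E[Z_\ell^{c_\ell}\mid\lambda=\lambda']$ reduces to a Rademacher MGF $\cosh(c_\ell\log z_1/2)^{N/2}$ --- an MGF-of-squared-Binomial quantity bounded sharply by Lemma~\ref{lemma:MGF_Binomial_square_bound}. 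The hard part is combining these factor-level bounds with the outer expectation over $\lambda, \lambda'$: I would use the truncated MGF bound Lemma~\ref{lemma:MGF_Binomial_square_bound_with_min} to discard a vanishing-probability event on which $\lambda, \lambda'$ share atypically many edges (where $Z$ could spike), then sum over the cycle decomposition of $\lambda\cup\lambda'$ to control the remainder. The accounting must decouple three savings --- the $D^{-1}$ from mixture averaging, the $N^{-1}$ from matching randomness, and the multinomial combinatorics across $(c_1,\dots,c_D)$ --- and I expect the simultaneous coupling of cycle structure and multinomial partition to be the principal technical difficulty, requiring the full machinery developed in Section~\ref{ssec:tools:mgf}.
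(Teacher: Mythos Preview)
Your plan follows the same second-moment/Ingster route as the paper, and you have correctly identified all the structural ingredients: the pointer alignment collapsing the $D^2$-sum to a $D$-sum, the cycle decomposition of $\lambda\cup\lambda'$, the multinomial expansion of $(\tfrac{1}{D}\sum_\ell Z_\ell)^m$ (the paper's Fact~\ref{fact:power_of_average_equates_multinomial_expect} and Lemma~\ref{lemma:average_of_product_with_binomial_raise_by_m:restated_improve_attempt}), and the MGF machinery of Section~\ref{ssec:tools:mgf}. Your computation of $Z_\ell$ in the aligned case $\lambda=\lambda'$ is correct and coincides with the paper's $(1-(4\delta)^2)^R z_1^{|A_i|}$ after substituting $|A_i|=N/4+B_\ell/2$.

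There is, however, one point where your plan would fall short of $d\le c\log n$. You propose to use Lemma~\ref{lemma:MGF_Binomial_square_bound_with_min} to ``discard a vanishing-probability event on which $\lambda,\lambda'$ share atypically many edges.'' Truncating on the number $R$ of shared edges and killing the bad event $\{R>r^\ast\}$ via $\Pr[R>k]\le 1/k!$ does work, but the un-truncated inner MGF on that event can be as large as $e^{O(D^2 n)}$; beating this with $1/r^\ast!$ forces $D^{5/2}\ll\log n$, i.e.\ only $d\ll\log\log n$. (The paper's authors in fact tried this first and abandoned it.)

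The fix that reaches $d\le c\log n$ is to truncate not on $R$ but on the \emph{multinomial coordinates} $\alpha_i$ (your $c_\ell$). After the multinomial expansion one faces $\prod_i \cosh(2\alpha_i\delta^2)^R$; using $\cosh u\le\min(e^{u^2/2},e^u)$ and splitting at $\alpha_i\gtrless\Delta:=1/\delta^2$, the large-$\alpha_i$ piece is handled by Lemma~\ref{lemma:bound_on_truncated_multinomial_MGF} and the small-$\alpha_i$ piece by Lemma~\ref{lemma:MGF_Binomial_square_bound_with_min}, whose hypothesis $t\Delta\le 1/8$ now holds \emph{uniformly in $R$}. Only afterwards is the expectation over $R$ taken, against an expression that is merely $e^{O(R)}$ rather than $e^{O(D^2 n)}$, so the $1/k!$ tail suffices without any constraint on $D$ beyond $D\le n/40$. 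Your plan has all the right lemmas; the essential technical point is placing the cap on $\alpha_i$ rather than on $R$.
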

\begin{lemma}[Distance from product]
  \label{lemma;distance:mixture:of:trees}
  Fix any $0\leq d\leq \frac{n}{2}$ and $\eps\in(0,1]$. With probability at least $9/10$ over the choice of $P\sim\mathcal{D}_{n,d,\eps/\sqrt{n}}$, $P$ is $\Omega(\eps)$-far from every product distribution on $\{0,1\}^n$.
\end{lemma}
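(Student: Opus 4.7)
The plan is to reduce the distance from $P$ to every product distribution to the \emph{average pairwise TV distance} between the mixture components $p_\ell$, and then to establish that these pairwise distances are typically $\Omega(\eps)$ by a concentration argument. To carry out the reduction, I would partition $[n] = A \sqcup B$ with $A = [d-1]$ (the pointer bits) and $B = \{d,\dots,n\}$. Any product distribution $Q$ on $\{0,1\}^n$ factorizes across this bipartition, so Lemma~\ref{lemma:technical_approximated_lb_product_distributions} gives $\min_{Q\text{ product}} d_{\tmop{TV}}(P, Q) \geq \frac{1}{3} d_{\tmop{TV}}(P, P_A \otimes P_B)$. By Definition~\ref{def:mixture:trees}, $P_A = U_{d-1}$ and $P_B = \bar{p} := \frac{1}{D}\sum_\ell p_\ell$; a direct computation gives $d_{\tmop{TV}}(P, P_A \otimes P_B) = \frac{1}{D}\sum_\ell d_{\tmop{TV}}(p_\ell, \bar{p})$, and applying the triangle inequality $d_{\tmop{TV}}(p_\ell, p_{\ell'}) \leq d_{\tmop{TV}}(p_\ell, \bar{p}) + d_{\tmop{TV}}(\bar{p}, p_{\ell'})$ and summing yields the further bound $\frac{1}{D}\sum_\ell d_{\tmop{TV}}(p_\ell, \bar{p}) \geq \frac{1}{2D^2}\sum_{\ell, \ell'} d_{\tmop{TV}}(p_\ell, p_{\ell'})$. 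This is the content of Lemma~\ref{lemma:technical:TV_lower_bound_of_its_marginals} in the paper, and reduces the task to showing that the average pairwise distance is $\Omega(\eps)$ with probability $\geq 9/10$ over $\mu$.

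For the pairwise lower bound, fix $\ell \neq \ell'$ and let $T_{\ell,\ell'} := \{k : \mu_{\ell,k} \neq \mu_{\ell',k}\}$. The distributions $p_\ell, p_{\ell'}$ share the tree $\lambda$, agree on pairs $k \notin T_{\ell,\ell'}$, and differ by a sign flip on each pair inside, so $d_{\tmop{TV}}(p_\ell, p_{\ell'})$ equals the TV distance between two product distributions on $|T_{\ell,\ell'}|$ pairs with covariances $\pm \delta$. The natural single-edge statistic $Y_k := \mathbbm{1}[X_{\lambda_k[1]} = X_{\lambda_k[2]}]$ is $\mathrm{Bernoulli}(1/2 + 2\delta)$ or $\mathrm{Bernoulli}(1/2 - 2\delta)$ according to the sign, so by the data processing inequality,
\[ d_{\tmop{TV}}(p_\ell, p_{\ell'}) \geq d_{\tmop{TV}}\bigl(\mathrm{Bin}(|T_{\ell,\ell'}|, 1/2 + 2\delta),\, \mathrm{Bin}(|T_{\ell,\ell'}|, 1/2 - 2\delta)\bigr) = \Omega\bigl(\min(1, \delta\sqrt{|T_{\ell,\ell'}|})\bigr), \]
the last estimate being the standard TV distance between two nearby Binomials, whose means differ by $\Theta(\delta \sqrt{|T|})$ standard deviations. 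This is precisely where Lemma~\ref{lemma:technical_distance_lb_2xbinomials_conditional_tree} is invoked.

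Finally, $|T_{\ell,\ell'}| \sim \mathrm{Bin}(N/2, 1/2)$, so by Chernoff $|T_{\ell,\ell'}| \geq N/8$ except with probability $e^{-\Omega(N)}$; since $\delta = \eps/\sqrt{n}$ and $N \geq n/2$, such a pair contributes $\Omega(\eps)$ to the average. A Markov argument on the expected number of ``bad'' pairs (rather than a union bound, which degrades for $d$ near $n/2$) shows that with probability $\geq 9/10$ over $\mu$, at least half of the $D^2$ ordered pairs are good, so the overall average is $\Omega(\eps)$; combining with the reduction completes the proof. The main obstacle in this program is the pairwise lower bound itself: one has to establish $d_{\tmop{TV}}(\mathrm{Bin}(m,1/2+2\delta), \mathrm{Bin}(m,1/2-2\delta)) = \Omega(\min(1,\delta\sqrt{m}))$ in a form that survives directly under the data processing step and remains sharp down to $|T| \asymp N$, avoiding the quadratic loss one would incur by going through Hellinger or $\chi^2$.
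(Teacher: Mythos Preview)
Your argument is correct, but it is not quite the paper's route, and you have somewhat misidentified what Lemmas~\ref{lemma:technical:TV_lower_bound_of_its_marginals} and~\ref{lemma:technical_distance_lb_2xbinomials_conditional_tree} actually say.

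\textbf{The reduction step.} You split on the full pointer block $A=[d-1]$, compute $d_{\mathrm{TV}}(P,P_A\otimes P_B)=\frac{1}{D}\sum_\ell d_{\mathrm{TV}}(p_\ell,\bar p)$, and then triangle down to the all-pairs average $\frac{1}{2D^2}\sum_{\ell,\ell'} d_{\mathrm{TV}}(p_\ell,p_{\ell'})$. This is fine, but it is \emph{not} the content of Lemma~\ref{lemma:technical:TV_lower_bound_of_its_marginals}. The paper instead splits off only the single coordinate $x_1$: since $P$ is uniform on $x_1$, Lemma~\ref{lemma:technical:TV_lower_bound_of_its_marginals} gives $d_{\mathrm{TV}}(P,P_1\otimes P_2)\ge d_{\mathrm{TV}}(P(\cdot\mid x_1=0),P(\cdot\mid x_1=1))$, and after expanding over the remaining pointer bits this lands on a \emph{single} pair $(p_\ell,p_{\ell'})$ with $\ell,\ell'$ differing only in the first pointer bit. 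The advantage for the paper is that a single Chernoff bound on $|T_{\ell,\ell'}|\sim\mathrm{Bin}(N/2,1/2)$ already gives probability $\ge 9/10$ for that one pair; no Markov averaging over $D^2$ pairs is needed. Your Markov argument does work, but it is an extra step the paper avoids.

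\textbf{The pairwise lower bound.} Here you genuinely diverge from the paper, and in a cleaner direction. The paper does \emph{not} pass through data processing: it writes out $d_{\mathrm{TV}}(p_\ell,p_{\ell'})$ exactly as the double sum $\frac{(1-4\delta)^{N/2}}{2^N}\sum_{k_1,k_2}\binom{a}{k_1}\binom{b}{k_2}\bigl|z^{k_1+k_2}-z^{k_1+b-k_2}\bigr|$ (where $a=|S_1|,\ b=|S_2|$ partition the $N/2$ edges into agreeing/disagreeing signs) and then establishes Lemma~\ref{lemma:technical_distance_lb_2xbinomials_conditional_tree} by a direct concentration-of-Binomials argument on this sum. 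Your route---map each edge to the parity bit $Y_k$, note the agreeing edges cancel, and reduce (after the sign flip $Y_k\mapsto Y_k\oplus\mu_{\ell,k}$) to $d_{\mathrm{TV}}\bigl(\mathrm{Bin}(|T|,1/2+2\delta),\mathrm{Bin}(|T|,1/2-2\delta)\bigr)$---is simpler and invokes only the standard $\Omega(\min(1,\delta\sqrt{m}))$ bound for shifted Binomials. This bypasses Lemma~\ref{lemma:technical_distance_lb_2xbinomials_conditional_tree} entirely (or, equivalently, uses only its $a=0$ special case). So where you write ``this is precisely where Lemma~\ref{lemma:technical_distance_lb_2xbinomials_conditional_tree} is invoked,'' that is not accurate: you are replacing it, not invoking it.
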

Note that this guarantees farness from \emph{every} product distribution, not just from the uniform distribution. 
We first establish Lemma~\ref{theorem:lower_bound_uniformity_testing_on_Bayes_Net} in the next subsection, before proving Lemma~\ref{lemma;distance:mixture:of:trees} in Section~\ref{ssec:mixturetree:farness}. Throughout, we fix $n$, $d$, and $\eps\in(0,1]$.    
\subsection{Sample Complexity to distinguish from Uniform (Lemma~\ref{theorem:lower_bound_uniformity_testing_on_Bayes_Net})}
    \label{ssec:distinguish:uniform}
%
\noindent\textbf{Proof sketch.}
\new{Following the original analogous analysis, we first use Ingster's method to upper bound square $\tmop{TV}$ distance in \eqref{eq:Ingster:restated} and after a series of algebraic calculations, we arrive at \eqref{eq:multigraph_cycle_expectation}. This is where we improve upon the original analysis by substituting a tighter upper bound \eqref{fact:intermedia_upper_bound_on_multigraph_cycle_term:restated}. This leads us to some of the most technical portions of the paper. To upper bound the inner expectation of~\eqref{eq:boundexpect:partial}, we use Lemma \ref{lemma:average_of_product_with_binomial_raise_by_m:restated_improve_attempt} to bound the unwieldy expression (average over $z_j^{\kappa_{j,i}}$ and raised by $m$) by an expectation of a multinomial variable $\orgvec{\alpha}$ in the expression \smash{$\textstyle\prod_{i = 1}^D \prod^K_{j = 1} \cosh (2 \alpha_i \delta_j)$} in \eqref{eq:boundinnerexpect:partial_improve_attempt}; then, with some careful analysis and some tools on MGF (Moment Generating Function) of Binomial and Multinomial from Section \ref{ssec:tools:mgf}, we can upper bound the latter expression in \eqref{equation:apply_MGF_min_square_bin}, which finally gives us the sample complexity lower bound.}

\noindent\textbf{Details.}
We here proceed with the proof of Lemma~\ref{theorem:lower_bound_uniformity_testing_on_Bayes_Net}, starting with some convenient notations; some of the technical lemmas and facts used here are stated and proven in Section~\ref{ssec:tools:mgf}. 
Let $\theta = (\lambda, \mu_1, \ldots, \mu_{2^{d-1}})$, where each $\mu_k\in\{\pm 1\}^{N/2}$, and let $P_{\theta}$ be
the distribution for the mixture of trees construction from Definition~\ref{def:mixture:trees}. The
following denotes the \emph{matching count} between $(\lambda, \mu)$ and $x$ as the quantity
\[ 
c (\lambda, \mu, x) \eqdef  \mleft| \mleft\{ (i, j) \in \{d, \dots, n\}^2
    : \exists 1\leq k \leq \tfrac{N}{2}, \lambda_k = (i, j) \infixand (- 1)^{x_i + x_j}
    = (- 1)^{\mu_{\iota(x),k}} \mright\} \mright| . 
\]
We will also introduce an analogous quantity with an ``offset'', for $x_{\tmop{ch}} = (x_{d},
\ldots, x_n)$, referring exclusively to the child nodes of $x$ (i.e., the last $N$ nodes, which are the ``children'' of the first $d-1$ ``pointer nodes'' in our construction),
\begin{align}
&c_{\tmop{ch}} (\lambda, \mu, x_{\tmop{ch}}) \eqdef\label{def:ch}\\ 
  &\mleft| \mleft\{ (i, j) \in \{d, \dots, n\}^2 : \exists 1\leq k \leq \tfrac{N}{2},
  \lambda_k = (i - d + 1, j - d + 1) \infixand (- 1)^{x_i + x_j} = (- 1)^{\mu_{\iota(x),k}}
  \mright\} \mright| \,. \notag
\end{align}
To denote the parameters of the ``mixture of trees,'' we write $\theta_i \eqdef (\lambda, \mu_i)$ (for $i\in[D]$), recalling that the matching parameter $\lambda$ is common to all $D$ tree components. Since each $\mu_i$
corresponds to one of the values of $(x_1, \ldots, x_{d-1}) \in \{0, 1\}^{d-1}$, we as before use 
$\iota\colon \{0,1\}^{d-1} \to [D]$ to denote the indexing function (so that, for instance, $\iota(x_1 =
\cdots = x_{d-1} = 0) = 0$). We finally introduce three more quantities, related to the matching and orientations parameters across the $D$ components of the mixture:
\begin{align*}
A_{\theta_i, \theta_i'} &\eqdef  \{(s, t) \in \{1,\dots,N / 2\}^2 :
    \lambda [s] = \lambda' [t], \mu_i [s] = \mu_i' [t] \} \tag{common
    pairs, same orientation} \\
B_{\theta_i, \theta_i'} &\eqdef  \{(s, t) \in \{1,\dots,N / 2\}^2 :
    \lambda [s] = \lambda' [t], \mu_i [s] \neq \mu_i' [t] \}
    \tag{common pairs, different orientation} \\
C_{\theta_i, \theta_i'} & \eqdef 
    (\lambda \cup \lambda') \setminus (A \cup B)  \tag{pairs unique to
    $\theta$ or $\theta'$} 
\end{align*}
For ease of notation, we define $A_i \eqdef A_{\theta_i, \theta_i'}, B_i \eqdef
B_{\theta_i, \theta_i'} \infixand C_i \eqdef C_{\theta_i, \theta_i'}$; and note that $C_i=C_1$, as it only depends on $\lambda$ (not on the orientation $\mu_i$).\label{def:AB:lb}

  To prove the indistinguishability, we will bound the squared total variation distance (or equivalently,
squared $\ell_1$) distance between the distributions of $m$ samples from (the uniform mixture of) $P_{\theta}$ and $U$ by a small constant; that is, between $Q \eqdef \mathbb{E}_{\theta}
[P_{\theta}^{\otimes m}]$ and $U^{\otimes m}$. From Ingster's method (see, e.g., \cite[Lemma~III.8.]{acharya2018inference}), by using chi-square divergence as an intermediate step we get
\begin{align}
  \norm{Q - U^{\otimes m} }_1^2 & \leq d_{\chi^2} (Q, U^{\otimes m})  
  = \mathbb{E}_{\theta, \theta'} [(1 + \tau (\theta, \theta'))^m] - 1 \label{eq:Ingster:restated}
  \text{,}
\end{align}
where $\tau (\theta, \theta') \eqdef \mathbb{E}_{x \sim U} \mleft[ \mleft(
\frac{P_{\theta} (x) - U (x)}{U (x)} \mright) \mleft( \frac{P_{\theta'} (x) - U
(x)}{U (x)} \mright) \mright]$. 
In order to get a handle on this quantity $\tau (\theta, \theta')$, we start by writing the expression for the density $P_{\theta}$ (for a given parameter $\theta$ of the mixture of trees). For any  $x\in\{0,1\}^n$, recalling Item~\ref{def:mixture:trees:item4} of Definition~\ref{def:mixture:trees},
\begin{eqnarray*}
  P_{\theta} (x) & = & P_{\theta} (x_{d}, \ldots, x_n \mid x_1, \ldots, x_{d-1}) U_{d-1}
  (x_1, \ldots, x_{d-1})\\
  & = & \frac{1}{2^{d-1}} P_{\lambda, \mu_{\iota(x)}} (x_{d}, \ldots, x_n) \\
  & = & \frac{1}{2^{d-1}} \cdot \frac{1}{2^N}  (1 + 4 \delta)^{c (\lambda, \mu_{\iota(x)}, x)}  (1 -
  4 \delta)^{\frac{N}{2} - c (\lambda, \mu_{\iota(x)}, x)}\\
  & = & \frac{1}{2^n}  (1 + 4 \delta)^{c (\lambda, \mu_{\iota(x)}, x)}  (1 - 4
  \delta)^{\frac{N}{2} - c (\lambda, \mu_{\iota(x)}, x)} .
\end{eqnarray*}
Substituting this in the definition of $\tau$, we get
\begin{align}
  \tau (\theta,\theta')
  & = \mathbb{E}_{x \sim U}  \mleft[ \mleft( \frac{P_{\theta} (x)}{U (x)} - 1
  \mright)  \mleft( \frac{P_{\theta'} (x)}{U (x)} - 1 \mright) \mright]\notag\\
  & = \mathbb{E}_{x \sim U}  \mleft[ \mleft( (1 - 4 \delta)^{\frac{N}{2}}
  \mleft( \frac{1 + 4 \delta}{1 - 4 \delta} \mright)^{c (\lambda, \mu_{\iota(x)},
  x)} - 1 \mright)  \mleft( (1 - 4 \delta)^{\frac{N}{2}} \mleft( \frac{1 + 4
  \delta}{1 - 4 \delta} \mright)^{c (\lambda', \mu_{\iota(x)}', x)} - 1 \mright)
  \mright] \notag\\
  &= 1 +(1-4\delta)^N\mathbb{E}_{x \sim U} \mleft[ z_0^{c (\lambda, \mu_{\iota(x)}, x) + c (\lambda', \mu_{\iota(x)}', x)} \mright] \notag\\
  &\quad- (1 - 4 \delta)^{\frac{N}{2}}\mathbb{E}_{x \sim U} \mleft[ z_0^{c (\lambda, \mu_{\iota(x)}, x)}
  \mright] -(1 - 4 \delta)^{\frac{N}{2}} \mathbb{E}_{x \sim U} \mleft[ z_0^{c (\lambda', \mu_{\iota(x)}', x)}
  \mright] \,, \label{eq:calculation:tau}
\end{align}
where $z_0 \eqdef \frac{1 + 4
\delta}{1 - 4 \delta}$. 
As $x \sim U_N$, for fixed $\lambda,\mu$ we have that $c_{\tmop{ch}} (\lambda, \mu, x)$ follows a $\tmop{Bin}
\mleft( \frac{N}{2}, \frac{1}{2} \mright)$ distribution; recalling the expression of the Binomial distribution's probability
generating function, we then have
\begin{eqnarray*}
  (1 - 4 \delta)^{\frac{N}{2}} \mathbb{E}_{x \sim U} [z_0^{c (\lambda, \mu_{\iota
  (x)}, x)}] & = & (1 - 4 \delta)^{\frac{N}{2}} \mathbb{E}_{\orgvec{x}_{1} \sim U_{d-1}}
  \mleft[ \mathbb{E}_{\orgvec{x}_2 \sim U_N} \mleft[ z_0^{c_{\tmop{ch}} (\lambda, \mu_{\iota
  (\orgvec{x}_1)}, \orgvec{x}_2)} \mright] \mright]\\
  & = & (1 - 4 \delta)^{\frac{N}{2}}  \frac{1}{D} \Bigg\{ \sum_{i = 1}^D
  \mathbb{E}_{m \sim \tmop{Bin} \mleft( \frac{N}{2}, \frac{1}{2} \mright)}
  [z_0^m] \Bigg\} \\
  & = & (1 - 4 \delta)^{\frac{N}{2}}  \mleft(
  \frac{1 + z_0}{2} \mright)^{\frac{N}{2}} = 1 \text{.}
\end{eqnarray*}
Using this to simplify the last two terms of~\eqref{eq:calculation:tau}, we obtain
\begin{align}
  1 + \tau (\theta, \theta') 
  & = (1 - 4 \delta)^N \mathbb{E}_{x \sim U}  [z_0^{c (\lambda, \mu_{\iota(x)}, x) + c (\lambda', \mu_{\iota(x)}', x)}] \notag\\
  & = \frac{1}{D}  \Bigg\{ \sum_{i = 1}^D (1 - 4 \delta)^N \mathbb{E}_{x \sim U_N}  [ z_0^{c_{\tmop{ch}} (\lambda, \mu_i, x) + c_{\tmop{ch}}
  (\lambda', \mu_i', x)}] \Bigg\} \notag\\
  & = \frac{1}{D}  \Bigg\{ \sum_{i = 1}^D (1 - 4 \delta)^N \cdot z_0^{|B_i |}
  \mathbb{E}_{\alpha \sim \tmop{Bin} \mleft( |A_i |, \frac{1}{2} \mright)}
  [z_0^{2 \alpha}]  \prod_{\sigma_i : | \sigma_i | \geq 4}
  \mathbb{E}_{\alpha \sim \mathcal{B} (\sigma_i)}
  [z_0^{\alpha}] \Bigg\} \notag\\
  & = \frac{1}{D}  \Bigg\{ \sum_{i = 1}^D (1 - 4 \delta)^N \cdot z_0^{|B_i |}
  \mleft( \frac{1 + z_0^2}{2} \mright)^{|A_i |}  \prod_{\sigma_i : | \sigma_i |
  \geq 4} \mathbb{E}_{\alpha \sim \mathcal{B} (\sigma_i)}
  [z_0^{\alpha}] \Bigg\} , \label{eq:1+tau:restated}
\end{align}
where the product is taken over all cycles in the multigraph $G_{\theta,\theta'}$ induced by the two matchings; and, given a cycle $\sigma$, $\mathcal{B} (\sigma)$ is the probability distribution defined as follows. Say that a cycle $\sigma$ is \emph{even} (resp., \emph{odd}) if the number of edges with weight $1$ along $\sigma$ is even (resp., odd); that is, a cycle is even or odd depending on whether the number of negatively correlated pairs along the cycle is an even or odd number. 
If $\sigma$ is an \emph{even cycle}, then $\mathcal{B} (\sigma)$ is a Binomial with parameters $|\sigma|$ and $1/2$, conditioned on taking even values. Similarly, if $\sigma$ is an  \emph{odd cycle}, $\mathcal{B} (\sigma)$ is a Binomial with parameters $|\sigma|$ and $1/2$, conditioned on taking odd values. 
It follows that $\mathbb{E}_{\alpha \sim
  \mathcal{B} (\sigma)} [z_0^{\alpha}]$ is given by the following expression.
  \[ \mathbb{E}_{\alpha \sim \mathcal{B} (\sigma)}
      [z_0^{\alpha}] = \mleft\{\begin{array}{ll}
        \mathbb{E}_{\alpha \sim \tmop{Bin} \mleft( | \sigma |, \frac{1}{2}
        \mright)} \mleft[ z^{\alpha} \mid \alpha \text{ even} \mright] = \frac{(1 +
        z_0)^{| \sigma |} + (1 - z_0)^{| \sigma |}}{2^{| \sigma |}} & \text{,
        if } \sigma \text{ is even}\\
        \mathbb{E}_{\alpha \sim \tmop{Bin} \mleft( | \sigma |, \frac{1}{2}
        \mright)} \mleft[ z^{\alpha} \mid \alpha \text{ odd} \mright] = \frac{(1 +
        z_0)^{| \sigma |} - (1 - z_0)^{| \sigma |}}{2^{| \sigma |}} & \text{,
        if } \sigma \text{ is odd}
      \end{array}\mright. \text{.} \]
  Denote
  \begin{align*}
      \mathcal{S}_{e, i} &\eqdef \mathcal{S}_e (\theta_i, \theta_i') = \mleft\{ \sigma
      \in \tmop{cycle} (\theta_i, \theta_i') : | \sigma | \geq 4, \sigma
      \text{ is even} \mright\}, \\
      \mathcal{S}_{o, i} &\eqdef\mathcal{S}_o (\theta_i, \theta_i') = \mleft\{ \sigma
      \in \tmop{cycle} (\theta_i, \theta_i') : | \sigma | \geq 4, \sigma
      \text{ is odd} \mright\} .
  \end{align*} 
  We will often drop $\lambda, \mu$ or $i$, when clear from context. We expand $\mathbb{E}_{\alpha \sim \mathcal{B} (\sigma)}
[z_0^{\alpha}]$ as follows: 
\begin{eqnarray}
  \prod_{\sigma : | \sigma | \geq 4} \mathbb{E}_{\alpha \sim
  \mathcal{B} (\sigma)} [z_0^{\alpha}]
  & = & \prod_{\sigma : | \sigma | \geq 4, \tmop{even}} \frac{(1 +
  z_0)^{| \sigma |} + (1 - z_0)^{| \sigma |}}{2^{| \sigma |}}  \prod_{\sigma :
  | \sigma | \geq 4, \tmop{odd}} \frac{(1 + z_0)^{| \sigma |} - (1 -
  z_0)^{| \sigma |}}{2^{| \sigma |}} \notag\\
  & = & \prod_{\substack{\sigma : | \sigma | \geq 4\\ \tmop{even}}} \frac{(1 +
  z_0)^{| \sigma |}}{2^{| \sigma |}} \!\! \mleft( 1 + \mleft( \frac{1 - z_0}{1 +
  z_0} \mright)^{| \sigma |} \mright)
    \prod_{\substack{\sigma : | \sigma | \geq 4 \notag\\\tmop{odd}}} \!\!\frac{(1 + z_0)^{|
  \sigma |}}{2^{| \sigma |}}  \mleft( 1 - \mleft( \frac{1 - z_0}{1 + z_0}
  \mright)^{| \sigma |} \mright) \notag\\
  & = & \prod_{\sigma : | \sigma | \geq 4} \frac{(1 + z_0)^{| \sigma
  |}}{2^{| \sigma |}}  \prod_{\sigma \in \mathcal{S}_e} (1 + (- 4 \delta)^{|
  \sigma |})  \prod_{\sigma \in \mathcal{S}_o} (1 - (- 4 \delta)^{| \sigma
  |}) \notag\\
  & = & \frac{(1 + z_0)^{\sum_{\sigma: |\sigma|\geq 4} |\sigma|}}{2^{\sum_{\sigma: |\sigma|\geq 4} |\sigma|}}  \prod_{\sigma \in \mathcal{S}_e} (1 + (- 4 \delta)^{|
  \sigma |})  \prod_{\sigma \in \mathcal{S}_o} (1 - (- 4 \delta)^{| \sigma
  |}) \notag\\
  & = & \frac{(1 + z_0)^{|C|}}{2^{|C|}}  \prod_{\sigma \in \mathcal{S}_e} (1
  + (- 4 \delta)^{| \sigma |})  \prod_{\sigma \in \mathcal{S}_o} (1 - (- 4
  \delta)^{| \sigma |}) \label{eq:multigraph_cycle_expectation}
\end{eqnarray}
where for the last equality we used that $\sum_{\sigma: |\sigma|\geq 4} |\sigma| = |C|$. We now improve upon the analogous analysis from~\cite[Claim~12]{CanonneDKS20} to
obtain a better upper bound for the remaining terms; indeed, the bound they derived is $e^{O(\eps^4/n)}$, which was enough for their purposes but not ours (since it does not feature any dependence on $d$). Let
$z_1 \eqdef \frac{1 + (4 \delta)^2}{1 - (4 \delta)^2}$. In view of using the above expression to bound~\eqref{eq:1+tau:restated}, we first simplify (part of) the summands of~\eqref{eq:1+tau:restated} by using the fact that $2|A_i|+2|B_i|+|C_i|=N$ for all $i$, and following the same computations as in~\citet{CanonneDKS20}:
\begin{align*}
  (1 - 4 \delta)^N  z_0^{|B_i |} &
  \mleft( \frac{1 + z_0^2}{2} \mright)^{|A_i |}  \frac{(1 + z_0)^{|C_i|}}{2^{|C_i|}} \\
  &= (1 - 4 \delta)^N  z_0^{|B_i |} \mleft( \frac{1 + z_0^2}{2}
  \mright)^{|A_i |} \frac{(1 + z_0)^{|C_i |}}{2^{|C_i |}}\\
  & = ((1 - 4 \delta)^2 z_0)^{|B_i |}  \mleft( (1 -
  4 \delta)^2  \frac{1 + z_0^2}{2} \mright)^{|A_i |}
  \underbrace{\mleft( (1 - 4 \delta) \frac{1 + z_0}{2} \mright)^{|C_i |}}_{=
  1}\\
  & =  (1 - (4 \delta)^2)^{|B_i |}  (1 + (4 \delta)^2)^{|A_i |}\\
  & =  (1 - (4 \delta)^2)^{|A_i | + |B_i |} z_1^{|A_i |}
  =  (1 - (4 \delta)^2)^{|A_1 | + |B_1 |} z_1^{|A_i |} \,,
\end{align*}
where the last equality uses the fact that the sum $|A_i | + |B_i |$ only depends on the matchings $\lambda,\lambda'$ (not the orientations $\mu_i, \mu_i'$), and thus is independent of $i$. 
Plugging this simplification into~\eqref{eq:1+tau:restated}, and letting $R \eqdef |A_1|+|B_1| \leq N/2$ for convenience, we get
\begin{align*}
  1 + \tau (\theta, \theta')
  & = \frac{1}{D} \Bigg\{ \sum_{i = 1}^D (1 - 4 \delta)^N \cdot z_0^{|B_i |}
  \mleft( \frac{1 + z_0^2}{2} \mright)^{|A_i |}  \prod_{\sigma_i}
  \mathbb{E}_{\alpha \sim \mathcal{B}(\sigma_i)} [z_0^{\alpha}] \Bigg\}\\
  & = (1 - (4 \delta)^2)^{R} \cdot  \frac{1}{D}  \Bigg\{ \sum_{i = 1}^D 
  z_1^{|A_i |}  \prod_{\sigma \in \mathcal{S}_{e, i}} (1 + (- 4
  \delta)^{| \sigma |}) \prod_{\sigma \in \mathcal{S}_{o, i}} (1 - (- 4
  \delta)^{| \sigma |}) \Bigg\} \text{.}
\end{align*}
Next, we compute the expectation after raising the above to the power $m$.
\begin{align}
  \mathbb{E}_{\theta, \theta'} &[(1 + \tau (\theta, \theta'))^m] \notag\\
  & = \mathbb{E}_{\theta, \theta'} \mleft[ \mleft( (1 - (4 \delta)^2)^{R} 
  \frac{1}{D} \Bigg\{ \sum_{i = 1}^D  z_1^{|A_i |}  \prod_{\sigma \in
  \mathcal{S}_{e, i}} (1 + (- 4 \delta)^{| \sigma |}) \prod_{\sigma \in
  \mathcal{S}_{o, i}} (1 - (- 4 \delta)^{| \sigma |}) \Bigg\} \mright)^m
  \mright] \notag\\
  & = \mathbb{E}_{\lambda, \lambda'}  \mleft[ (1 - (4 \delta)^2)^{mR}
  \mathbb{E}_{\orgvec{\mu}, \orgvec{\mu}'} \mleft[ \mleft( \frac{1}{D} \Bigg\{ \sum_{i = 1}^D 
  z_1^{|A_i |}  \prod_{\sigma \in \mathcal{S}_{e, i}} (1 + (- 4 \delta)^{|
  \sigma |}) \prod_{\sigma \in \mathcal{S}_{o, i}} (1 - (- 4 \delta)^{| \sigma
  |}) \Bigg\} \mright)^m \mright] \mright]. \label{eq:boundexpect:partial}
\end{align}
The quantity inside the inner expectation is quite unwieldy; to proceed, we will rely on the following identity, which lets us bound the two product terms:
\begin{equation}
\label{fact:intermedia_upper_bound_on_multigraph_cycle_term:restated}
  \prod_{\sigma \in \mathcal{S}_e} (1 + (- 4 \delta)^{| \sigma |}) 
  \prod_{\sigma \in \mathcal{S}_o} (1 - (- 4 \delta)^{| \sigma |})
  \leq e^{c'\frac{\eps^5}{N^{3 / 2}}}\!\!\!\!\!\!  \prod_{\sigma \in
  \mathcal{S}_e : | \sigma | = 4} \!\!\!\!\!\!(1 + (- 4 \delta)^{| \sigma |}) 
  \!\!\!\!\!\!\prod_{\sigma \in \mathcal{S}_o : | \sigma | = 4} \!\!\!\!\!\!(1 - (- 4 \delta)^{|
  \sigma |})\,,
\end{equation}
for some absolute constant $c'>0$. We defer the proof of this inequality to Appendix~\ref{app:fact:intermedia_upper_bound_on_multigraph_cycle_term:proof}, and proceed assuming it. Note that as long as $D = {O}(n/\eps^6)$, we will have $e^{c'\cdot  m\eps^5 / N^{3 / 2}} \leq e^{128 m \eps^2 / (\sqrt{D} n)}$,\footnote{{As per the condition set in Lemma \ref{lemma:multinomial_truncated_MGF}, we will from now on assume that $n / \varepsilon^2 \geqslant n \geqslant 40 D$, which gives us $N^{3 / 2} \geqslant \left( n/2 \right)^{3 / 2} \geqslant \frac{n}{2}
\cdot (20 D)^{1 / 2} \geqslant 2 n \sqrt{D}$; and some more calculations give us $c'\frac{m \varepsilon^2}{N^{3 / 2}} \leqslant \frac{128
m \varepsilon^2}{\sqrt{D} n}$.}} and this restriction on $D$ is satisfied for the regime of parameters considered in our lower bound, $d = O(\log n)$.

Fix a pair $\lambda, \lambda'$; we have that the $|A_i |$'s are i.i.d.\ $\tmop{Bin}
(R, 1 / 2)$ random variables. We now introduce \[
R' \eqdef |\{\sigma_1 : | \sigma_1 | = 4\}| =
|\{\sigma_i : | \sigma_i | = 4\}|,
\] which is the random variable denoting how many cycles have length exactly 4. In particular, we have $R'
\leq \frac{N}{4}$, since $\sum_{\sigma: |\sigma|\geq 4} \sigma = |C | \leq
N$; more specifically, we have $R' \leq \frac{N - 2 R}{4} \leq \frac{N}{4}$.     
Further, define $\kappa_i$ as the number of cycles of length 4 which have an even total number of negative correlations; that is, the number of cycles $\sigma$ such that  $\mu_i, \mu_i'$ impose either 0, 2, or 4 negatively correlated pairs along that cycle.

Since $\mu, \mu'$ are uniformly distributed, being odd or even each has probability $1 / 2$, and thus $\kappa_i \sim \tmop{Bin} \mleft(
R', \frac{1}{2} \mright)$. Moreover, while $\kappa_i$ and $A_i$ both depend on $\mu_i, \mu_i'$, they by definition depend on disjoint subsets of those two random variables: thus, because each correlation parameter is chosen independently, we have that 
$\kappa_i$ and $A_i$ are independent conditioned on $(R, R')$. Now, recalling our setting of $z_2 = \frac{1 + (4 \delta)^4}{1
- (4 \delta)^4}$ and fixing a realization of $R,R'$, we have

 \begin{align}
  \mathbb{E}_{\orgvec{\mu}, \orgvec{\mu}'} & \mleft[ \mleft( \frac{1}{D} 
  \sum_{i = 1}^D z_1^{|A_i |}  \prod_{\sigma \in \mathcal{S}_e (i) : | \sigma
  | = 4} (1 + (4 \delta)^4) \prod_{\sigma_i \in \mathcal{S}_o (i) : | \sigma |
  = 4} (1 - (4 \delta)^4) \mright)^m \mright] \nonumber\\
  & =\mathbb{E}_{\orgvec{\mu}, \orgvec{\mu}'} \mleft[ \mleft( \frac{1}{D} 
  \sum_{i = 1}^D z_1^{|A_i |} (1 + (4 \delta)^4)^{\kappa_i} (1 - (4
  \delta)^4)^{R' - \kappa_i} \mright)^m \mright] \nonumber\\
  & = (1 - (4 \delta)^4)^{mR'} \mathbb{E}_{\orgvec{\mu}, \orgvec{\mu}'}
  \mleft[ \mleft( \frac{1}{D}  \sum_{i = 1}^D z_1^{|A_i |} z_2^{\kappa_i}
  \mright)^m \mright] \nonumber\\
  & \leq (1 - (4 \delta)^4)^{mR'} z_1^{\frac{mR}{2}} z_2^{\frac{mR'}{2}}
  \mathbb{E}_{\orgvec{\alpha}} \mleft[ \prod_{i = 1}^D (\cosh (2 \alpha_i
  \delta^2))^R  (\cosh (2 \alpha_i \delta^4))^{R'} \mright], 
  \label{eq:boundinnerexpect:partial_improve_attempt}
\end{align}
where \eqref{eq:boundinnerexpect:partial_improve_attempt} follows from the following lemma, whose proof we defer to the end of the section:
\begin{restatable*}[]{lemma}{scaryExpectation}
  \label{lemma:average_of_product_with_binomial_raise_by_m:restated_improve_attempt}
  There
  exists an absolute constant $\delta_0 \approx 0.96$ such that the following
  holds. Let $K \geq 1$ and $R_1, \ldots, R_K$ be integers, and $\delta_1,
  \ldots, \delta_K \in (0, \delta_0]$. Suppose that $\kappa_{j, 1}, \ldots,
  \kappa_{j, D} \sim \tmop{Bin} \mleft( R_j, \frac{1}{2} \mright)$, are i.i.d.,
  and mutually independent across $1 \leq j \leq K$, and $z_j \eqdef \frac{1 +
  \delta_j}{1 - \delta_j}$. Then
  \[ \mathbb{E} \mleft[ \mleft( \frac{1}{D}  \sum_{i = 1}^D \prod_{j = 1}^K
     z_j^{\kappa_{j, i}} \mright)^m \mright] \leq \mleft( \prod_{j = 1}^K
     z_j^{\frac{m}{2} R_j} \mright) \mathbb{E}_{\orgvec{\alpha}} \mleft[
     \prod_{i = 1}^D \prod^K_{j = 1} \cosh (2 \alpha_i \delta_j) \mright], \]
  where $(\alpha_1, \ldots, \alpha_D)$ follows a multinomial distribution with
  parameters $m$ and $(1/D,\dots,1/D)$.
\end{restatable*}
We now focus on the expectation on the right (last factor of the RHS of~\eqref{eq:boundinnerexpect:partial_improve_attempt}): using that $\cosh u \leq \min(e^{u^2/2}, e^u)$ for $u\geq 0$, we have, setting $\Delta\eqdef 1/\delta^2 = n/\eps^2$,
\begin{align}
  \mathbb{E}_{\orgvec{\alpha}} &\mleft[ \prod_{i = 1}^D (\cosh (2 \alpha_i
  \delta^2))^R  (\cosh (2 \alpha_i \delta^4))^{R'} \mright] \nonumber\\
  & \leq \mathbb{E}_{\orgvec{\alpha}} \mleft[ \prod_{i = 1}^D \min(e^{2 \alpha_i \delta^2 R}, e^{2\alpha_i^2 \delta^4 R}) e^{2 \alpha_i^2 \delta^8 R'} \mright] \nonumber\\
  & \leq \mathbb{E}_{\orgvec{\alpha}} \mleft[ \prod_{i = 1}^D e^{2\alpha_i \delta^2 R \mathbbm{1} [\alpha_i > \Delta]} e^{2 \alpha_i^2 \delta^4 R  \mathbbm{1} [\alpha_i \leq \Delta]} e^{2 \alpha_i^2 \delta^8 R'}
  \mright] \nonumber\\
  & \leq \mathbb{E} \mleft[ \prod_{i = 1}^D e^{8 \alpha_i
  \delta^2 R \mathbbm{1} [\alpha_i > \Delta]} \mright]^{1/4} \mathbb{E} \mleft[ \prod_{i =  1}^D e^{8 \alpha_i^2 \delta^4 R \mathbbm{1} [\alpha_i \leq \Delta]} \mright]^{1 / 4}  \mathbb{E} \mleft[ \prod_{i = 1}^D e^{4  \alpha_i^2 \delta^8 R'} \mright]^{1 / 2} \label{eq:three_scary_expactations_multinomial}
\end{align}
where the last step comes from the generalized H\"older inequality (or, equivalently, two applications of the Cauchy--Schwarz
inequality), and the threshold $\Delta$ was chosen as the value for which the term realizing the minimum changes. We first bound the product of the last two expectations:
\begin{align}
  \mathbb{E} \mleft[ \prod_{i = 1}^D e^{8 \alpha_i^2 \delta^4 R
  \mathbbm{1} [\alpha_i\leq \Delta]} \mright]^{1 / 4} &\mathbb{E} \mleft[ \prod_{i = 1}^D e^{4 \alpha_i^2 \delta^8 R'}
  \mright]^{1 / 2} \notag\\
  & \leq \mathbb{E} \mleft[ \prod_{i = 1}^D e^{8 \delta^4 R \min(\alpha_i^2, \Delta^2)} \mright]^{1 / 4} \mathbb{E} \mleft[
  \prod_{i = 1}^D e^{4 \alpha_i^2 \delta^8 R'} \mright]^{1 /
  2} \nonumber\\
  & \leq  (\mathbb{E} [e^{8 \min (\alpha_j^2, \Delta^2) \delta^4 R}])^{D /
  4} (\mathbb{E} [e^{4 \alpha_1^2 \delta^8 R'}])^{D / 2} 
  \label{equation:step_negative_association}\\
  & \leq  \exp \mleft( 32 \delta^4  \frac{m^2}{D} R \mright) \exp \mleft( 32\delta^8 \frac{m^2}{D} R' \mright) .  
  \label{equation:apply_MGF_min_square_bin}
\end{align}
where we applied negative association (see, e.g., \citet[Theorem 13]{Dubhashi96ballsand}) on both expectations for~\eqref{equation:step_negative_association};
and then got~\eqref{equation:apply_MGF_min_square_bin} by Lemmas~\ref{lemma:MGF_Binomial_square_bound_with_min} and~\ref{lemma:MGF_Binomial_square_bound} (for the latter, noting that $t m = 2 \delta^8 m R' \leq 1 / 16$; and, for the former, assuming with little loss of generality that $\eps \leq 1/(4\sqrt{2})$). 
Applying Lemma~\ref{lemma:bound_on_truncated_multinomial_MGF} to the first
(remaining) factor of the LHS above as $8\delta^2 R\leq 4$ and $D \geq \Omega(1)$, we get
\begin{align*}
  &\mathbb{E} \mleft[ \prod_{i = 1}^D e^{8 \alpha_i \delta^2 R
  \mathbbm{1} [\alpha_i > \Delta]} \mright]^{1/4} \mathbb{E} \mleft[ \prod_{i = 1}^D e^{8 \alpha_i^2 \delta^4 R \mathbbm{1} [\alpha_i \leq \Delta]} \mright]^{1/ 4}\mathbb{E} \mleft[ \prod_{i = 1}^D e^{4 \alpha_i^2
  \delta^8 R'} \mright]^{1 / 2}\\
  &\qquad \leq (1 + o (1)) \cdot \exp \mleft( 32 \delta^4  \frac{m^2}{D} R
  \mright) \exp \mleft( 32 \frac{m^2}{D} \delta^8 R' \mright)\\
  &\qquad = (1 + o(1))\exp (32C'^2 R),
\end{align*}
recalling that $R' \leq N/4\leq n/4$, and our assumption that $m \leq C'\sqrt{D}n/\eps^2$. {Combining~\eqref{eq:boundexpect:partial}, \eqref{fact:intermedia_upper_bound_on_multigraph_cycle_term:restated} and~\eqref{eq:boundinnerexpect:partial_improve_attempt}, what we showed is}
 \begin{align*}
  \mathbb{E}_{\theta, \theta'} [(1 + \tau (\theta, \theta'))^m]
  &\leq {(1 + o(1)) e^{128 \frac{m \varepsilon^2}{\sqrt{D} n}}}\mathbb{E}_{\lambda, \lambda'}  \mleft[ (1 - (4 \delta)^2)^{mR}(1 - (4 \delta)^4)^{mR'} z_1^{\frac{mR}{2}} z_2^{\frac{mR'}{2}}e^{32C'^2 \cdot R} \mright]\\
  &= {(1 + o(1)) e^{128 \cdot C'}}\mathbb{E}_{\lambda, \lambda'}  \mleft[ (1 - (4 \delta)^4)^{\frac{mR}{2}}(1 - (4 \delta)^8)^{\frac{mR'}{2}}e^{32C'^2 \cdot R} \mright]\\
  &\leq {(1 + o(1)) e^{128 \cdot C'}}\mathbb{E}_{\lambda, \lambda'}  \mleft[e^{32C'^2 \cdot R} \mright]\,,
\end{align*}
where the equality follows from the definition of $z_1,z_2$. To conclude, we will use the fact that, for every $k\geq 0$,
\begin{equation}
  \label{eq:tail:bound:R}
  \Pr [R > k] \leq \frac{1}{k!},
\end{equation}
which was established in~\citet[p.46]{CanonneDKS20}. By summation by parts, one can show that this implies
\[
    \mathbb{E}_{R}  \mleft[ e^{\alpha R}  \mright] \leq 1+(1-e^{-\alpha}) (e^{e^\alpha}-1)
  \xrightarrow[\alpha\to 0]{} 1+ \alpha(e-1)
\]
for any $\alpha>0$, and so, in our case,
%
 \begin{align}
  \mathbb{E}_{\theta, \theta'} [(1 + \tau (\theta, \theta'))^m]
  &\leq {(1 + o(1)) e^{128 \cdot C'}}\mleft( 1+(1-e^{-32C'^2}) (e^{e^{32C'^2}}-1) \mright)
\end{align}
In particular, the RHS can be made arbitrarily close to $1$ by choosing a small enough value for the constant $C'$ (in the bound for $m$). By~\eqref{eq:Ingster:restated}, this implies the desired bound on $\norm{Q - U^{\otimes m} }_1^2$, and thus establishes Lemma~\ref{theorem:lower_bound_uniformity_testing_on_Bayes_Net}. \qed

\paragraph{The remaining technical lemma.} It only remains to establish Lemma~\ref{lemma:average_of_product_with_binomial_raise_by_m:restated_improve_attempt}, which we do now.
\scaryExpectation
\begin{proof}[Proof
of~Lemma~\ref{lemma:average_of_product_with_binomial_raise_by_m:restated_improve_attempt}]
We will require the following simple fact, which follows from the multinomial theorem and the definition of the multinomial distribution:
\begin{fact}\label{fact:power_of_average_equates_multinomial_expect}Let $D$ be a
positive integer and $m$ be a non-negative integer. For any $x_1, \ldots, x_D
\in \mathbb{R}$, we have
\[ \mleft( \frac{1}{D}  \sum_{i = 1}^D x_i \mright)^m =\mathbb{E}_{\alpha_1,
   \ldots, \alpha_D} \mleft[ \prod_{i = 1}^D x_i^{\alpha_i} \mright], \]
where $(\alpha_1, \ldots, \alpha_D)$ follows a multinomial distribution with
parameters $m$ and $(1/D, \ldots, 1/D)$.
\end{fact}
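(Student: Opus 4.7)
The plan is to recognize the claimed identity as essentially a repackaging of the multinomial theorem. Since the statement is an algebraic identity over $\mathbb{R}$ (with no convergence or measurability concerns, as $m$ is a fixed non-negative integer), a direct expansion should work cleanly.

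First, I would apply the multinomial theorem to the left-hand side, writing
\[
\mleft( \frac{1}{D} \sum_{i=1}^D x_i \mright)^m
= \frac{1}{D^m} \sum_{\substack{\alpha_1,\dots,\alpha_D \geq 0 \\ \alpha_1+\cdots+\alpha_D = m}} \binom{m}{\alpha_1,\dots,\alpha_D} \prod_{i=1}^D x_i^{\alpha_i},
\]
where $\binom{m}{\alpha_1,\dots,\alpha_D} = \frac{m!}{\alpha_1!\cdots \alpha_D!}$ is the multinomial coefficient. Next, I would split the prefactor $1/D^m$ as $\prod_{i=1}^D (1/D)^{\alpha_i}$ (legitimate since $\sum_i \alpha_i = m$), yielding
\[
\mleft( \frac{1}{D} \sum_{i=1}^D x_i \mright)^m
= \sum_{\substack{\alpha_1,\dots,\alpha_D \geq 0 \\ \alpha_1+\cdots+\alpha_D = m}} \binom{m}{\alpha_1,\dots,\alpha_D} \prod_{i=1}^D (1/D)^{\alpha_i} \cdot \prod_{i=1}^D x_i^{\alpha_i}.
\]

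Finally, I would recall that the probability mass function of the multinomial distribution with parameters $m$ and $(1/D,\dots,1/D)$ is exactly $\binom{m}{\alpha_1,\dots,\alpha_D} \prod_{i=1}^D (1/D)^{\alpha_i}$ on the simplex $\{\alpha : \sum_i \alpha_i = m\}$, so the displayed sum is precisely $\mathbb{E}_{\alpha_1,\dots,\alpha_D}\mleft[\prod_{i=1}^D x_i^{\alpha_i}\mright]$, establishing the identity. There is no real obstacle here; the only thing to be mindful of is the convention $0^0 = 1$, which is used implicitly to handle terms where some $x_i = 0$ and the corresponding $\alpha_i = 0$, but this is the standard convention employed throughout the paper.
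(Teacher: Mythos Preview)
Your proof is correct and is essentially identical to the paper's own argument, which simply notes that the identity follows from the multinomial theorem together with the definition of the multinomial distribution's probability mass function. There is nothing to add.
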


We now apply Fact~\ref{fact:power_of_average_equates_multinomial_expect} inside
the expectation of the LHS of the statement. Note that the sets of random variables $\orgvec{\alpha} =
\{\alpha_1, \ldots, \alpha_D \}$, $\orgvec{\kappa}_1 = \{\kappa_{1, 1},
\ldots, \kappa_{1, D} \}, \ldots, \orgvec{\kappa}_K = \{\kappa_{K, 1}, \ldots,
\kappa_{K, D} \}$ are mutually independent, since $\orgvec{\alpha}$ are a
set of auxiliary random variables derived from an averaging operation and by
the assumption on $\orgvec{\kappa}_j$; and we have that $\kappa_{j, 1},
\ldots, \kappa_{j, D}$ are i.i.d.,
\begin{align*}
  \mathbb{E} \mleft[ \mleft( \frac{1}{D}  \sum_{i = 1}^D \prod^K_{j = 1}
  z_j^{\kappa_{j, i}} \mright)^m \mright] & =  \mathbb{E} \mleft[
  \mathbb{E}_{\alpha_1, \ldots, \alpha_D} \mleft[ \prod_{i = 1}^D \mleft(
  \prod^K_{j = 1} z_j^{\kappa_{j, i}} \mright)^{\alpha_i} \mright] \mright]\\
  & = \mathbb{E}_{\orgvec{\alpha}, \orgvec{\kappa}_j} \mleft[ \prod_{i =
  1}^D \prod^K_{j = 1} z_j^{\alpha_i \kappa_{j, i}} \mright]\\
  & = \mathbb{E}_{\orgvec{\alpha}} \mleft[ \mathbb{E}_{\orgvec{\kappa}_j}
  \mleft[ \prod_{i = 1}^D \prod^K_{j = 1} z_j^{\alpha_i \kappa_{j, i}} \mright]
  \mright]\\
  & = \mathbb{E}_{\orgvec{\alpha}} \mleft[ \mleft[ \prod_{i = 1}^D \prod^K_{j
  = 1} \mathbb{E}_{\orgvec{\kappa}_j} [z_j^{\alpha_i \kappa_{j, i}}] \mright]
  \mright]\\
  & = \mathbb{E}_{\orgvec{\alpha}} \mleft[ \mleft[ \prod_{i = 1}^D \prod^K_{j
  = 1} \mleft( \frac{1 + z_j^{\alpha_i}}{2} \mright)^{R_j} \mright] \mright] \tag{Probability-Generating Function of a Binomial}\\
  & = \mathbb{E}_{\orgvec{\alpha}} \mleft[ \prod_{i = 1}^D \prod^K_{j = 1}
  z_j^{\frac{\alpha_i R_j}{2}} \mleft( \frac{z_j^{- \alpha_i / 2} +
  z_j^{\alpha_i / 2}}{2} \mright)^{R_j} \mright]\\
  & = \mleft( \prod^K_{j = 1} z_j^{\frac{mR_j}{2}} \mright)
  \mathbb{E}_{\orgvec{\alpha}} \mleft[ \prod_{i = 1}^D \prod^K_{j = 1} \mleft(
  \frac{z_j^{- \alpha_i / 2} + z_j^{\alpha_i / 2}}{2} \mright)^{R_j} \mright] .
\end{align*}
Next, we will simplify the
expression left inside by upper bounding it, using the fact that, given our
assumption on $\delta_j$ being bounded above by $\delta_0$, we have $z_j =
\frac{1 + \delta_j}{1 - \delta_j} \leq e^{4 \delta_j}$. Thus,
\begin{align*}
  \mathbb{E}_{\orgvec{\alpha}} \mleft[ \prod_{i = 1}^D \prod^K_{j = 1} \mleft(
  \frac{z_j^{- \alpha_i / 2} + z_j^{\alpha_i / 2}}{2} \mright)^{R_j} \mright] &
  \leq \mathbb{E}_{\orgvec{\alpha}} \mleft[ \prod_{i = 1}^D \prod^K_{j =
  1} \mleft( \frac{e^{- 2 \alpha_i \delta_j} + e^{2 \alpha_i \delta_j}}{2}
  \mright)^{R_j} \mright]\\
  & = \mathbb{E}_{\orgvec{\alpha}} \mleft[ \prod_{i = 1}^D \prod^K_{j = 1}
  (\cosh (2 \alpha_i \delta_j))^{R_j} \mright]
\end{align*}
as claimed.
\end{proof}

\subsection{Product Distributions Are Far from Mixture of Trees (Lemma~\ref{lemma;distance:mixture:of:trees})}
\label{ssec:mixturetree:farness}
In this subsection, we outline the proof of Lemma \ref{lemma;distance:mixture:of:trees}.
Our argument starts with Lemma~\ref{lemma:technical:TV_lower_bound_of_its_marginals}, which allows us to relate the total variation distance between the mixture and the product of its marginals to a simpler quantity, the difference between two components of this mixture. 
\begin{lemma}
\label{lemma:technical:TV_lower_bound_of_its_marginals}
Let $p$ be a distribution on $\{0, 1\}^N \times \{0, 1\}^M$ (with $N,M\geq 2$), and denote its
marginals on $\{0, 1\}^N$, $\{0, 1\}^M$ by $p_1, p_2$ respectively. Then, if $p_1$ is uniform, %
\[ \begin{array}{l}
    d_{\tmop{TV}} (p, p_1 \otimes p_2)
  \end{array} \geq d_{\tmop{TV}} (p (\cdummy \mid x_1 = 0), p (\cdummy  \mid x_1
  = 1))\,. \]
\end{lemma}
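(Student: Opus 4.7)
The plan is to exploit the uniformity of $p_1$ by constructing an explicit witness set for the dual (supremum-over-sets) formulation of total variation distance. Since $p_1$ is the uniform distribution on $\{0,1\}^N$, two structural observations follow immediately: first, the marginal of $p$ on the first coordinate $x_1$ satisfies $\Pr_p[x_1 = 0] = \Pr_p[x_1 = 1] = 1/2$; second, and crucially, the conditional distribution of the product $p_1 \otimes p_2$ given $x_1 = 0$ or given $x_1 = 1$ is the same in both cases, namely $U_{N-1} \otimes p_2$ (on the remaining coordinates). This asymmetry---$p$ can have very different conditionals at the two values of $x_1$, whereas $p_1 \otimes p_2$ cannot---is exactly what I plan to exploit.

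Next, I would let $B^{\ast} \subseteq \{0,1\}^{N-1} \times \{0,1\}^M$ be a set achieving the supremum in the dual form of $d_{\tmop{TV}}(p(\cdot \mid x_1 = 0), p(\cdot \mid x_1 = 1))$, so $p(B^{\ast} \mid x_1 = 0) - p(B^{\ast} \mid x_1 = 1) = d_{\tmop{TV}}(p(\cdot \mid x_1 = 0), p(\cdot \mid x_1 = 1))$. I would then form the ``flipped'' test set $S \eqdef (\{0\} \times B^{\ast}) \cup (\{1\} \times (B^{\ast})^c) \subseteq \{0,1\}^N \times \{0,1\}^M$ and compute both sides. Using observation~(i),
\[
  p(S) = \tfrac{1}{2}\,p(B^{\ast} \mid x_1 = 0) + \tfrac{1}{2}\bigl(1 - p(B^{\ast} \mid x_1 = 1)\bigr) = \tfrac{1}{2} + \tfrac{1}{2}\,d_{\tmop{TV}}(p(\cdot \mid x_1 = 0), p(\cdot \mid x_1 = 1)),
\]
while observation~(ii) yields $(p_1 \otimes p_2)(S) = \tfrac{1}{2}(U_{N-1} \otimes p_2)(B^{\ast}) + \tfrac{1}{2}(U_{N-1} \otimes p_2)((B^{\ast})^c) = \tfrac{1}{2}$. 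Subtracting gives $d_{\tmop{TV}}(p, p_1 \otimes p_2) \geq p(S) - (p_1 \otimes p_2)(S) = \tfrac{1}{2}\,d_{\tmop{TV}}(p(\cdot \mid x_1 = 0), p(\cdot \mid x_1 = 1))$.

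The main obstacle I foresee is that the natural witness construction above produces a multiplicative factor of $\tfrac{1}{2}$ that is absent from the stated bound. To close this gap I would try a few sharpening routes: (a) symmetrize over the two choices $\{0\}\times B^{\ast} \cup \{1\}\times (B^{\ast})^c$ versus its complement, and relate the resulting $\ell_1$ difference more carefully to the TV distance of the conditionals; (b) reexamine whether ``$x_1$'' is intended as a single first coordinate (as used above) or as the whole first block in $\{0,1\}^N$, since the latter interpretation with $\Pr[x_1 = v] = 1/2^N$ changes the bookkeeping; and (c) verify whether the downstream application in Lemma~\ref{lemma;distance:mixture:of:trees} only needs the statement up to an absolute constant (which would be absorbed into the $\Omega(\cdot)$ bound there), in which case the factor-of-$\tfrac{1}{2}$ version I obtain is already sufficient. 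Resolving which of these is the right fix is the only non-routine aspect; the structural argument via the flipped test set is the core mechanism either way.
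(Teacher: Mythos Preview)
Your argument is correct, and your instinct about the factor of $\tfrac12$ is well-founded. The paper does not actually include a proof of this lemma (it is stated in Section~\ref{ssec:mixturetree:farness} and used, but never proven), so there is nothing to compare against at the level of technique; your flipped-test-set construction is the natural approach.

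Regarding the missing constant: the inequality \emph{as stated} (without the $\tfrac12$) is in fact false. Take $N=M=2$, let $x_2,y_2$ be independent uniform bits, and let $(x_1,y_1)$ be distributed so that $\Pr[y_1=0\mid x_1=0]=\tfrac{1+\eps}{2}$ and $\Pr[y_1=0\mid x_1=1]=\tfrac{1-\eps}{2}$, with $x_1$ marginally uniform. Then $p_1$ and $p_2$ are both uniform, $d_{\tmop{TV}}(p,p_1\otimes p_2)=\eps/2$, while $d_{\tmop{TV}}(p(\cdot\mid x_1=0),p(\cdot\mid x_1=1))=\eps$. So the best possible constant is exactly the $\tfrac12$ your construction produces; routes (a) and (b) will not recover the stated form.

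Your option (c) is the right resolution. Both places where the lemma is invoked---Lemma~\ref{lemma;distance:mixture:of:trees} (via~\eqref{eq:expression:tv}) and Claim~\ref{claim:technical:distance_of_conditionals}---only require the conclusion up to an absolute constant, which is absorbed into the final $\Omega(\eps)$ bound. So your factor-of-$\tfrac12$ version is exactly what is needed, and the lemma statement in the paper should be read with that constant.
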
   
This in turn will be much more manageable, as the parameters of these two mixture components are independent, and thus analyzing this distance can be done by analyzing Binomial-like expressions. This second step is reminiscent of \cite[Lemma 8]{CanonneDKS20}, which can be seen as a simpler version involving only one Binomial instead of two:
\new{
  \begin{restatable}[]{lemma}{DistanceLbBinomialsConditionalTree}
    \label{lemma:technical_distance_lb_2xbinomials_conditional_tree} 
    There exist $C_1,C_2>0$ such that the following holds. Let $\eps\in(0,1]$ and $n \geq C_1$, and let $a,b$ be two integers such that $a+b=n$  and $b \geq
    \frac{1}{4} n$. Then, for $\delta \eqdef \frac{\eps}{\sqrt{n}}$, we have
    \[ 
      \frac{(1 - \delta)^n}{2^n}  \sum_{k_1 = 0}^{a}\sum_{k_2 = 0}^{b} \binom{a}{k_1} \binom{b}{k_2} \mleft|
        \mleft( \mleft( \frac{1 + \delta}{1 - \delta} \mright)^{k_1 + k_2} - \mleft(
        \frac{1 + \delta}{1 - \delta} \mright)^{k_1 + b - k_2} \mright) \mright|
        \geq C_2\eps . \]
  \end{restatable}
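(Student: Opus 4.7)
My plan is to reduce the double sum to a single expectation involving a $\tmop{Bin}(b, 1/2)$ random variable, and then apply elementary bounds on $\sinh$ and on the mean absolute deviation of a symmetric Binomial. Writing $z \eqdef (1+\delta)/(1-\delta)$ for convenience, the key algebraic identity is
\[
\left|z^{k_1+k_2} - z^{k_1+b-k_2}\right| = 2\, z^{k_1}\, z^{b/2}\, \left|\sinh\!\left((k_2 - b/2)\log z\right)\right|,
\]
so the inner sum over $k_1$ factors through the binomial theorem: $\sum_{k_1=0}^{a} \binom{a}{k_1} z^{k_1} = (1+z)^a = 2^a / (1-\delta)^a$. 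Combining this with the $(1-\delta)^n / 2^n$ prefactor and using the identity $(1-\delta)^b z^{b/2} = (1-\delta^2)^{b/2}$, the whole expression collapses to
\[
2\,(1-\delta^2)^{b/2}\cdot \mathbb{E}_{K\sim \tmop{Bin}(b,1/2)}\!\left[\left|\sinh\!\left((K - b/2)\log z\right)\right|\right].
\]

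Next I would bound the two factors separately. Since $\delta^2 b \leq \delta^2 n = \eps^2 \leq 1$, the Bernoulli-type inequality $(1-x)^{b/2} \geq 1 - bx/2$ applied to $x = \delta^2$ yields $(1-\delta^2)^{b/2} \geq 1/2$. For the expectation, the elementary inequalities $|\sinh(x)| \geq |x|$ together with $\log z = 2\delta + \tfrac{2\delta^3}{3} + \cdots \geq 2\delta$ give
\[
\mathbb{E}\!\left[\left|\sinh((K-b/2)\log z)\right|\right] \geq 2\delta \cdot \mathbb{E}\!\left[|K - b/2|\right].
\]
A classical fact about the mean absolute deviation of a symmetric Binomial is $\mathbb{E}_{K\sim \tmop{Bin}(b,1/2)}[|K - b/2|] = \Theta(\sqrt{b})$; in particular it is bounded below by $c\sqrt{b}$ for some absolute constant $c>0$ as soon as $b$ exceeds a suitable absolute constant. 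This is the only place where the assumption $n \geq C_1$ (combined with $b \geq n/4$, so that $b \geq C_1/4$) is invoked.

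Putting the two factors together, the overall expression is at least $2 \cdot \tfrac{1}{2} \cdot 2\delta \cdot c\sqrt{b} = 2c\delta\sqrt{b}$, and since $b \geq n/4$ and $\delta = \eps/\sqrt{n}$, this is at least $c\eps$, establishing the lemma with $C_2 \eqdef c$. The only step that is not purely mechanical is the lower bound on $\mathbb{E}[|K-b/2|]$, which can be handled either from the closed form $\mathbb{E}[|K - b/2|] = b\binom{b-1}{\lfloor(b-1)/2\rfloor}/2^b$ together with Stirling's approximation, or by a short Paley--Zygmund argument using $\mathrm{Var}(K) = b/4$ together with a fourth-moment bound on $|K - b/2|$; I do not expect this to be a genuine obstacle. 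Everything else is driven entirely by the $\sinh$ factorization of $|z^{k_2} - z^{b-k_2}|$ and the binomial identity that closes up the $k_1$-sum.
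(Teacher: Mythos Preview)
Your proof is correct and takes a genuinely different (and cleaner) route than the paper's. The paper argues by restricting the double sum to a sub-rectangle $k_1 \in [a/2+\sqrt{a},\,a/2+2\sqrt{a}]$, $k_2 \in [b/2+\sqrt{b},\,b/2+2\sqrt{b}]$, invoking the concentration estimate $\binom{a}{k_1}\binom{b}{k_2} \gtrsim 2^{a+b}/\sqrt{ab}$ pointwise on that window, and then lower-bounding each summand via the inequalities $(1-\delta^2)^{n/2} \geq e^{-\eps^2}$ and $1 - z^{b-2k_2} \geq 1 - e^{-4\eps}$ for $k_2$ in that range. You instead exploit the algebraic factorization $|z^{k_1+k_2}-z^{k_1+b-k_2}| = 2z^{k_1}z^{b/2}|\sinh((k_2-b/2)\log z)|$, which makes the $k_1$-sum close \emph{exactly} via the binomial theorem, collapsing the whole expression to $2(1-\delta^2)^{b/2}\,\mathbb{E}_{K\sim\mathrm{Bin}(b,1/2)}[|\sinh((K-b/2)\log z)|]$. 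From there the elementary bounds $|\sinh x|\geq |x|$, $\log z \geq 2\delta$, and $\mathbb{E}|K-b/2|=\Theta(\sqrt{b})$ finish the job. Your approach avoids the windowing/concentration step entirely and yields a more transparent constant; the paper's approach is more hands-on but requires tracking the range of $k_1+k_2-n/2$ and $b-2k_2$ carefully. Both are valid; yours is tighter and more self-contained.
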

}
This parameter $b$ corresponds to the
difference between the orientations parameters 
$\mu, \mu'$ being large, which happens with high constant probability as long
as $n$ is large enough. The proof of Lemma
\ref{lemma:technical_distance_lb_2xbinomials_conditional_tree} is deferred to Appendix
\ref{proof:lemma:technical_distance_lb_2xbinomials_conditional_tree}, and we hereafter proceed with the rest of the argument.
For fixed $\theta$ and $x_2, \ldots, x_d$, $z \assign \frac{1 + 4 \delta}{1 -
4 \delta}$.  We will denote by $\mu, \mu'$ the two (randomly chosen) orientation parameters corresponding to the mixture components indexed by $(0,x_2,\dots, x_d)$ and $(1,x_2,\dots, x_d)$.     
By
Lemma \ref{lemma:technical:TV_lower_bound_of_its_marginals} and Lemma
\ref{lemma:technical_approximated_lb_product_distributions}, for any product
distribution $q$,
\begin{align}
  2 d_{\tmop{TV}} (p, q) & \geq \frac{1}{3\cdot 2^N}  (1 - 4 \delta)^{\frac{N}{2}}  \sum_{x_{d + 1}, \ldots,
  x_n} |z^{c (\lambda, \mu, x)} - z^{c (\lambda, \mu', x)} | \label{eq:expression:tv}
\end{align}
Let $S_1$ denote the set of pairs in the child nodes with common parameters
between $\mu$ and $\mu'$, and $S_2$ the set of pairs with different
parameters (that is, the definition of $S_1, S_2$ is essentially that of $A$ and $B$ from the previous section (p.\pageref{def:AB:lb}), but for equal matching parameters $\lambda = \lambda'$). In particular, we have that $|S_2 | = \tmop{Hamming} (\mu, \mu') \sim
\tmop{Binomial} \left( \frac{N}{2}, \frac{1}{2} \right)$ and $|S_1 \cup S_2 |
= \frac{N}{2}$. Let $\tilde{c}(S, \mu, x)$ be the analogue of $c_{\tmop{ch}}
(\lambda, \mu, x)$ from~\eqref{def:ch}, but only on a subset of pairs $S$ instead of $\{d,\dots,n\}^2$; i.e.,
\[ \tilde{c}(S, \mu, x) \eqdef \left| \left\{ (i, j) \in S : \exists k \in
    \mathbb{N}, \lambda_k = (i - d + 1, j - d + 1) \infixand (- 1)^{x_i + x_j} = (-
    1)^{\mu_k} \right\} \right| . \]
Given any $x, \mu \infixand \mu'$, the following holds from the definitions of $\tilde{c}$ and $c_{\tmop{ch}}$:
\begin{itemize}
  \item Since $S_1 \cup S_2$ contains all the pairs, $c_{\tmop{ch}} (\lambda, \mu, x)
  = \tilde{c}(S_1, \mu, x) + \tilde{c}(S_2, \mu, x)$ (similarly for $\mu'$).
  
  \item Since $S_1$ (resp., $S_2$) contains exactly the pairs whose orientation is the same (resp., differs) between $\mu$ and $\mu'$, we have
  $\tilde{c}(S_1, \mu', x) = \tilde{c}(S_1, \mu, x)$
  and
  $\tilde{c}(S_2, \mu', x) + \tilde{c}(S_2, \mu, x) = |S_2 |$
  
  \item For a fixed matching and a partition $S_1,S_2$ of its $N/2$ pairs, given an orientation vector $\mu\in\{0,1\}^{N/2}$, and fixed values $0\leq k_1\leq |S_1|$, $0\leq k_2\leq |S_2|$, there are $2^{N/2}\binom{|S_1|}{k_1}\binom{|S_2|}{k_2}$ different vectors $x\in\{0,1\}^N$ such that $\tilde{c}(S_1, \mu, x)=k_1$ and $\tilde{c}(S_2, \mu, x)=k_2$.
\end{itemize}
Using these properties, we have, assuming $|S_2 | \geqslant \frac{1}{4} \cdot \frac{N}{2}$ and $N$ bigger than some constant,
\begin{align*}
    \sum_{x \in \{0, 1\}^N}  |z^{c_{\tmop{ch}} (\lambda,
  \mu, x)} - z^{c_{\tmop{ch}} (\lambda, \mu', x)} |
  &= \sum_{x \in \{0, 1\}^N} |z^{\tilde{c}(S_1, \mu, x) + \tilde{c}(S_2, \mu, x)} - z^{\tilde{c}(S_1, \mu', x) + \tilde{c}(S_2, \mu', x)} |\\
  &= \sum_{x \in \{0, 1\}^N} |z^{\tilde{c}(S_1, \mu, x)+\tilde{c}(S_2, \mu, x)} - z^{\tilde{c}(S_1, \mu, x)+|S_2|-\tilde{c}(S_2, \mu, x)} |\\
  &= 2^{\frac{N}{2}}\sum_{k_1=0}^{|S_1|} \binom{|S_1|}{k_1}\sum_{k_2=0}^{|S_2|} \binom{|S_2|}{k_2} |z^{k_1+k_2} - z^{k_1+|S_2|-k_2} |\\&\geq C\cdot \frac{2^N}{(1-4\delta)^{N/2}}\cdot \eps,
\end{align*}
where $C>0$ is an absolute constant, and for the last inequality we invoked Lemma~\ref{lemma:technical_distance_lb_2xbinomials_conditional_tree}.
 Recalling now that $|S_2 | \sim \tmop{Bin}\mleft( \frac{N}{2},
    \frac{1}{2} \mright)$, for $N$ large enough we also have
\[ \Pr \mleft[ |S_2| \geq \frac{N}{8} \mright] \geq 1 - e^{
    - \frac{N}{16}} > 9/10. \]
Thus, combining the two along with~\eqref{eq:expression:tv}, we conclude that
\[
\Pr [d_{\tmop{TV}} (P_{\theta} (\cdot \mid x_1 = 0, x_2, \ldots, x_d),
    P_{\theta} (\cdot \mid x_1 = 1, x_2, \ldots, x_d)) \geq \Omega
    (\varepsilon)] \geq 9/10,
\]
establishing Lemma~\ref{lemma;distance:mixture:of:trees}. \hfill$\blacksquare$
%

%
%
%
%
%
%

\section{Useful results on the MGFs of Binomials and Multinomials}
    \label{ssec:tools:mgf}
In this section, we establish various self-contained results on the moment-generating functions (MGF) and stochastic dominance of Binomials, truncated (or ``capped'') Binomials, and multinomial distributions, which we used extensively in Section~\ref{ssec:distinguish:uniform} and should be of independent interest.
  Notably, derivations from Section~\ref{sec:lb:outline} following~\eqref{eq:three_scary_expactations_multinomial} are direct consequences of the three lemmas in the section: Lemma \ref{lemma:MGF_Binomial_square_bound}, Lemma \ref{lemma:bound_on_truncated_multinomial_MGF} and Lemma \ref{lemma:MGF_Binomial_square_bound_with_min} below, which we restate and establish later in this section.

  \begin{restatable*}{lemma}{MGFSquaredBinomial}
    \label{lemma:MGF_Binomial_square_bound}
    Let $X \sim \tmop{Bin} (m, p)$. Then, 
    for any $t$ such that $0 < tm \leq 1/16$,
    \[
    \mathbb{E} [e^{tX^2}] \leq \exp (16tm^2 p^2 + 2tmp).
    \]
  \end{restatable*}

  \begin{restatable*}[]{lemma}{MGFCappedMultinomial}
    \label{lemma:bound_on_truncated_multinomial_MGF}
    Suppose $\orgvec{\alpha} =
    (\alpha_1, \ldots, \alpha_D)^T$ follows a multinomial distribution with
    parameters $m$ and $(1/D, \ldots, 1/D)$, and $\Delta \geq 40 D \gg c^4$ be such that $m {\leq} c \sqrt{D} \Delta$. Then, for any $t \leq 4$ and $D \geq \Omega(1)$, we have
      \[ 
    \mathbb{E} \mleft[ \prod_{i = 1}^D e^{t  \alpha_i
      \mathbbm{1} [\alpha_i > \Delta]} \mright] \leq 1 + c \sqrt{D} \exp \mleft( - \frac{1}{80} \Delta \log D \mright) . 
      \]
  \end{restatable*}

  \begin{restatable*}[]{lemma}{MGFSquaredCappedBinomial}
    \label{lemma:MGF_Binomial_square_bound_with_min}
    Let $X' \sim \tmop{Bin} (m, p)$, and $X \eqdef \min (X', \Delta)$, for some $\Delta \leq m$. Then,
    for any $t$ such that $0 < t\Delta \leq 1 / 8$ and $0 < t m p \leq 1 /
    16$, we have
    \[ \mathbb{E} [e^{tX^2}] \leq \exp (16 tm^2 p^2 + 2 t m p) . \]
  \end{restatable*}
\subsection{Bounds on moment-generating functions}
We start with some relatively simple statements:
\begin{fact}
\label{fact:bound_on_MGF_Binomial_not_general}If $X \sim \tmop{Bin}
(m, p)$, then, for any $0 \leq t \leq 1$,
$\mathbb{E} [e^{t X}] \leq \exp (2 t mp).$
\end{fact}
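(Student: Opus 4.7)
\medskip
\noindent The plan is to exploit that the MGF of $X \sim \tmop{Bin}(m,p)$ has a known closed form. Specifically, I would start from
\[
\mathbb{E}[e^{tX}] = (1-p+pe^t)^m = \bigl(1 + p(e^t-1)\bigr)^m,
\]
and apply the standard inequality $1+u \leq e^u$ (valid for all $u \in \mathbb{R}$) with $u = p(e^t - 1) \geq 0$, to get $\mathbb{E}[e^{tX}] \leq \exp\bigl(mp(e^t - 1)\bigr)$.

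It then suffices to establish the elementary one-variable inequality $e^t - 1 \leq 2t$ for all $t \in [0,1]$. This is immediate from convexity: define $f(t) \eqdef 2t - (e^t - 1)$; then $f(0) = 0$, $f(1) = 3 - e > 0$, and $f$ is concave (since $f''(t) = -e^t < 0$), so $f \geq 0$ on $[0,1]$ by concavity and the sign at the endpoints. Combining the two bounds gives $\mathbb{E}[e^{tX}] \leq \exp(2tmp)$, as claimed.

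There is no real obstacle here: the whole argument is just ``exact MGF formula $+$ $1+u \leq e^u$ $+$ a one-line endpoint check for $e^t - 1 \leq 2t$ on $[0,1]$.'' The only subtle point is that the restriction $t \leq 1$ cannot be dropped, since $e^t - 1$ grows super-linearly and the inequality $e^t - 1 \leq 2t$ would fail for $t$ too large; this is why the hypothesis $0 \leq t \leq 1$ enters the statement.
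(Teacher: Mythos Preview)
Your proof is correct and essentially identical to the paper's: both use the exact MGF formula $(1+p(e^t-1))^m$, the inequality $e^t-1\leq 2t$ on $[0,1]$, and $1+u\leq e^u$. The only cosmetic difference is the order in which the two elementary inequalities are applied.
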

\begin{proof}
  This follows from computing explicitly 
  $
  \mathbb{E} [e^{t X}] = (1 + p (e^{t} - 1))^m \leq (1 + 2
    t p)^m \leq e^{2 t mp},
    $
    where the first inequality uses that $t\leq 1$. 
\end{proof}
\noindent We will also require the following decoupling inequality:
\begin{lemma}
  \label{lemma:decoupling}Let $F\colon \mathbb{R} \to \mathbb{R}$ be a convex,
  {\tmem{non-decreasing}} function, and $X = (X_1, \ldots, X_n)$ be a vector
  of independent {\tmem{non-negative}} random variables. Then
  \[ \E \mleft[ F \mleft( \sum_{i \neq j} X_i X_j \mright) \mright] \leq \E \mleft[
    F \mleft( 4 \sum_{i, j} X_i Y_j \mright) \mright] \]
  where $Y$ is an independent copy of $X$.
\end{lemma}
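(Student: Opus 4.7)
The plan is to use the classical randomization-via-Bernoulli-indicators trick to decouple the dependent bilinear form into an independent one, then invoke Jensen's inequality with the convexity of $F$ and non-negativity of the variables.

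First, I would introduce i.i.d.\ Bernoulli$(1/2)$ random variables $\eps_1,\dots,\eps_n$, independent of $X$ and $Y$, and define the random index sets $I \eqdef \{i : \eps_i = 1\}$ and $J \eqdef \{j : \eps_j = 0\}$. The key identity is that for every ordered pair $(i,j)$ with $i \neq j$, one has $\Pr[i \in I, j \in J] = \tfrac{1}{4}$, while for $i = j$ this probability is $0$. Consequently, writing $\E_\eps$ for expectation over the Bernoullis only,
\[
\sum_{i \neq j} X_i X_j \;=\; 4\,\E_\eps\mleft[\sum_{i \in I,\, j \in J} X_i X_j\mright].
\]

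Next, since $F$ is convex and non-decreasing, Jensen's inequality applied inside $F$ (with expectation over $\eps$ only) yields
\[
F\mleft(\sum_{i\neq j} X_i X_j\mright)
\;=\; F\mleft(\E_\eps\mleft[ 4 \sum_{i\in I, j\in J} X_i X_j \mright]\mright)
\;\leq\; \E_\eps\mleft[ F\mleft( 4 \sum_{i\in I, j\in J} X_i X_j \mright) \mright].
\]
Taking expectation over $X$ and swapping the order of integration gives
\[
\E\mleft[F\mleft(\sum_{i\neq j} X_i X_j\mright)\mright]
\;\leq\; \E_\eps\,\E_X\mleft[ F\mleft( 4 \sum_{i\in I, j\in J} X_i X_j \mright) \mright].
\]

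Finally, the decoupling step: conditioned on $\eps$, the sets $I$ and $J$ are disjoint, so the variables $(X_i)_{i\in I}$ are independent of $(X_j)_{j \in J}$. Thus $(X_i)_{i\in I}, (X_j)_{j\in J}$ has the same joint distribution as $(X_i)_{i\in I}, (Y_j)_{j\in J}$, where $Y$ is an independent copy of $X$. Replacing and then using non-negativity of $X_i, Y_j$ together with monotonicity of $F$ to extend the sum from $I \times J$ to $[n]\times[n]$:
\[
\E_X\mleft[F\mleft(4\sum_{i\in I, j\in J} X_i X_j\mright)\mright]
= \E_{X,Y}\mleft[F\mleft(4\sum_{i\in I, j\in J} X_i Y_j\mright)\mright]
\leq \E_{X,Y}\mleft[F\mleft(4\sum_{i,j} X_i Y_j\mright)\mright].
\]
Combining the two displays and observing that the final bound no longer depends on $\eps$ completes the proof. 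The main (mild) subtlety is to make sure the randomization correctly realizes the factor $4$ on off-diagonal pairs while killing the diagonal, and to use non-negativity to pass from $I\times J$ to the full $[n]\times[n]$ sum; neither presents a genuine obstacle.
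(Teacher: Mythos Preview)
Your proof is correct and follows essentially the same approach as the paper: introduce Bernoulli$(1/2)$ indicators to write the off-diagonal sum as $4$ times an expectation, apply Jensen by convexity, use disjointness of $I$ and $J=I^c$ to swap $(X_j)_{j\in J}$ for an independent copy $(Y_j)_{j\in J}$, and finally extend to the full index set via non-negativity and monotonicity of $F$. The only cosmetic difference is that the paper extracts a single deterministic realization $\delta^\ast$ of the Bernoullis (via an averaging/existence argument) before doing the replacement, whereas you keep the expectation over $\eps$ throughout and observe at the end that the bound is $\eps$-free; both are equally valid.
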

\begin{proof}
  \label{proof:lemma:decoupling}
  Introduce a vector of independent
  (and independent of $X$) $\tmop{Bern} (1 / 2)$ random variables $\delta =
  (\delta_1, \ldots, \delta_n)$; so that $\mathbb{E} [\delta_i (1 - \delta_j)]
  = \frac{1}{4} \mathbbm{1}_{i \neq j}$. For any realization of $X$, we can write
  \[ \sum_{i \neq j} X_i X_j = 4 \E_{\delta} \mleft[ \sum_{i \neq j} \delta_i
     (1 - \delta_j) X_i X_j \mright] = 4 \E_{\delta} \mleft[ \sum_{i, j}
     \delta_i (1 - \delta_j) X_i X_j \mright], \]
  and so, by Jensen's inequality and Fubini, as well as independence of $X$ and $\delta$,  
  \begin{align*}
    \mathbb{E}_X \mleft[ F \mleft( \sum_{i \neq j} X_i X_j \mright) \mright] &
    =\mathbb{E}_X \mleft[ F \mleft( 4 \E_{\delta} \mleft[ \sum_{i, j} \delta_i (1
    - \delta_j) X_i X_j \mright] \mright) \mright]\\
    & \leq \mathbb{E}_{\delta} \mleft[ \mathbb{E}_X \mleft[ F \mleft( 4
    \sum_{i, j} \delta_i (1 - \delta_j) X_i X_j \mright) \mright] \mright]
  \end{align*}
  
  This implies that there exists some realization $\delta^{\ast} \in \{0,
  1\}^n$ such that
  \[ \mathbb{E}_X \mleft[ F \mleft( \sum_{i \neq j} X_i X_j \mright) \mright]
     \leq \mathbb{E}_X \mleft[ F \mleft( 4 \sum_{i, j} \delta^{\ast}_i (1 -
     \delta^{\ast}_j) X_i X_j \mright) \mright] \hspace{0.17em} . \]
  Let $I \assign \{i \in [n] : \delta^{\ast}_i = 1\}$. Then $\sum_{i, j}
  \delta^{\ast}_i  (1 - \delta^{\ast}_j) X_i X_j = \sum_{(i, j) \in I \times
  I^c} X_i X_j$, and we get
  
  \begin{align}
    \mathbb{E}_X \mleft[ F \mleft( \sum_{i \neq j} X_i X_j \mright) \mright] &
    \leq \mathbb{E}_X \mleft[ F \mleft( 4 \sum_{(i, j) \in I \times I^c}
    X_i X_j \mright) \mright] \label{eq:decoupling:intermediate}\\
    & =\mathbb{E}_X \mleft[ F \mleft( 4 \sum_{(i, j) \in I \times I^c} X_i Y_j \mright) \mright] \notag\\
    & \leq \mathbb{E}_X \mleft[ F \mleft( 4 \sum_{(i, j) \in I \times I^c} X_i Y_j + 4 \sum_{(i, j) \nin I \times I^c} X_i Y_j \mright) \mright] \notag\\
    & =\mathbb{E}_X \mleft[ F \mleft( 4 \sum_{i, j} X_i Y_j \mright) \mright] \notag
\;,
  \end{align}
  where the equality uses the fact that $(X_i)_{i \in I}$ and $(X_j)_{j \in
  I^c}$ are independent (as $I, I^c$ are disjoint), and so replacing $\sum_{j
  \in I^c} X_j$ by the identically distributed $\sum_{j \in I^c} Y_j$ does not
  change the expectation; and the second inequality uses monotonicity of $F$
  and non-negativity of $X, Y$, as $4 \sum_{(i, j) \nin I \times I^c} X_i Y_j
  \geq 0$. (Note that up to (and including)~\eqref{eq:decoupling:intermediate}, the assumption that the $X_i$'s are independent is not necessary; we will use this fact later on.)
\end{proof}
Note that compared to the usual version of the inequality, we do not require
that the $X_i$'s have mean zero; but instead require that they be
non-negative, and that $F$ be monotone. We will, in the next lemma, apply
Lemma~\ref{lemma:decoupling} to the function $F (x) = e^{2 tx}$, for some
fixed {\tmem{positive}} parameter $t > 0$ (so that $F$ is indeed
non-decreasing), and to $X_1, \ldots, X_n$ independent Bernoulli r.v.'s. Specifically, we obtain the following bound on the MGF of the square of a Binomial:

\MGFSquaredBinomial
\begin{proof}
  Write $X=\sum_{i = 1}^m X_i$, where the $X_i$ are i.i.d.\ $\tmop{Bern} (p)$ (in particular, $X_i = X_i^2$).         
  Then, by the Cauchy--Schwarz inequality and the decoupling inequality from Lemma~\ref{lemma:decoupling}, we have, for $t>0$,
  \begin{eqnarray}
    \mathbb{E} [e^{tX^2}]  
    & = & \mathbb{E} \mleft[ e^{t \sum_i X_i} e^{t \sum_{i \neq j} X_i
    X_j} \mright] \nonumber\\
    & \leq & \sqrt{\mathbb{E} \mleft[ e^{2 t \sum_i X_i} \mright]} 
    \sqrt{\mathbb{E} \mleft[ e^{2 t \sum_{i \neq j} X_i X_j} \mright]} \nonumber\\
    \mleft( \text{decoupling} \mright) & \leq & \sqrt{\mathbb{E} \mleft[ e^{2
    t \sum_i X_i} \mright]}  \sqrt{\mathbb{E} \mleft[ e^{8 t \sum_{i, j} X_i Y_j}
    \mright]} \text{.}  \label{eq:decoupling_step}
  \end{eqnarray}
  where $Y_j \sim \tmop{Bern} (p)$ are i.i.d., and independent of the $X_i$'s. Let $Y = \sum_{i = 1}^m Y_i\sim \tmop{Bin}(m, p)$.         
  From Fact~\ref{fact:bound_on_MGF_Binomial_not_general}, as long as $2 t
  \leq 1$, $8 t m \leq 1$, and $16 tmp \leq 1$
  (all conditions satisfied in view of our assumption),
  \[ \mathbb{E}_{X, Y} [e^{8 tXY}] =\mathbb{E}_X [\mathbb{E}_Y [e^{8 tXY}]]
    \leq \mathbb{E}_X [e^{16 tXmp}] \leq e^{32 tm^2 p^2},
  \]
  and $\mathbb{E} \mleft[ e^{2t X} \mright] \leq e^{4tmp}$. Going back to~\eqref{eq:decoupling_step}, this implies
  \[
    \mathbb{E} [e^{tX^2}] 
    \leq \sqrt{\exp \left(4tmp\right)}\sqrt{\exp \left(32 tm^2 p^2 \right)}
    = \exp \left(2 tmp + 16 tm^2 p^2 \right)\,,
  \]
  concluding the proof.
\end{proof}
We will prove an MGF bound on the truncated Multinomial in Lemma \ref{lemma:bound_on_truncated_multinomial_MGF} (noting that using MGF bound of Multinomial distribution is not nearly enough), as required by our analysis on the independence testing lower bound; prior to that, we will need two important lemmas: Lemma \ref{lemma:factorial_fraction_inequality} and Lemma \ref{lemma:multinomial_truncated_MGF}. These two lemmas both try to bound the expression with a uniform and more manageable term.

%
%
%
%
%
\begin{lemma}
  \label{lemma:factorial_fraction_inequality}
  Fix $m,\Delta,D$ such that $\frac{m}{\Delta} {\leq}c\sqrt{D}$ for some $c>0$ (and $D> \max(16c^4, e^{100})$). Fix any integer $k > 0$ and a tuple of non-negative integers $(a_1,\dots,a_D)$ summing to $m$ such that $L \eqdef \sum_{i = 1}^k a_i > k\Delta$ (in particular, $k \leq c\sqrt{D}$). 
  Suppose $(\alpha_1, \ldots, \alpha_D)$ follows
  a multinomial distribution with
  parameters $m$ and $(1/D, \ldots, 1/D)$. Then,
  \[
    e^{4 L} \Pr \mleft[ \orgvec{\alpha} = (a_1, \ldots, a_D)
     \mright] \leq m\cdot  \exp (- \frac{1}{5} L \log D) .
    \]
\end{lemma}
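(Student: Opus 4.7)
The plan is to exploit the natural factorization of the multinomial probability obtained by separating the first $k$ coordinates from the remaining $D-k$. Concretely, I would write
\[
\Pr\bigl[\orgvec{\alpha} = (a_1,\ldots,a_D)\bigr] = \frac{m!}{a_1!\cdots a_D!}\,D^{-m} = \Pr[\mathrm{Bin}(m, k/D) = L] \cdot P_1 \cdot P_2,
\]
where $P_1$ and $P_2$ are the conditional probabilities that, given a total of $L$ trials landing in the first $k$ cells (resp., $m-L$ in the last $D-k$), the counts are exactly $(a_1,\ldots,a_k)$ (resp., $(a_{k+1},\ldots,a_D)$). This identity is a direct algebraic check using $\binom{m}{L}(k/D)^L((D-k)/D)^{m-L}$ and the two conditional multinomial pmfs. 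Since $P_1, P_2 \leq 1$, the task reduces to bounding $\Pr[\mathrm{Bin}(m,k/D) = L]$.

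For this binomial probability, I would use the elementary bound $\binom{m}{L} \leq (em/L)^L$ (from $L! \geq (L/e)^L$) together with $(1-k/D)^{m-L} \leq 1$ to get
\[
\Pr[\mathrm{Bin}(m,k/D) = L] \leq \mleft(\frac{emk}{LD}\mright)^L.
\]
At this point the two hypotheses combine cleanly: the assumption $L > k\Delta$ gives $mk/(LD) < m/(\Delta D)$, and $m \leq c\sqrt{D}\Delta$ gives $m/(\Delta D) \leq c/\sqrt{D}$. Hence $\Pr[\orgvec{\alpha} = (a_1,\ldots,a_D)] \leq (ec/\sqrt{D})^L$, and multiplying by $e^{4L}$ yields $e^{4L}\,\Pr[\orgvec{\alpha} = a] \leq (e^5 c/\sqrt{D})^L$.

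The final step is to compare this against $m\exp(-L\log D/5)$. Taking logarithms, the required inequality becomes $L(5 + \log c - \tfrac{1}{2}\log D) \leq \log m - \tfrac{L\log D}{5}$, i.e., $L(5 + \log c - \tfrac{3}{10}\log D) \leq \log m$. The assumption $D > 16c^4$ implies $\log c \leq \tfrac{1}{4}\log D$ (up to the $\log 16$ additive term, which is absorbed by the slack), so the coefficient of $L$ on the left is at most $5 - \tfrac{\log D}{20}$, which is nonpositive once $D \geq e^{100}$. Therefore the left side is nonpositive while the right side is nonnegative (as $m \geq 1$), closing the inequality. I do not expect any real obstacle: the only place to be careful is verifying the multinomial factorization and tracking constants in the final exponent, and the lemma's two numerical hypotheses on $D$ are precisely what is needed to make $5 + \log c - \tfrac{3}{10}\log D \leq 0$.
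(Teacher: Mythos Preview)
Your proof is correct and, in fact, noticeably cleaner than the paper's. Both arguments begin with the same grouping step~--~bounding the multinomial point probability by $\Pr[\mathrm{Bin}(m,k/D)=L]$~--~but from there they diverge. The paper applies Stirling's approximation to the binomial coefficient, writes the resulting expression as a relative entropy, and then uses Gibbs' inequality together with an artificial $D^{3/4}$ substitution (see their equations~(\ref{equation:stirling_gibbs_application})--(\ref{equation:upperbound-for-mk-over-D-square})) to extract an $\exp(-\tfrac{1}{4}L\log D)$ decay. You instead use the one-line bound $\binom{m}{L}\leq (em/L)^L$ and then immediately combine the two hypotheses $L>k\Delta$ and $m\leq c\sqrt{D}\,\Delta$ to get $emk/(LD)\leq ec/\sqrt{D}$, giving $\exp(-\tfrac{1}{2}L\log D)$ decay before the lower-order terms. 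This is tighter, avoids Stirling and Gibbs entirely, and makes the role of each assumption on $D$ transparent: $D>16c^4$ absorbs the $\log c$ term (indeed $\log c \leq \tfrac14\log D$ in all cases), and $D\geq e^{100}$ makes the coefficient $5-\tfrac{1}{20}\log D$ nonpositive. The paper's route buys nothing extra here; your argument is simply a more direct tail bound.
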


\begin{proof}
  Via a multinomial distribution grouping argument, the probability can be
  bounded by considering a grouping of two random variables, $L_1 = \sum_{i = 1}^k
  \alpha_i$ and $L_2 = \sum_{i = k + 1}^D \alpha_i$, where $(L_1,L_2)$ follows
  a multinomial distribution with parameters $m$ and $\mleft( \frac{k}{D},
  \frac{D - k}{D} \mright)$, namely, recalling $L = \sum_{i = 1}^k a_i$ and setting $T \eqdef \sum_{i = k + 1}^D a_i$, 
  \[ \Pr \mleft[ \orgvec{\alpha} = (a_1, \ldots, a_D) \mright] \leq 
  \Pr\mleft[ L_1 = L, L_2 = T    \mright] = \frac{m!}{L! T!}  \mleft( \frac{k}{D}
     \mright)^{L} \mleft( \frac{D - k}{D} \mright)^{T} \]
  Moreover, note that $m=L+T$. Via Stirling's approximation, we have
  \begin{equation}
    \label{equation:stirling_gibbs_application} 
    \frac{m!}{L ! T!}
      \leq \exp(m \log m + \log m - L \log L - T \log T)
  \end{equation}
 from which we can write, taking the logarithm for convenience,
  \begin{align}
    \log\mleft( \frac{m!}{L!T!}  \mleft( \frac{k}{D} \mright)^{L}
    \mleft( \frac{D - k}{D} \mright)^{T} \mright) 
    & \leq \log m - \mleft( L \log \frac{L D}{m k} + T \log \frac{T  D}{m (D - k)}
    \mright)
    \label{equation:everything-grouped-together-log-sum}\\
    & = \log m - \mleft( L \log \frac{LD}{m
    k }   + T \log \mleft( \frac{TD}{m ({D^{3 / 4}} -
    k)}  \frac{{D^{3 / 4}} - k}{D - k} \mright) \mright) \label{equation:swapping-D-with-lower-order-D}\\
    & \leq \log m - m \log (D^{1 / 4}) + (m - L) \log \mleft(
    \frac{D - k}{D^{3 / 4} - k} \mright)
    \label{equation:Gibbs-ineqaulity-again}\\
    & = \log m + m \log \mleft( 1 + k\frac{D^{1 / 4} - 1}{D
    - k D^{1 / 4}} \mright) - L \log \mleft( 1 + \frac{D^{1 / 4} - 1}{1 - k /
    D^{3 / 4}} \mright)  \nonumber\\
    & \leq \log m + m k \frac{D^{1 / 4} - 1}{D - kD^{1 / 4}} - L
    \log (D^{1 / 4})
    \label{equation:setting-k-to-extreme}\\
    & \leq  \log m +
    \frac{m k}{D^{1 / 2}}  \frac{1 - 1 / D^{1 / 4}}{D^{1 / 4} - c} - \frac{1}{4} L \log D  \nonumber\\
    & \leq  \log m +  
    \frac{c L}{D^{1 / 4} - c} - \frac{1}{4} L \log D 
    \label{equation:upperbound-for-mk-over-D-square}\\
    & \leq  \log m + L - \frac{1}{4} L \log D  \tag{as $c/D^{1/4}\leq 1/2$} 
  \end{align}
  where we used Gibbs' inequality for \eqref{equation:Gibbs-ineqaulity-again};
  we then have (\ref{equation:setting-k-to-extreme}) by $\log (1 + x)
  \leq x$ for the first term. \eqref{equation:upperbound-for-mk-over-D-square} then follows from $k \leq c\sqrt{D}$ and $km {\leq} k\Delta \cdot c\sqrt{D} \leq L\cdot  c\sqrt{D} $. Finally,
  \[
    e^{4 L}  \frac{m!}{L!T!}  \mleft( \frac{k}{D}
    \mright)^{L} \mleft( \frac{D - k}{D} \mright)^{T} 
    \leq   \exp (5L- \frac{1}{4}L \log D)
     \leq  m \exp \mleft( - \frac{1}{5} L \log D \mright) \,,
  \]
  the last inequality as long as $\log D > 100$.
\end{proof}

\begin{lemma}
  \label{lemma:multinomial_truncated_MGF}
  Suppose $\orgvec{\alpha} =
  (\alpha_1, \ldots, \alpha_D)^T$ follows a multinomial distribution with
  parameters $m$ and $(1/D, \ldots, 1/D)$, and that $\frac{m}{\Delta} {\leq} c\sqrt{D}$ for some $c>0,\Delta\geq 1$ with $\Delta \geq 40D \geq \Omega(c^4)$ and $D > \Omega(1)$. For any integer $c \sqrt{D} \geq k \geq 1$ and any $t\leq 4$,
    \[ 
  \mathbb{E} \mleft[ \prod_{i : \alpha_i > \Delta} e^{t \alpha_i}
     \cdot \mathbbm{1} [\nu(\orgvec{\alpha})= k] \mright] \leq \exp \mleft( -
     \frac{1}{80} \Delta \log (D) \mright) . 
    \]
    where $\nu(\orgvec{\alpha}) \eqdef |\{ i: \alpha_i > \Delta \}|$ denotes the number of coordinates of $\orgvec{\alpha}$ greater than $\Delta$.
\end{lemma}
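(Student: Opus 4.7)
The plan is to remove the combinatorial complexity of the event $\{\nu(\orgvec\alpha)=k\}$ by symmetry, reducing the expectation to a truncated MGF of a Binomial random variable, which I will then bound by a direct Chernoff-style estimate. By exchangeability of the coordinates of a symmetric multinomial, this event decomposes into $\binom{D}{k}$ equiprobable pieces indexed by which subset of $k$ coordinates exceeds $\Delta$; canonicalising to the first $k$ coordinates gives
\[
  \mathbb{E}\mleft[\prod_{i:\alpha_i>\Delta}e^{t\alpha_i}\mathbbm{1}[\nu(\orgvec\alpha)=k]\mright]
  = \binom{D}{k}\mathbb{E}\mleft[e^{tL_1}\mathbbm{1}\mleft[\alpha_i>\Delta\ \forall i\leq k,\ \alpha_j\leq\Delta\ \forall j>k\mright]\mright],
\]
where $L_1\eqdef\sum_{i=1}^k\alpha_i\sim\mathrm{Bin}(m,k/D)$. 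Dropping the upper-bound constraint on $\alpha_{k+1},\ldots,\alpha_D$ and relaxing ``$\alpha_i>\Delta$ for all $i\leq k$'' to the weaker ``$L_1>k\Delta$'' yields
\[
  \mathbb{E}[\cdots]\;\leq\;\binom{D}{k}\mathbb{E}\mleft[e^{tL_1}\mathbbm{1}[L_1>k\Delta]\mright],
\]
so the full multinomial structure collapses to a one-dimensional Binomial tail.

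Next, I will bound the truncated MGF term by term. Using $\binom{m}{\ell}\leq(em/\ell)^{\ell}$ and $(1-k/D)^{m-\ell}\leq 1$, each summand $e^{t\ell}\Pr[L_1=\ell]$ is at most $\mleft(e^{t+1}mk/(\ell D)\mright)^{\ell}$. For $\ell>k\Delta$, the hypothesis $m/\Delta\leq c\sqrt{D}$ gives $mk/(\ell D)<m/(\Delta D)\leq c/\sqrt{D}$, so each term is at most $\mleft(e^{t+1}c/\sqrt{D}\mright)^{\ell}$. With $t\leq 4$ and the constant hidden in $D\geq\Omega(c^4)$ chosen large enough that $e^{t+1}c/\sqrt{D}\leq 1/2$, summing the geometric tail starting at $\ell=k\Delta+1$ gives
\[
  \mathbb{E}\mleft[e^{tL_1}\mathbbm{1}[L_1>k\Delta]\mright]\;\leq\; 2\mleft(\frac{e^{t+1}c}{\sqrt{D}}\mright)^{\!k\Delta}.
\]

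Combining with $\binom{D}{k}\leq D^k$ and taking logarithms, the overall bound is at most
\[
  \log 2+k\log D+k\Delta\log(e^{t+1}c)-\tfrac{k\Delta}{2}\log D.
\]
Using $k\geq 1$, $\Delta\geq 40D$, and the slack from $D\geq\Omega(c^4)$ (which lets $\log(e^{t+1}c)$ be dominated by a small fraction, e.g.\ $\tfrac{37}{80}$, of $\log D$), the negative term $-\tfrac{k\Delta}{2}\log D$ beats all the positive contributions by at least $\tfrac{\Delta}{80}\log D$, yielding the claimed $\exp(-\tfrac{1}{80}\Delta\log D)$. The main obstacle is purely arithmetic, namely checking that the constant $\tfrac{1}{80}$ leaves enough slack to absorb both $k\log D$ and $k\Delta\log(e^{t+1}c)$; the hypothesis $\Delta\geq 40D\geq\Omega(c^4)$ in the lemma is precisely what closes this numerology. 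Note that this route bypasses the per-tuple bound of Lemma \ref{lemma:factorial_fraction_inequality}, although an alternative proof via that lemma followed by careful counting of tuples with fixed $L$ is also feasible.
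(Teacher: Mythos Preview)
Your proposal is correct and takes a genuinely different route from the paper. The paper proceeds by first establishing a per-tuple bound (Lemma~\ref{lemma:factorial_fraction_inequality}): for each fixed $(a_1,\ldots,a_D)$ with $L=\sum_{i\leq k}a_i>k\Delta$, it bounds $e^{4L}\Pr[\orgvec\alpha=(a_1,\ldots,a_D)]$ via Stirling's approximation and a Gibbs-inequality trick (replacing $D$ by $D^{3/4}$ to exploit the gap $L_1 D\gg mk$), obtaining $m\exp(-\tfrac15 L\log D)$; it then multiplies by a crude count $\binom{D}{k}(m-\Delta)^k(\Delta+1)^{D-k}$ of tuples and does arithmetic, where the hypothesis $\Delta\geq 40D$ is used explicitly to swap $D\log\Delta$ for a fraction of $\Delta\log D$. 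You instead collapse the multinomial to the single Binomial $L_1\sim\mathrm{Bin}(m,k/D)$ \emph{before} doing any estimation, and then bound the truncated MGF $\mathbb{E}[e^{tL_1}\mathbbm{1}\{L_1>k\Delta\}]$ term-by-term using only $\binom{m}{\ell}\leq(em/\ell)^\ell$. This is strictly more elementary: it avoids Stirling, Gibbs, and the $D^{3/4}$ device entirely, and in fact only uses $\Delta\geq 40$ rather than $\Delta\geq 40D$ in the final arithmetic. The paper's per-tuple lemma buys a pointwise statement that could in principle be reused elsewhere, but for this particular lemma your reduction to a one-dimensional tail is both shorter and sharper in its use of the hypotheses.
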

\begin{proof}
  Without loss of generality, (as later, we will sum over all combinations)
  assume that $\alpha_1, \ldots, \alpha_k$ are the coordinates larger than $\Delta$, for some integer $k$; and denote their sum by $L$. Note that we then have $k\Delta < L \leq m {\leq} c\Delta\sqrt{D}$, and thus $0 \leq k \leq c\sqrt{D}$.
  \begin{equation}
    \mathbb{E} \mleft[ \prod_{i : \alpha_i > \Delta} e^{4\alpha_i}
    \cdot \mathbbm{1} [\nu(\orgvec{\alpha})= k] \mright]
    = \binom{D}{k} \!\!\sum_{\alpha_1, \ldots, \alpha_k > \Delta} \sum_{\alpha_{k
    + 1}, \ldots, \alpha_D \leq \Delta} \!\!\!e^{4 \sum_{i = 1}^k \alpha_i} \Pr
    [\orgvec{\alpha} = (\alpha_1, \ldots, \alpha_D)] 
    \label{equation:MGF_bound_of_truncated_multinomial}
  \end{equation}
  A uniform bound on any $\alpha_1, \ldots, \alpha_D$ as specified can be
  obtained from Lemma \ref{lemma:factorial_fraction_inequality}; and, combining it
  with (\ref{equation:MGF_bound_of_truncated_multinomial}), we have an
  expression that does not depend on the value of $\orgvec{\alpha}$; from which\footnote{We have the number of terms in the summation upper bounded by the following
  analysis: $(m - \Delta)^k$ is an upper bound of combinations of $\alpha_1,
  \ldots, \alpha_k$ with values larger than $\Delta$; and similarly, $(\Delta + 1)^{D -
  k}$ will be the upper bound for the combinations of $\alpha_{k + 1}, \ldots,
  \alpha_D$ with values up to $\Delta$.}
  \begin{eqnarray}
    \mathbb{E} \mleft[ e^{4\sum_{i = 1}^k \alpha_i} 
    \mathbbm{1} [\nu(\orgvec{\alpha})= k] \mright] 
    & \leq & \binom{D}{k} \sum_{\alpha_1, \ldots, \alpha_k > \Delta, \alpha_{k + 1},
    \ldots, \alpha_D \leq \Delta} m e^{- \frac{1}{5} \Delta \log D} \nonumber\\
    & \leq & \binom{D}{k} (m - \Delta)^k \Delta^{D - k} \exp \mleft( \log m - \frac{1}{5}
    \Delta \log D \mright) \nonumber\\
    & \leq & \exp \mleft( k \log D + k \log (m - \Delta) + (D - k) \log \Delta + \log m
    - \frac{\Delta}{5} \log (D) \mright) \nonumber\\
    & = & \exp \mleft( k \log (D \cdot \frac{m - \Delta}{\Delta}) + D \log \Delta + \log m -
    \frac{\Delta}{5} \log (D) \mright) \nonumber\\
    & \leq & \exp \mleft( - \frac{1}{5} \Delta \log D+ (D+1) \log \Delta + \log (c\sqrt{D})+
    \frac{3}{2} k \log (c D) \mright) \nonumber\\
    & \leq & \exp \mleft( - \frac{1}{10} \Delta \log D + 2c \sqrt{D}
    \log (c D)  \mright)  \label{weight_log_swap}\\
    & \leq & \exp \mleft( - \frac{1}{80} \Delta \log D \mright) . \nonumber
  \end{eqnarray}
  where (\ref{weight_log_swap}) follows from $20\frac{D}{\log D} \leq \frac{\Delta}{\log \Delta}$, which holds for $\Delta \geq 40D$ and $D$ large enough (larger than some absolute constant); and the last inequality holds, given the above constraints, for $D \geq 16 c^4$.
\end{proof}

\MGFCappedMultinomial
\begin{proof}
  Let $\nu(\orgvec{\alpha}) \eqdef |\{ i: \alpha_i > \Delta \}|$ denote the number of coordinates of $\orgvec{\alpha}$ greater than $\Delta$. Note that $\nu(\orgvec{\alpha}) < L\eqdef \frac{m}{\Delta}$, and that $L = c \sqrt{D}$s by assumption. We break down the expectation by enumerating over the possible values for $\nu(\orgvec{\alpha})$,
  from $0\leq k\leq L$:
  \begin{eqnarray}
    \mathbb{E} \left[ \prod_{i = 1}^D e^{t \alpha_i \mathbbm{1} [\alpha_i > \Delta]} \right] 
    & = & \mathbb{E} \mleft[ \sum_{k = 1}^L \prod_{i : \alpha_i > \Delta} e^{t
    \alpha_i} \cdot \mathbbm{1} [\nu(\orgvec{\alpha}) = k] +
    \mathbbm{1} [\nu(\orgvec{\alpha}) = 0] \mright] \nonumber\\
    & = & \sum_{k = 1}^L \mathbb{E} \mleft[ \prod_{i : \alpha_i > \Delta} e^{t
    \alpha_i} \cdot \mathbbm{1} [\nu(\orgvec{\alpha}) = k] \mright]
    + 1 \cdot \Pr [\nu(\orgvec{\alpha}) = 0] \nonumber\\
    & \leq & L \exp \mleft( - \frac{1}{80} \Delta \log D \mright) + \Pr [\nu(\orgvec{\alpha}) = 0]  
    \label{equation:application_multinomial_truncated_MGF}\\
    & \leq & c D^{1/2} \exp \mleft( - \frac{1}{80} \Delta \log D \mright) + 1\,, \nonumber
  \end{eqnarray}
  where (\ref{equation:application_multinomial_truncated_MGF}) follows from
  Lemma \ref{lemma:multinomial_truncated_MGF}.
\end{proof}

We now state and prove our last lemma, Lemma~\ref{lemma:MGF_Binomial_square_bound_with_min}, on the MGF of the square of a truncated Binomial:

\MGFSquaredCappedBinomial
\begin{proof}
  We will analyze the sampling process in Definition \ref{definition:coupling-decoupling-sampling-procedure-attempt-2}:
  \begin{definition}
    \label{definition:coupling-decoupling-sampling-procedure-attempt-2}
    Fix integers $m \geq \Delta \geq 1$, and let $X'_1,\dots, X'_m$ be i.i.d.\ $\tmop{Bern} (p)$ random variables. Define the distribution of $X_1,\dots,X_m$ through the following sampling process:
    \begin{enumerate}
      \item Initialize $X_i = 0$ for all $i \in [m]$; sample $\{ X_i' \}_{1\leq i\leq m}$ as $m$
      i.i.d.\ $\tmop{Bern} (p)$;
      \item If $\sum_{i \in [m]} X_i' < \Delta$, let $X_i = X_i'$ for all $i \in
      [m]$;
      \item If $\sum_{i \in [m]} X_i' \geq \Delta$, let $\mathcal{S}' = \{ i \in
      [m] \of X_i' = 1 \}$ and let $\mathcal{S}$ be a uniformly random subset of
      $\mathcal{S}'$ of size $\Delta$; set $X_i = X_i'$ for $i \in \mathcal{S}$.
    \end{enumerate}
  \end{definition}
  \noindent Consider a sequence of random variable $X_1,\dots,X_m$ as defined in
  Definition~\ref{definition:coupling-decoupling-sampling-procedure-attempt-2}; each
  $X_i$ (for $1\leq i\leq m$) is supported on $\{ 0, 1 \}$ (so that, in particular, $X_i^2=X_i$); and $X = \sum_{i \in [m]}
  X_i$. By the Cauchy--Schwarz inequality,
  \begin{eqnarray}
    \mathbb{E} [e^{t X^2}] & = & \mathbb{E} \left[ e^{t \sum_{i = 1}^m X_i + t
    \sum_{i \neq j} X_i X_j} \right] \nonumber\\
    & \leqslant & \sqrt{\mathbb{E} \left[ e^{2 t \sum_{i = 1}^m X_i} \right]}
    \sqrt{\mathbb{E} \left[ e^{2 t \sum_{i \neq j} X_i X_j} \right]}
    \nonumber\\
    & \leqslant & \sqrt{\mathbb{E} \left[ e^{2 t \sum_{i = 1}^m X_i} \right]}
    \sqrt{\mathbb{E} \left[ e^{8 t \sum_{(i, j) \in I \times I^c} X_i X_j}
    \right]} 
    \label{equation:coupling-decoupling-application-square-truncated-binomials}\\
    & \leqslant & \sqrt{\mathbb{E} \left[ e^{2 t X} \right]}
    \sqrt{\mathbb{E} [e^{8 t Y_1 Y_2}]} 
    \label{equation:dominance_application}
  \end{eqnarray}
  where $Y_1 \sim \min (\tmop{Bin} (| I |, p), \Delta)$, $Y_2 \sim \min (\tmop{Bin}
  (| I^c |, p), \Delta)$ and $Y_1$ is independent of $Y_2$ (and $(I, I^c)$ is some fixed, but   unknown partition of $[m]$).
  \eqref{equation:coupling-decoupling-application-square-truncated-binomials} follows from the intermediate step~\eqref{eq:decoupling:intermediate} in the proof of Lemma~\ref{lemma:decoupling} (observing that $x\mapsto e^{t x}$ is convex, and
  non-decreasing as $t>0$; and using the remark from that proof about the independence of $X_i$'s not being required up to that step) and (\ref{equation:dominance_application}) follows from Lemma
  \ref{lemma:coupling-for-decoupling-attempt-2}. We will implicitly use Facts~\ref{fact:useful_dominance_order},
  \ref{fact:min_preserves_dominance_order}, and~\ref{fact:dominance_between_binomials} for the remaining calculations, eventually replacing most expressions with $X' \sim \tmop{Bin} (m, p)$.
  
  Recalling that $X \leq X'$ by definition, the first term 
  of~\eqref{equation:dominance_application} can be bounded as $\mathbb{E} [e^{2 t X}] \leqslant e^{4 t m p}$. Moreover, from our assumption, $t Y_1 \leq t \Delta \leq 1 / 8$ and $t m p
  \leq 1 / 16$. Combined with the fact that $Y_1, Y_2$ is dominated by $X
  \sim \min (\tmop{Bin} (m, p), \Delta)$ and thus by $X' \sim \tmop{Bin} (m, p)$,
  we have
    \[
        \mathbb{E} [e^{8 t Y_1 Y_2}] = \mathbb{E}_{Y_1} [\mathbb{E}_{Y_2}
        [e^{8 t Y_1 Y_2}]]
        \leqslant \mathbb{E}_{Y_1} [e^{16 t Y_1 m p}]
        \leqslant e^{32 t m^2 p^2} .
    \]
  Going back to~\eqref{equation:dominance_application},
  this implies
    \[
        \mathbb{E} [\exp \left(t X^2 \right)] \leq \sqrt{\exp \left( 4 t m p \right)}  \sqrt{\exp \left( 32 t m^2 p^2 \right)} = \exp \left(2 t m p + 16 t m^2 p^2 \right),
    \]
  concluding the proof.
\end{proof}

\subsection{Stochastic dominance results between truncated Binomials} %
\begin{fact}\label{fact:useful_dominance_order}
Let $X \sim \tmop{Bin} (m, p)$, and $0 < n \leq m$. Defining $Y \eqdef \min (X, n)$ and $Z \eqdef X \mid X\leq
n$, we have, for every $k\geq 0$,
\[ \Pr [X \geq k] \geq \Pr [Y \geq k] \geq
   \Pr [Z \geq k], \]
i.e., $X \succeq Y \succeq Z$, where $\succeq$ denotes first-order stochastic dominance.
\end{fact}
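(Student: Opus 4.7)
The plan is to establish the two stochastic dominance relations separately, as both reduce to elementary pointwise/complement arguments that do not actually use any specific feature of the Binomial distribution (so the proof would work for any $\mathbb{N}$-valued $X$ and any threshold $n$).

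For the first inequality $X \succeq Y$, I would simply observe that $Y = \min(X, n) \leq X$ pointwise on the sample space, so the event $\{Y \geq k\}$ is contained in $\{X \geq k\}$ for every $k \geq 0$, yielding $\Pr[Y \geq k] \leq \Pr[X \geq k]$ with essentially no work.

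For the second inequality $Y \succeq Z$, I would pass to the complementary tails and split into cases according to the threshold. For $k > n$, both $Y$ and $Z$ are supported on $\{0, 1, \ldots, n\}$, so both tail probabilities vanish and the claim is trivial. For $0 \leq k \leq n$, I would use that $\min(X, n) < k$ is equivalent to $X < k$ (because $n \geq k$ forces the minimum to be achieved by $X$ whenever it lies below $k$), so $\Pr[Y < k] = \Pr[X < k]$. On the other hand, since $X < k$ and $k \leq n$ together imply $X \leq n$, the conditional tail satisfies $\Pr[Z < k] = \Pr[X < k \mid X \leq n] = \Pr[X < k]/\Pr[X \leq n]$. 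Because $\Pr[X \leq n] \leq 1$, this is at least $\Pr[X < k] = \Pr[Y < k]$, and taking complements gives $\Pr[Y \geq k] \geq \Pr[Z \geq k]$, as desired.

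There is no real obstacle here: the statement is essentially an exercise in manipulating cumulative distribution functions, and the only mildly delicate points are remembering to handle the boundary case $k > n$ separately and observing the identity $\{X < k\} \cap \{X \leq n\} = \{X < k\}$ on the relevant range of $k$ to simplify the conditional tail.
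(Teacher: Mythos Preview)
Your proposal is correct and follows essentially the same approach as the paper: both arguments compute the relevant tails explicitly and compare, with the only cosmetic difference being that you work with the complementary (lower) tails for the second inequality while the paper manipulates the upper tails directly via an equivalent chain of rearrangements. Your observation that the argument uses nothing specific about the Binomial is also accurate.
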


\begin{proof}
  We can write the PMF of $Z$ and $Y$, for all $0\leq k\leq n$,
  \[ \Pr [Y = k] = \left\{\begin{array}{ll}
       \Pr [X = k], & k < n\\
       \Pr [X \geq n], & k = n
     \end{array}\right., \qquad \Pr [Z = k] = \frac{\Pr [X = k]}{\Pr [X \leq n]} . \]
  It follows that $\Pr [Y \geq k] = \Pr [X \geq k]\mathbbm{1}\{k\leq n\}$, which gives the first part of the statement.
  
  The second part follows from a direct comparison between the two CDF of $Z, Y$: indeed, for $0\leq k \leq n$,
  \begin{eqnarray*}
   \Pr [Y \geq k] \geq \Pr [Z \geq k]
    & \Leftrightarrow & \Pr [X \geq k] \geq \frac{\Pr [n \geq
    X \geq k]}{\Pr [X \leq n]}\\
    & \Leftrightarrow & \Pr [X \geq k] (1 - \Pr [X > n])
    \geq \Pr [X \geq k] - \Pr [X > n]\\
    & \Leftrightarrow & \Pr [X \geq k] \Pr [X > n] \leq \Pr [X > n]\\
    & \Leftrightarrow & \Pr [X \geq k] \leq 1\,,
  \end{eqnarray*}
  and this last inequality clearly holds.
\end{proof}

We also record the facts below, which follow respectively from the more general result that first-order stochastic dominance is preserved by non-decreasing mappings, and from a coupling argument.
\begin{fact}
\label{fact:min_preserves_dominance_order}
Consider two real-valued random variables $X,Y$, 
and $n \geq 0$. If $X \succeq Y$, then $\min(X,n) \succeq \min(Y,n)$: for all $k$,
\[ \Pr [\min (X, n) \geq k] \geq \Pr [\min (Y, n) \geq k]\,; \]
i.e., the $\min$ operator preserves first-order stochastic dominance relation.
\end{fact}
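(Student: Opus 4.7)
The plan is to verify the claimed inequality by a direct case analysis on $k$, with the underlying reason being that $x \mapsto \min(x,n)$ is a non-decreasing function and first-order stochastic dominance is preserved under non-decreasing transformations. To keep the argument self-contained at the level the paper needs, I would not invoke the abstract characterization $X \succeq Y \Leftrightarrow \mathbb{E}[f(X)] \geq \mathbb{E}[f(Y)]$ for all non-decreasing $f$, and instead just write out the two cases.

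First I would fix an arbitrary real $k$ and split. In the case $k \leq n$, I would observe that the event $\{\min(X,n) \geq k\}$ is exactly the event $\{X \geq k\}$, since $\min(X,n) \geq k$ requires both $X \geq k$ and $n \geq k$, and the latter holds by assumption on $k$. The same rewriting applies to $Y$, so
\[
\Pr[\min(X,n) \geq k] = \Pr[X \geq k] \geq \Pr[Y \geq k] = \Pr[\min(Y,n) \geq k],
\]
where the middle inequality is exactly the hypothesis $X \succeq Y$. In the case $k > n$, both random variables $\min(X,n)$ and $\min(Y,n)$ are bounded above by $n < k$ almost surely, so $\Pr[\min(X,n) \geq k] = \Pr[\min(Y,n) \geq k] = 0$ and the inequality holds trivially.

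There is no serious obstacle here; the only point requiring care is the boundary case $k = n$, which is absorbed into the first case via the identity $\{\min(X,n) \geq n\} = \{X \geq n\}$. Combining the two cases gives $\Pr[\min(X,n) \geq k] \geq \Pr[\min(Y,n) \geq k]$ for every $k$, which is precisely $\min(X,n) \succeq \min(Y,n)$, as claimed.
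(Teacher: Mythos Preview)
Your proof is correct. The paper itself does not give a proof of this fact; it simply records it as following ``from the more general result that first-order stochastic dominance is preserved by non-decreasing mappings.'' Your case analysis on $k$ is exactly the elementary unpacking of that general principle specialized to the map $x \mapsto \min(x,n)$, so the two approaches are essentially the same, with yours being self-contained rather than a citation.
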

\begin{fact}
\label{fact:dominance_between_binomials}
Let $X \sim \tmop{Bin} (n, p)$ and $Y \sim \tmop{Bin} (m, p)$, where $m \geq n$. Then $X \preceq Y$.
\end{fact}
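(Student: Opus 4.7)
\textbf{Proof proposal for Fact~\ref{fact:dominance_between_binomials}.} The plan is to exhibit an explicit coupling on a common probability space that realizes the dominance pointwise, from which first-order stochastic dominance is immediate. Since $m \geq n$ by hypothesis, the construction is completely natural: take $m$ independent Bernoulli random variables and use the first $n$ of them to realize the law of $X$.

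Concretely, I would let $Z_1, \dots, Z_m$ be i.i.d.\ $\tmop{Bern}(p)$ on a common probability space, and set $\tilde X \eqdef \sum_{i=1}^n Z_i$ and $\tilde Y \eqdef \sum_{i=1}^m Z_i$. Then $\tilde X \sim \tmop{Bin}(n,p)$ and $\tilde Y \sim \tmop{Bin}(m,p)$, so $\tilde X$ and $\tilde Y$ have the same marginal laws as $X$ and $Y$ respectively. Because each $Z_i$ takes values in $\{0,1\}$ and hence is non-negative, and because $m \geq n$, we have
\[
\tilde Y - \tilde X = \sum_{i=n+1}^m Z_i \geq 0 \quad \text{almost surely,}
\]
so $\tilde X \leq \tilde Y$ pointwise. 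For any $k \in \mathbb{R}$, the event $\{\tilde X \geq k\}$ is then contained in $\{\tilde Y \geq k\}$, giving
\[
\Pr[X \geq k] = \Pr[\tilde X \geq k] \leq \Pr[\tilde Y \geq k] = \Pr[Y \geq k],
\]
which is exactly the definition of $X \preceq Y$ in the sense of first-order stochastic dominance.

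This argument is essentially routine; there is no real obstacle, and the proof should fit in a few lines. The only minor point worth flagging is that the dominance conclusion uses precisely the characterization of $\succeq$ via tail probabilities employed throughout the preceding facts (Facts~\ref{fact:useful_dominance_order} and~\ref{fact:min_preserves_dominance_order}), so the presentation will be consistent with the rest of this subsection.
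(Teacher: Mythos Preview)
Your proposal is correct and is exactly the coupling argument the paper alludes to (it records this fact as following ``from a coupling argument'' without spelling it out). There is nothing to add.
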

\begin{lemma}
  \label{lemma:coupling-for-decoupling-attempt-2}
  Let
  $X_1,\dots,X_m$ be sampled from the sampling process in Definition~\ref{definition:coupling-decoupling-sampling-procedure-attempt-2}, and $I,
  I^c$ be any partition of $[m]$. 
  Define $Z_I \eqdef \sum_{i \in I} X_i$, $Z_{I^c} \eqdef \sum_{i \in I^c} X_i$, and $Y_I
  \sim \min (\tmop{Bin} (| I |, p), n)$, $Y_{I^c} \sim \min (\tmop{Bin} (| I^c
  |, p), n)$. Then 
  \[
    Z_I \cdot Z_{I^c} \preceq Y_I \cdot Y_{I^c}\,.
  \]
\end{lemma}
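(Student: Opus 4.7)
The plan is to exhibit an explicit pointwise-dominating coupling of $(Z_I, Z_{I^c})$ with a pair of random variables whose joint distribution equals that of $(Y_I, Y_{I^c})$, and to deduce the stochastic dominance of the products from the resulting coordinatewise product inequality.

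First I would reuse the independent Bernoullis $X_1', \dots, X_m'$ already produced at step~1 of Definition~\ref{definition:coupling-decoupling-sampling-procedure-attempt-2}, and set $Z_I' \eqdef \sum_{i \in I} X_i'$ and $Z_{I^c}' \eqdef \sum_{i \in I^c} X_i'$. Inspecting the three branches of the sampling procedure yields $X_i \leq X_i'$ pointwise: in branches~1 and~2 they are equal, and in branch~3 we only zero out entries with $X_i' = 1$. Since the procedure also enforces $\sum_{i=1}^m X_i \leq n$ (in branch~2 this holds trivially, in branch~3 by design), we obtain $Z_I \leq \min(Z_I', n)$ and $Z_{I^c} \leq \min(Z_{I^c}', n)$ simultaneously under this coupling.

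Next, because $I$ and $I^c$ are disjoint, $Z_I'$ and $Z_{I^c}'$ are functions of disjoint blocks of i.i.d.\ Bernoullis, hence independent with laws $\tmop{Bin}(|I|, p)$ and $\tmop{Bin}(|I^c|, p)$, respectively. Consequently $(\min(Z_I', n), \min(Z_{I^c}', n))$ is a pair of independent random variables with the same marginals as $(Y_I, Y_{I^c})$, so the two pairs share the same joint distribution. All four variables are nonnegative, so multiplying the two coordinatewise inequalities preserves them, yielding $Z_I \cdot Z_{I^c} \leq \min(Z_I', n) \cdot \min(Z_{I^c}', n)$ pointwise; combined with the distributional identity this gives $Z_I \cdot Z_{I^c} \preceq Y_I \cdot Y_{I^c}$.

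I do not expect any serious obstacle here: the argument is a short coupling/decoupling, and the only point needing verification is the pointwise bound $X_i \leq X_i'$, which is immediate from Definition~\ref{definition:coupling-decoupling-sampling-procedure-attempt-2}. The reason a stochastic bound suffices (rather than a distributional equality) is exactly what is needed downstream in Lemma~\ref{lemma:MGF_Binomial_square_bound_with_min}: the MGF $u \mapsto e^{8 t u}$ applied to the product is non-decreasing, so a first-order stochastic upper bound on $Z_I \cdot Z_{I^c}$ is enough to pass to the independent pair $Y_I \cdot Y_{I^c}$ on which the factorization $\mathbb{E}[e^{8 t Y_1 Y_2}] = \mathbb{E}_{Y_1}[\mathbb{E}_{Y_2}[e^{8 t Y_1 Y_2}]]$ can be exploited.
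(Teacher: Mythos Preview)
Your proof is correct and follows the same coupling idea as the paper: both arguments couple $(Z_I,Z_{I^c})$ to a dominating pair built from the underlying Bernoullis $X_i'$, then observe that this pair has the law of independent $\min(\tmop{Bin}(\cdot,p),n)$ variables. Your version is cleaner, since you take $\min(Z_I',n)$ and $\min(Z_{I^c}',n)$ directly, whereas the paper obtains the same quantities via an (unnecessary) random subset selection step before checking they have the right marginals.
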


\begin{proof}
  We prove the lemma by defining a coupling $Z_I, Z_{I^c}, Y_I, Y_{I^c}$ such that $Z_I \cdot Z_{I^c} \leq Y_I \cdot Y_{I^c}$ with probability one. The sampling process below will
    generate samples $(X_i)_{1\leq i\leq m},  Z_I, Z_{I^c}, Y_I, Y_{I^c}$ for all possible
    realizations of $I$ and $I^c$. In other words, from a given sequence $\{
    X_i' \}_{i \in [m]}$, we will generate $\{ X_i \}_{i \in [m]}, Y_{I_1},
    Y_{I^c_1}, Y_{I_2}, Y_{I_2^c}, \ldots, Y_{I_{2^m}}, Y_{I_{2^m}^c}, Z_{I_1},
    Z_{I^c_1}, Z_{I_2}, Z_{I_2^c}, \ldots, Z_{I_{2^m}}, Z_{I_{2^m}^c}$, where the $(I_i, I^c_i)$ enumerate all partitions of $[m]$ in two sets.
    \begin{enumerate}
      \item Initialize $X_i = 0$ for all $i \in [m]$; sample $( X_i' )_{1\leq i\leq m}$ as
      $m$ i.i.d.\ $\tmop{Bern} (p)$;
      
      \item If $\sum_{i \in [m]} X_i' < n$, let $X_i = X_i'$ for all $i \in
      [m]$;
      
      \item If $\sum_{i \in [m]} X_i' \geq n$, let $\mathcal{S}' = \{ i
      \in [m] \of X_i' = 1 \}$ and let $\mathcal{S}$ be a uniformly random subset
      of $\mathcal{S}'$ with size $n$; set $X_i =
      X_i'$ for $i \in \mathcal{S}$.
      
      \item \label{enumerate:the-coupling-step}For each $I \in \{ I_1, \ldots,
      I_{2^m} \}$, denote $\mathcal{S}_I' =\mathcal{S}' \cap I$. Select a
      uniformly random subset of $\mathcal{S}_I'$ with at most $n$ indices which is a
      superset of $\mathcal{S} \cap I$. In more detail, if $| \mathcal{S} \cap
      I | < n$, select $\min (| \mathcal{S}_I' |, n) - | \mathcal{S} \cap I |$
      elements uniformly at random from $\mathcal{S}_I' \setminus (\mathcal{S} \cap
      I)$ to add to $\mathcal{S} \cap I$, which becomes $\mathcal{S}_I$;
      else, let $\mathcal{S}_I =\mathcal{S} \cap I$. Repeat a similar process for $I^c$ to obtain
      $\mathcal{S}_{I^c}$.
      
      \item For each $I \in \{ I_1, \ldots, I_{2^m} \}$, set $Y_I = \sum_{i \in
      \mathcal{S}_I} X_i'$ and $Y_{I^c} = \sum_{i \in S_{I^c}} X_i'$.
    \end{enumerate}
 
  From the above definition, we can
  readily see that for any $I$, $Y_I \geq Z_I$ and $Y_{I^c} \geq
  Z_{I^c}$. What is left is to argue that the $Y_I \sim \min (\tmop{Bin} (| I
  |, p), n)$ and $Y_{I^c} \sim \min (\tmop{Bin} (| I^c |, p), n)$. We start by
  noting that for any $k < n$, $\{ Y_I = k \} = \{ | \mathcal{S}_I | = k \} =
  \{ | \mathcal{S}_I' | = k \}$. The last equality comes from the fact that $|
  \mathcal{S}_I | < n$ can only mean that $| \mathcal{S}_I' | < n$, and the
  selection process in step \ref{enumerate:the-coupling-step} will thus add all elements from
  $\mathcal{S}_I'$ to $\mathcal{S}_I$. From here, we have $\Pr [Y_I = k] = \Pr
  [| \mathcal{S}_I' | = k] = \Pr [\tmop{Bin} (m, p) = k]$, for $k < n$; and we
  have $\Pr [Y_I = n] = 1 - \Pr [Y_I < n] = \Pr [\tmop{Bin} (m, p) \geq
  n]$. As a result, $Y_I \sim \min (\tmop{Bin} (| I |, p), n)$. Similarly, we
  can argue that $Y_{I^c} \sim \min (\tmop{Bin} (| I^c |, p), n)$.
\end{proof}

\section*{Acknowledgments}
Yang would like to thank Vipul Arora and Philips George John for the helpful discussions on the lower bound analysis; and Vipul, specifically, for providing valuable feedback on the manuscript.

\bibliographystyle{plainnat}
\bibliography{references}

\appendix

\section{Deferred Proofs}
\subsection{Proof of Lemma~\ref{corollary:hellinger_projection_lb_product}}
\label{proof:lemma:hellinger_projection_lb_bayes_net}
\new{
\CorollaryHellingerProjectionLbProduct*
  \begin{proof}
    Since squared Hellinger distance is an $f$-divergence, by the data processing inequality, we have that
    \begin{equation}\label{eq:hellinger:projection}
    d_{\rm{}H} (P, Q) \geqslant d_{\rm{}H} (\pi P, \pi Q).
    \end{equation}
    By the subadditivity of Hellinger
    \new{
      for Bayes nets from Corollary~\ref{corollary:squared_hellinger_subadditivity} along with~\eqref{eq:hellinger:projection}, we obtain the following:
      \begin{eqnarray*}
        d_{\rm{}H} (P, P') & \leqslant & d_{\rm{}H} (P, Q) + d_{\rm{}H} (Q, P')\\
        & \leqslant & d_{\rm{}H} (P, Q) + \Big(\sum_{i = 1}^n d_{\rm{}H}^2 (P'_{X_i}, Q_{X_i}) \Big)^{1/2}\\
        & = & d_{\rm{}H} (P, Q) + \Big(\sum_{i = 1}^n d_{\rm{}H}^2 (P_{X_i}, Q_{X_i}) \Big)^{1/2}\\
        & \leqslant & d_{\rm{}H} (P, Q) + \sqrt{nd_{\rm{}H}^2 (P, Q)}\\
        & = & \left( 1 + \sqrt{n} \right) d_{\rm{}H} (P, Q).
      \end{eqnarray*}
    }
  \end{proof}
}
\subsection{Proof of Lemma~\ref{lemma:technical_distance_lb_2xbinomials_conditional_tree}}
\new{\DistanceLbBinomialsConditionalTree*}
\begin{proof}
  \label{proof:lemma:technical_distance_lb_2xbinomials_conditional_tree}
  By concentration of Binomials,
  \begin{eqnarray*}
    &  & \left( \frac{1}{2} \right)^n (1 - \delta)^n \sum_{k_1 =
    0}^{a} \sum_{k_2 = 0}^{b} \binom{a}{k_1}
    \binom{b}{k_2} \left| \left( \left( \frac{1 + \delta}{1 - \delta}
    \right)^{k_1 + k_2} - \left( \frac{1 + \delta}{1 - \delta} \right)^{k_1
    + b - k_2} \right) \right|\\
    & \geqslant & \left( \frac{1}{2} \right)^n (1 - \delta)^n \sum_{k_1 =
    \frac{a}{2} + \sqrt{a}}^{\frac{a}{2} + 2 \sqrt{a}}
    \sum_{k_2 = \frac{b}{2} + \sqrt{b}}^{\frac{b}{2} + 2
    \sqrt{b}} \binom{a}{k_1} \binom{b}{k_2} \left| \left(
    \left( \frac{1 + \delta}{1 - \delta} \right)^{k_1 + k_2} - \left(
    \frac{1 + \delta}{1 - \delta} \right)^{k_1 + b - k_2} \right)
    \right|\\
    & \geqslant & \left( \frac{1}{2} \right)^n \frac{C}{\sqrt{a
    b}} \cdot 2^{| a | + | b |} \sum_{k_1 = \frac{a}{2} +
    \sqrt{a}}^{\frac{a}{2} + 2 \sqrt{a}} \sum_{k_2 =
    \frac{b}{2} + \sqrt{b}}^{\frac{b}{2} + 2 \sqrt{b}} (1 -
    \delta)^n \left| \left( \left( \frac{1 + \delta}{1 - \delta}
    \right)^{k_1 + k_2} - \left( \frac{1 + \delta}{1 - \delta} \right)^{k_1
    + b - k_2} \right) \right|\\
    & \geqslant & \frac{C}{\sqrt{a b}} \sum_{k_1 =
    \frac{a}{2} + \sqrt{a}}^{\frac{a}{2} + 2 \sqrt{a}}
    \sum_{k_2 = \frac{b}{2} + \sqrt{b}}^{\frac{b}{2} + 2
    \sqrt{b}} (1 - \delta)^n \left| \left( \left( \frac{1 + \delta}{1 -
    \delta} \right)^{k_1 + k_2} - \left( \frac{1 + \delta}{1 - \delta}
    \right)^{k_1 + k_2 + b - 2 k_2} \right) \right|
  \end{eqnarray*}
  where $C$ is some constant larger than 0. For $k_1 + k_2 - \frac{n}{2} = l \in
  \left[ \sqrt{a} + \sqrt{b}, 2 \left( \sqrt{a} + \sqrt{b}
  \right) \right]$, $l \geqslant \sqrt{a + b} = \sqrt{n}$.
  \begin{eqnarray}
    &  & (1 - \delta)^n \left| \left( \left( \frac{1 + \delta}{1 - \delta}
    \right)^{k_1 + k_2} - \left( \frac{1 + \delta}{1 - \delta} \right)^{k_1 +
    k_2 + b - 2 k_2} \right) \right| \nonumber\\
    & = & \left| \left( (1 - \delta)^n \left( \frac{1 + \delta}{1 - \delta}
    \right)^{n / 2 + l} - (1 - \delta)^n \left( \frac{1 + \delta}{1 - \delta}
    \right)^{n / 2 + l + b - 2 k_2} \right) \right| \nonumber\\
    & = & (1 - \delta^2)^{n / 2} \left( \frac{1 + \delta}{1 - \delta} \right)^l
    \left| 1 - \left( \frac{1 + \delta}{1 - \delta} \right)^{b - 2 k_2}
    \right| \nonumber\\
    & \geqslant & e^{- \delta^2 n} e^{2 \delta \cdot l}  \left| 1 - \left(
    \frac{1 + \delta}{1 - \delta} \right)^{b - 2 k_2} \right| \geqslant e^{2
    \varepsilon - \varepsilon^2}  \left| 1 - \left( \frac{1 + \delta}{1 -
    \delta} \right)^{b - 2 k_2} \right|  \label{eq:bin_asym_1}\\
    & \geqslant & e^{\varepsilon} \left( 1 - \left( \frac{1 + \delta}{1 -
    \delta} \right)^{b - 2 k_2} \right) \geqslant e^{\varepsilon} (1 - e^{4
    \delta (b - 2 k_2)}) .  \label{eq:bin_asym_2}
  \end{eqnarray}
  where (\ref{eq:bin_asym_1}) and (\ref{eq:bin_asym_2}) follows from $1 - x
  \geqslant e^{- 2 x}$, $0 < x < 0.79$ and $e^{4 x} \geqslant \left( \frac{1 +
  x}{1 - x} \right) \geqslant e^{2 x}$, $0 < x < 0.95$. All these inequalities
  hold for $n$ larger some constant and every $\varepsilon \in (0, 1]$. Since
  $b \geqslant \frac{1}{4} n$, and by the summation above $b - 2 k_2 \in
  \left[ - 4 \sqrt{b}, - 2 \sqrt{b} \right]$,
  \[ e^{\varepsilon} (1 - e^{4 \delta (b - 2 k_2)}) \geqslant
     e^{\varepsilon} \left( 1 - e^{- 8 \frac{\varepsilon}{\sqrt{n}} 
     \sqrt{b}} \right) \geqslant e^{\varepsilon} (1 - e^{- 4 \varepsilon})
     \geqslant \varepsilon . \]
  and therefore, summing up every term, we have our lower bound
  \[
  \frac{C}{\sqrt{a b}} \sum_{k_1 = \frac{a}{2} +
    \sqrt{a}}^{\frac{a}{2} + 2 \sqrt{a}} \sum_{k_2 =
    \frac{b}{2} + \sqrt{b}}^{\frac{b}{2} + 2 \sqrt{b}} (1 -
    \delta)^n \left| \left( \left( \frac{1 + \delta}{1 - \delta}
    \right)^{k_1 + k_2} - \left( \frac{1 + \delta}{1 - \delta} \right)^{k_1
    + k_2 + b - 2 k_2} \right) \right| \geqslant C \varepsilon
  \]
  concluding the proof.
\end{proof}

\subsection{Proof of~\eqref{fact:intermedia_upper_bound_on_multigraph_cycle_term:restated}}
\label{app:fact:intermedia_upper_bound_on_multigraph_cycle_term:proof}

\begin{fact}
  \label{fact:intermedia_upper_bound_on_multigraph_cycle_term}
  For any set of cycles such that $\sum_i
  | \sigma_i | \leqslant n$, we have
  \[
    \prod_{\substack{\sigma_i : \tmop{even}\\ | \sigma_i | \geqslant 4}} (1 + (- 4
    \delta)^{| \sigma_i |}) \prod_{\substack{\sigma_i : \tmop{odd}\\ | \sigma_i | \geqslant 4}} (1 - (- 4 \delta)^{| \sigma_i |})
    \leqslant e^{O \left( \varepsilon^5 / n^{\frac{3}{2}} \right)}
    \prod_{\substack{\sigma_i : \tmop{even}\\ | \sigma_i | = 4}} (1 + (4 \delta)^4)
    \prod_{\substack{\sigma_i : \tmop{odd}\\ | \sigma_i | = 4}} (1 - (4 \delta)^4)
  \]
  \begin{proof}
  We have
    \begin{eqnarray*}
      \prod_{\sigma_i : \tmop{even}} &  & (1 + (- 4 \delta)^{| \sigma_i |})
      \prod_{\sigma_i : \tmop{odd}} (1 - (- 4 \delta)^{| \sigma_i |})\\
      & = & \prod_{\substack{\sigma_i : \tmop{even}\\ | \sigma_i | \geq 5}} (1 + (-
      4 \delta)^{| \sigma_i |}) \prod_{\substack{\sigma_i : \tmop{odd}\\ | \sigma_i | \geq 5}} (1 - (- 4 \delta)^{| \sigma_i |})
      \prod_{\substack{\sigma_i : \tmop{even}\\ | \sigma_i | = 4}} (1 + (- 4
      \delta)^{| \sigma_i |}) \prod_{\substack{\sigma_i : \tmop{odd}\\ | \sigma_i | = 4}}
      (1 - (- 4 \delta)^{| \sigma_i |})\\
      & \leqslant & \prod_{\sigma_i : | \sigma_i | > 5} (1 + (4 \delta)^{|
      \sigma_i |}) \prod_{\sigma_i : \tmop{even}, | \sigma_i | = 4} (1 + (- 4
      \delta)^{| \sigma_i |}) \prod_{\sigma_i : \tmop{odd}, | \sigma_i | = 4}
      (1 - (- 4 \delta)^{| \sigma_i |})\\
      & \leqslant & (1 + (4 \delta)^5)^{\frac{n}{4}} \prod_{\sigma_i :
      \tmop{even}, | \sigma_i | = 4} (1 + (- 4 \delta)^{| \sigma_i |})
      \prod_{\sigma_i : \tmop{odd}, | \sigma_i | = 4} (1 - (- 4 \delta)^{|
      \sigma_i |})\\
      & \leqslant & e^{(4 \delta)^5 n / 4} \prod_{\sigma_i : \tmop{even}, |
      \sigma_i | = 4} (1 + (- 4 \delta)^{| \sigma_i |}) \prod_{\sigma_i :
      \tmop{odd}, | \sigma_i | = 4} (1 - (- 4 \delta)^{| \sigma_i |})\\
      & \leq & e^{256\varepsilon^5 / n^{\frac{3}{2}}}
      \prod_{\sigma_i : \tmop{even}, | \sigma_i | = 4} (1 + (- 4 \delta)^4)
      \prod_{\sigma_i : \tmop{odd}, | \sigma_i | = 4} (1 - (- 4 \delta)^4)
    \end{eqnarray*}
    
  \end{proof}
\end{fact}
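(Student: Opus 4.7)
The plan is to split each product over cycles by length, separating cycles of length exactly $4$ from cycles of length $\geq 5$. The length-$4$ factors will match the right-hand side exactly (since $(-4\delta)^4 = (4\delta)^4$ makes both the even and odd cases collapse to the same expression on each side), while the length-$\geq 5$ factors will be absorbed into the $e^{O(\varepsilon^5/n^{3/2})}$ slack term using a crude uniform upper bound.

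First, I would observe that for any cycle $\sigma$ with $|\sigma|$ even, $(1+(-4\delta)^{|\sigma|}) = (1+(4\delta)^{|\sigma|})$, and for $|\sigma|$ odd, $(1-(-4\delta)^{|\sigma|}) = (1+(4\delta)^{|\sigma|})$. So both factors for cycles of length $\geq 5$ are of the form $(1+(4\delta)^{|\sigma|}) \leq 1 + (4\delta)^5$, since $4\delta < 1$ for $\eps \leq 1$ and $n$ large. Next, since every cycle in the product has length at least $5$ and the total length of all such cycles is at most $n$ (by the assumption $\sum_i|\sigma_i| \leq n$), the number of such cycles is at most $n/5 \leq n/4$. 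Therefore
\[
  \prod_{\substack{\sigma_i:\text{even}\\|\sigma_i|\geq 5}}(1+(-4\delta)^{|\sigma_i|}) \prod_{\substack{\sigma_i:\text{odd}\\|\sigma_i|\geq 5}}(1-(-4\delta)^{|\sigma_i|}) \leq (1+(4\delta)^5)^{n/4} \leq e^{(4\delta)^5 n/4}.
\]
Plugging in $\delta = \varepsilon/\sqrt{n}$ gives $(4\delta)^5 n/4 = 4^4\,\varepsilon^5/n^{3/2} = O(\varepsilon^5/n^{3/2})$, which is exactly the form of the slack factor claimed.

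Finally, I would combine this with the length-$4$ factors, which are unchanged under the above manipulation (since $(-4\delta)^4 = (4\delta)^4$), to match precisely the two products on the right-hand side. I do not expect any real obstacle: the argument is entirely elementary, and the only sanity check is ensuring that $4\delta < 1$ (so that $(4\delta)^{|\sigma|}$ is maximized at $|\sigma| = 5$ over the tail), which is harmless under the paper's standing assumption that $\eps \in (0,1]$ and $n$ is large enough.
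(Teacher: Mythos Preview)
Your proposal is correct and follows exactly the paper's approach: split off the length-$4$ cycles, bound each remaining factor by $1+(4\delta)^5$, use $\sum_i|\sigma_i|\leq n$ to count at most $n/4$ such factors, and exponentiate with $\delta=\varepsilon/\sqrt{n}$. One small caveat: in the paper ``even/odd'' refers to the cycle's \emph{orientation} parity (number of negatively-correlated edges), not the parity of $|\sigma|$, so your equality claim that every length-$\geq 5$ factor equals $1+(4\delta)^{|\sigma|}$ is not literally right for the orientation-odd ones (which give $1-(4\delta)^{|\sigma|}$, since in context $|\sigma|$ is always even); but the upper bound $\leq 1+(4\delta)^5$ holds either way and the argument goes through unchanged.
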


\section{Structured Testing Lower Bound}\label{sec:gen}

Letting $D = 2^n$, we will rely on the construction from the ``standard'' lower bound of~{\cite{paninski2008coincidence}} by picking a uniformly
random subset $S$ of $\{0, 1\}^n$ of size $\frac{D}{2}$. Denote
$\mathcal{S}$ the set of all such combinations of $S$, and define
$\mathcal{P}_{\tmop{no}}$ to be $\mathcal{P}_{\tmop{no}} \assign \left\{ P =
\frac{1 + C \eps}{2} U_S + \frac{1 - C \eps}{2} U_{S^c} \mid S \in \mathcal{S}
\right\}$, where $C>0$ is a suitable normalizing constant. As before, $U_S$ denotes the uniform distribution on the set of variable $S$
and $P \in \mathcal{P}_{\tmop{no}}$ is a mixture of two uniform
distributions on disjoint parts, with different weights.

It is known that $\Omega (2^{n / 2} / \eps^2)$
samples are required to distinguish between such a randomly chosen $P$ and
the uniform distribution $U$; further, assume we know that the uniform distribution $U$
is in $\mathcal{C}$. What remains to show is the \emph{distance}, that is, ``most'' choices of $P \in
\mathcal{P}_{\tmop{no}}$ are $\eps$-far from $\mathcal{C}$. To argue that
last part, we will use our assumption that $\mathcal{C}$ can be learned with
$m$ samples to conclude by a counting argument: i.e., we will show that there can be at most $2^{mn}$
or so ``relevant'' elements of $\mathcal{C}$, while there are at least $2^{2^{\Omega (n)}}$ $\mathcal{P}_{\tmop{no}}$ that are $\varepsilon$-far from each
other. Suitably combining the two will establish the theorem below:
\begin{theorem}
  \label{theorem:testing_hard_for_easy_to_learn_dist}Let $\mathcal{C}$ be a
  class of probability distributions over $\{0, 1\}^n$ such that the following
  holds: (1) the uniform distribution belongs to $\mathcal{C}$ (2) there
  exists a learning algorithm for $\mathcal{C}$ with sample complexity $m = m
  (n, \eps)$. Then, as long as $mn \ll 2^{O (n)}$, testing whether an
  arbitrary distribution over $\{0, 1\}^n$ belongs to $\mathcal{C}$ or is $\eps$-far
  from every distribution in $\mathcal{C}$ in total variation distance requires
  $\Omega (2^{n / 2} / \eps^2)$ samples.
\end{theorem}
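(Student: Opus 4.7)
Following the outline in the paragraph preceding the theorem, I combine a Paninski-style hard family with a covering argument derived from the learning hypothesis. Fix a sufficiently large absolute constant $C$ (say $C=10$ or larger) and work with $\mathcal{P}_{\tmop{no}} = \{P_S : S \in \mathcal{S}\}$ where $P_S(x) = (1+C\eps)/D$ for $x \in S$ and $(1-C\eps)/D$ otherwise, $D = 2^n$. A direct computation gives $d_{\tmop{TV}}(P_S,U) = C\eps/2$ and $d_{\tmop{TV}}(P_{S_1},P_{S_2}) = \frac{C\eps}{D}|S_1 \triangle S_2|$. The classical lower bound of~\citet{paninski2008coincidence} is insensitive to constant rescalings of the distance parameter, so no algorithm with $o(2^{n/2}/\eps^2)$ samples can distinguish $U$ from a uniformly random $P_S \in \mathcal{P}_{\tmop{no}}$ with advantage better than $o(1)$. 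Thus once I show that a $1-o(1)$ fraction of $P_S$ is $\eps$-far from $\mathcal{C}$, any $\mathcal{C}$-tester yields such a distinguisher and the lower bound follows.

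To build the farness claim, I first extract an $\eps$-cover $\mathcal{H}$ of $\mathcal{C}$ from the learning algorithm $A$: viewing $A$ as a map $(\{0,1\}^n)^m \to \Delta(\{0,1\}^n)$, its image $\mathcal{H}$ has $|\mathcal{H}| \leq 2^{mn}$, and the $2/3$-success guarantee implies that every $P \in \mathcal{C}$ is at TV distance at most $\eps$ from some $h \in \mathcal{H}$ (for each $P$ there must exist at least one sample sequence on which the learner succeeds). By the triangle inequality, it suffices to show that most $P_S$ is $2\eps$-far from every $h \in \mathcal{H}$.

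This reduces to a packing count. For fixed $h \in \mathcal{H}$, if $S_1,S_2$ both satisfy $d_{\tmop{TV}}(P_{S_i},h) \leq 2\eps$, then $d_{\tmop{TV}}(P_{S_1},P_{S_2}) \leq 4\eps$, i.e., $|S_1 \triangle S_2| \leq 4D/C$. Since all $S \in \mathcal{S}$ have size exactly $D/2$, this forces $|S_1 \setminus S_2| \leq 2D/C$, so all such $S$ lie in a ball of this radius around any one of them. The number of such $S$ is at most $\sum_{j \leq 2D/C} \binom{D/2}{j}^2 \leq D \cdot 2^{D \cdot H(4/C)}$ by the standard entropy bound ($H$ the binary entropy). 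Writing $\alpha \eqdef H(4/C)$, which is strictly below $1$ for $C$ large enough, a union bound over $\mathcal{H}$ shows that at most $2^{mn + \alpha D}\cdot D$ values of $S$ can be $2\eps$-close to any element of the cover. Since $|\mathcal{S}| = \binom{D}{D/2} \geq 2^D/\sqrt{D}$, this is a $o(1)$ fraction whenever $mn \leq (1-\alpha) \cdot 2^n - \omega(\log D)$, which is the precise meaning I read into the theorem's hypothesis ``$mn \ll 2^{O(n)}$''.

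\textbf{Main obstacle.} The delicate ingredient is the simultaneous choice of $C$: it must be large enough that $H(4/C) < 1$ (so the entropy counting leaves room for $mn$ to be genuinely large compared to $\mathrm{poly}(n)$), yet fixed, independent of $n$, so the Paninski lower bound remains $\Omega(2^{n/2}/(C\eps)^2) = \Omega(2^{n/2}/\eps^2)$. Both are easily achieved by picking any $C$ with $H(4/C) < 1$ (e.g.\ $C = 10$ already gives $\alpha \approx 0.971$, and larger $C$ can be used to accommodate larger $m$). Everything else is bookkeeping: converting the learner into a cover of size $2^{mn}$, applying triangle inequality to translate ``far from $\mathcal{H}$'' into ``far from $\mathcal{C}$,'' and observing that any successful $\mathcal{C}$-tester distinguishes $U \in \mathcal{C}$ from a typical $P_S$, triggering the Paninski bound.
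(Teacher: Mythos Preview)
Your proof is correct and follows essentially the same approach as the paper: Paninski's hard family for indistinguishability, converting the learner into an $\eps$-cover of $\mathcal{C}$ of size $\leq 2^{mn}$, and a counting argument comparing this cover size to the number of hard instances. The only cosmetic difference is that the paper cites \citet{Blais_2019} for the existence of a large $\eps$-packing of $\mathcal{P}_{\tmop{no}}$, whereas you do the Hamming-ball volume computation directly via the entropy bound; your version is more self-contained and makes the role of the constant $C$ (and the resulting constraint $mn \lesssim (1-H(4/C))\cdot 2^n$) more explicit. One small point worth stating explicitly: when you say ``viewing $A$ as a map $(\{0,1\}^n)^m \to \Delta(\{0,1\}^n)$'' you are implicitly derandomizing the learner by fixing its internal coins, which is the standard move the paper also makes.
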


\begin{proof}
  As discussed above, indistinguishability follows from the literature~{\citep{paninski2008coincidence}}, and thus all we need to show now is that $\mathcal{P}_{\tmop{no}}$ is far from every distribution in $\mathcal{C}$.
  By assumption (2), there exists an algorithm $H\colon \{0, 1\}^{mn} \to
  \mathcal{P}$ (without loss of generality, we assume $H$ deterministic) that can output an estimated
  distribution given $m = m (n, \varepsilon)$ samples from $P \in \mathcal{C}$. Thus, for every $P
  \in \mathcal{C}$ given $m$ $i.i.d$. samples $X \in \{0, 1\}^{mn}$, $\Pr_{X \sim P^{\otimes m}} (d_{\tmop{TV}} (H (X), P) < \varepsilon)
  \geqslant 2 / 3$. 
  
  In particular, this implies the weaker statement that, for every $P \in \mathcal{C}$, there
  exists \emph{some} $x$ in $\{0, 1\}^{mn}$ s.t. $P \in B(H (x), \varepsilon)$ (where $B(x,r)$ denotes the TV distance ball of radius $r$ centered at $x$).
  By enumerating all possible values in $\{0, 1\}^{mn}$, we then can obtain an
  $\varepsilon$-cover $\{H (x_1), \ldots, H (x_{2^{mn}})\}$ of $\mathcal{C}$, that is, such that
  $\mathcal{C} \subseteq \bigcup_{i = 1}^{2^{mn}} B (H (x_i), \varepsilon)$. The $\varepsilon$-covering number of $\mathcal{C}$ is thus upper bounded by $2^{O (mn)}$.
  
  Next, we lower bound the size of $\mathcal{P}_{\rm{}no}$ by constructing an $\varepsilon$-packing $P_{\eps}$, where $P_{\eps} = \{P_i \in \mathcal{P}_{\tmop{no}}, i
  \in \mathbb{N}: d_{\tmop{TV}} (P_i, P_j) > \eps, i \neq j\}$.
  For $P,Q \in \mathcal{P}_{\tmop{no}}$ corresponding to two sets $S,S'$, each of size $\frac{D}{2}=2^{n-1}$, we have
  \begin{eqnarray*}
  d_{\tmop{TV}} (P,Q) & = & \frac{1}{2} | S \triangle S' | \cdot \left|
  \frac{1 + C \varepsilon}{2} - \frac{1 - C \varepsilon}{2} \right| \cdot 
  \frac{2}{D} = C\eps\cdot \frac{| S \triangle S' |}{D} > \varepsilon
  \end{eqnarray*}
  For this to be at least $\eps$, the pairrwise
  symmetric difference of (the sets corresponding to the) distributions in $P_{\eps}$ should be at least $\frac{D}{C} = \Omega(2^n)$. We know, by, e.g., \citet[Proposition~3.3]{Blais_2019} that there exist such families of balanced subsets of $\{0,1\}^n$ of cardinality at least $\Omega(2^{2^{\rho n}})$, where 
  $\rho>0$ is a constant that only depends on $C$.
  
  Thus, the size of $\mathcal{P}_{\tmop{no}}$ is itself $\Omega(2^{2^{\rho n}})$; combining this lower bound with the upper bound on the covering number of $\mathcal{C}$ concludes the proof.
\end{proof}

As a corollary, instantiating the above to the class $\mathcal{C}$ of degree-$d$ Bayes nets over $n$ nodes readily yields the following:
\begin{corollary}
  \label{corollary:lb_general}
  For large enough $n$, testing whether an arbitrary distribution over
  $\{0, 1\}^n$ is a degree-$d$ Bayes net or is $\eps$-far from every such
  Bayes net requires $\Omega (2^{n / 2} / \eps^2)$ samples, for any $d = o
  (n)$ and $\eps \geq 2^{- O (n)}$.
\end{corollary}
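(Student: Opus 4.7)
The plan is to apply Theorem~\ref{theorem:testing_hard_for_easy_to_learn_dist} directly to the class $\mathcal{C}$ of degree-$d$ Bayes nets over $\{0,1\}^n$, so the work reduces to verifying the hypotheses of that theorem. The first condition, that $U \in \mathcal{C}$, is immediate: the uniform distribution is the product of $n$ independent fair Bernoullis, hence a degree-$0$ (and therefore degree-$d$) Bayes net on any DAG on $n$ nodes.

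The second condition is the existence of a learning algorithm for $\mathcal{C}$. Here I would invoke the known learning result for bounded-degree Bayes nets already cited in the introduction, namely that
\[
m = m(n,\eps) = \tilde{O}\!\left(\frac{2^d \cdot n}{\eps^2}\right)
\]
samples suffice to learn a degree-$d$ Bayes net up to total variation distance $\eps$, even when the underlying DAG is unknown (e.g., \citet{BhattacharyyaGMV20}).

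The third, and most delicate, piece is checking the quantitative hypothesis ``$mn \ll 2^{O(n)}$'' from Theorem~\ref{theorem:testing_hard_for_easy_to_learn_dist}. Reading its proof, what is actually used is $mn \ll 2^{\rho n}$ for the universal constant $\rho>0$ arising from the packing of balanced subsets (inherited from the \citet{Blais_2019}-type bound). Plugging in the learning complexity above,
\[
mn = \tilde{O}\!\left(\frac{2^d\, n^2}{\eps^2}\right),
\]
so under $d = o(n)$ we have $2^d = 2^{o(n)}$, and under the hypothesis $\eps \geq 2^{-cn}$ (for some constant $c>0$ which the ``$O(n)$'' in $\eps \geq 2^{-O(n)}$ lets us choose) we have $1/\eps^2 \leq 2^{2cn}$; hence $mn \leq 2^{(2c+o(1))n}$, which is $\ll 2^{\rho n}$ provided $c < \rho/2$. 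For sufficiently large $n$, this is satisfied, so the hypothesis of Theorem~\ref{theorem:testing_hard_for_easy_to_learn_dist} holds.

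Applying that theorem then immediately yields the desired $\Omega(2^{n/2}/\eps^2)$ sample lower bound for testing membership in the class of degree-$d$ Bayes nets. The only ``obstacle'' is the minor bookkeeping step of making the constant in ``$\eps \geq 2^{-O(n)}$'' explicit in terms of the packing constant $\rho$ from Theorem~\ref{theorem:testing_hard_for_easy_to_learn_dist}; all the probabilistic and combinatorial content has already been done in that theorem, so no additional ideas are needed.
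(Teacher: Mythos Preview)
Your proposal is correct and follows essentially the same approach as the paper: apply Theorem~\ref{theorem:testing_hard_for_easy_to_learn_dist} to the class of degree-$d$ Bayes nets, verify $U\in\mathcal{C}$, plug in the $\tilde{O}(2^d n/\eps^2)$ learning bound, and check $mn \ll 2^{\rho n}$ under $d=o(n)$ and $\eps \geq 2^{-O(n)}$. The paper's proof is slightly more explicit about how the unknown-structure learning bound is obtained (combining the known-structure result of \citet{BhattacharyyaGMV20} with a hypothesis-selection tournament over DAGs, picking up an extra $\log(n^{dn})$ factor and yielding $mn = O(2^d n^6/\eps^2)$), but this polynomial discrepancy is immaterial to the $mn \ll 2^{\rho n}$ verification.
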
  
  \begin{proof}
    We can obtain a learning upper bound of $m = O (2^d n \log (2^{d + 1} n)
    \log (n^{d n}) / \varepsilon^2)$ for degree-$d$ Bayes nets by combining
    the known-structure case (proven in~{\citet{BhattacharyyaGMV20}}) with the
    reduction from known-structure to unknown-structure (via hypothesis
    selection/tournament~{\citep{CanonneDKS20}}). We have $m n \leqslant O
    (2^d n^6 / \varepsilon^2)$. To have $2^{mn} \ll 2^{2^{\rho n}}$, where
    $\rho$ is some constant, we need $m n < 2^{O (n)}$, which requires $d = o
    (n)$ and $\varepsilon \geq 2^{- O (n)}$ for large enough $n$.
  \end{proof}

\subsection{An $\Omega(2^{d/2}\sqrt{n} / \eps^2)$ Lower Bound} \label{section:lb_2}
In this section, we state and prove a simpler, but quantitatively weaker lower bound than Theorem~\ref{theo:main:lb} for independence testing, Theorem~\ref{theorem:lb_2}. This simpler lower bound is adapted from~\citet[Theorem~13]{CanonneDKS20} -- the ``mixture-of-products'' construction. Their analysis readily provides indistinguishability, and distance \emph{from the uniform distribution}. Thus, all we need here is to show that most of these hard instances (i.e., ``mixtures of products'') are far from every product distribution (Lemma~\ref{lemma:technical:mixture_of_products_far_from_product_distributions}), not just the uniform distribution. While the $\Omega ( 2^{d/2} \sqrt{n} / \eps^2 )$ lower bound this yields is not as tight in terms of sample complexity, with a $\sqrt{n}$ dependence instead of $n$ (at a high level, this is because we fix the Bayesian structure, and thus the algorithms have additional information they can leverage), the restriction on $d$ is much milder than the one in Theorem~\ref{theo:main:lb}, allowing up to $d = n/2$.

\begin{theorem}
    \label{theorem:lb_2}
    Let $1\leq d \leq n / 2$. Testing whether an arbitrary
    degree-$d$ Bayes net over $\{0, 1\}^n$ is a product distribution or is
    $\eps$-far from every product distribution requires $\Omega (2^{d / 2} 
    \sqrt{n} / \eps^2)$ samples. This holds even if the structure of the
    degree-$d$ Bayes net is known.
\end{theorem}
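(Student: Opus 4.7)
The plan is to adapt the ``mixture-of-products'' construction from the uniformity testing lower bound of~\citet[Theorem~13]{CanonneDKS20}. I would designate the first $d$ nodes $X_1, \ldots, X_d$ as ``pointer'' nodes, uniform over $\{0,1\}^d$, and connect each of the remaining $N \eqdef n-d$ nodes to all $d$ pointer nodes, so the structure is the fixed bipartite DAG from pointers to children with max in-degree $d$. For each pointer configuration $\pi \in \{0,1\}^d$ and each child $i \in \{d+1, \ldots, n\}$, I would independently draw $\xi_{i,\pi} \in \{\pm 1\}$ uniformly and set the conditional bias of $X_i$ given $\pi$ to $1/2 + \delta \xi_{i,\pi}$, with $\delta \eqdef c\eps/\sqrt{n}$ for a small absolute constant $c$. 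The resulting random distribution $P_\theta$ is a uniform mixture of $2^d$ product distributions over $\{0,1\}^N$ and is, by construction, Markov w.r.t.\ the known, fixed DAG above.

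For the indistinguishability step, I would directly invoke the analysis of~\citet[Theorem~13]{CanonneDKS20}, which already shows that no $m$-sample algorithm can distinguish a random such $P_\theta$ from the uniform distribution $U$ over $\{0,1\}^n$ unless $m = \Omega(2^{d/2}\sqrt{n}/\eps^2)$. Since $U$ is itself a product distribution (and Markov w.r.t.\ the same fixed DAG), a standard two-point (Le Cam) argument will reduce the independence testing lower bound to establishing that $P_\theta$ is $\Omega(\eps)$-far from every product distribution with high constant probability, which is the main new step.

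For this farness step, the plan parallels Section~\ref{ssec:mixturetree:farness}. First, Lemma~\ref{lemma:technical_approximated_lb_product_distributions} applied to the partition $\{X_1\}$ vs.\ $\{X_2,\ldots,X_n\}$ (noting that any $n$-fold product is in particular a $2$-fold product on this partition) bounds the distance from $P_\theta$ to any product distribution below by $\tfrac{1}{3}\,d_{\tmop{TV}}(P_\theta,\,P_\theta|_{X_1}\otimes P_\theta|_{X_2,\ldots,X_n})$. Since the $X_1$-marginal of $P_\theta$ is exactly uniform on $\{0,1\}$, Lemma~\ref{lemma:technical:TV_lower_bound_of_its_marginals} further bounds this quantity from below by $d_{\tmop{TV}}(P_\theta(\cdot\mid X_1=0),\,P_\theta(\cdot\mid X_1=1))$. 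The key observation is that both conditionals share the same uniform marginal on $(X_2,\ldots,X_d)$, so this TV distance decomposes \emph{exactly} as $\frac{1}{2^{d-1}}\sum_{s\in\{0,1\}^{d-1}} d_{\tmop{TV}}(p^{(0,s)}, p^{(1,s)})$, where $p^{(b,s)}$ is the product on $\{0,1\}^N$ obtained by conditioning on $(X_1,\ldots,X_d)=(b,s)$.

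To finish, I would bound each term $d_{\tmop{TV}}(p^{(0,s)}, p^{(1,s)})$ from below. Since the two products have bias vectors differing coordinatewise by either $0$ or $\pm 2\delta$, independently with probability $1/2$ each, a Chernoff bound plus a union bound over the $2^{d-1}\leq 2^{n/2}$ values of $s$ (which is where the hypothesis $d\leq n/2$ enters) gives $\|\mu^{(0,s)}-\mu^{(1,s)}\|_2^2 \geq \delta^2 N/2$ for all $s$ simultaneously with high constant probability. A standard TV-vs-$\ell_2$ lower bound for products of near-balanced Bernoullis (e.g., via a Berry--Esseen argument on the test statistic $\sum_i (\mu_i^{(0,s)}-\mu_i^{(1,s)}) X_i$, whose two distributions have mean separation $\|\mu^{(0,s)}-\mu^{(1,s)}\|_2^2$ and standard deviation $\Theta(\|\mu^{(0,s)}-\mu^{(1,s)}\|_2)$) then yields $d_{\tmop{TV}}(p^{(0,s)}, p^{(1,s)}) = \Omega(\delta\sqrt{N}) = \Omega(\eps)$, using $N\geq n/2$. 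The anticipated main obstacle is pushing the farness argument through the mixture structure (where naive convexity gives only upper bounds on the TV distance), but this is resolved by the exact decomposition of the conditional TV distance through the shared uniform $(X_2,\ldots,X_d)$-marginal.
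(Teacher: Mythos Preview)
Your proposal is essentially correct and tracks the paper's proof closely: same mixture-of-products construction, same black-box invocation of \citet[Theorem~13]{CanonneDKS20} for indistinguishability, same reduction of farness-from-products via Lemmas~\ref{lemma:technical_approximated_lb_product_distributions} and~\ref{lemma:technical:TV_lower_bound_of_its_marginals} to the average of $2^{d-1}$ pairwise TV distances between product distributions on $\{0,1\}^N$, and the same TV-vs-$\ell_2$ lower bound for each pair (the paper cites \citet[Lemma~6.4]{kamath2019privately} rather than Berry--Esseen, but either works).

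The one point that does not quite close is your union bound over $s\in\{0,1\}^{d-1}$. For any fixed positive Hamming threshold $\alpha$, the per-$s$ Chernoff failure probability is at most $e^{-cN}$ with $c = D(\alpha\Vert 1/2) < \ln 2$; so when $d$ is near $n/2$ (hence $N=n-d$ is near $d$), the bound $2^{d-1}e^{-cN}$ does \emph{not} tend to zero, and the ``for all $s$ simultaneously'' conclusion fails right at the endpoint $d=n/2$ that the theorem allows. The paper avoids this: because the $2^{d-1}$ pairs $(Z^{(0,s)},Z^{(1,s)})$ involve pairwise disjoint parameter vectors, the indicators $X_s \eqdef \mathbbm{1}\{d_{\tmop{TV}}(p^{(0,s)},p^{(1,s)}) \geq \eps/25\}$ are i.i.d.\ Bernoulli with parameter at least $1-e^{-cN}\geq 9/10$, and a single Hoeffding bound on their \emph{average} then gives $\frac{1}{2^{d-1}}\sum_s d_{\tmop{TV}}(p^{(0,s)},p^{(1,s)}) \geq \eps/250$ with probability at least $9/10$, uniformly over the full range $d\leq n/2$. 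Swapping your union bound for this averaging step closes the gap with no other changes.
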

\begin{proof}
As discussed above, we will use the same ``mixture-of-products'' construction as in~\citet[Theorem~13]{CanonneDKS20}, which established a lower bound of $\Omega (2^{d / 2} \sqrt{n} / \eps^2)$ samples to distinguish it from the uniform distribution. We first recall the definition of this ``mixture-of-products'' construction.

Letting $N \eqdef n-d\geq n/2$, we define, for $z\in\{\pm 1\}^N$ the product distribution $p_Z$ over $\{0,1\}^N$ by
\begin{equation}
    \label{eq:far:products}
    p_z(x) = \prod_{i = 1}^N \mleft(\frac{1}{2} + z_i (- 1)^{x_i} \delta\mright), \qquad x\in\{0,1\}^N.
\end{equation}
where $\delta \eqdef \frac{\eps}{\sqrt{N}} = \Theta\mleft(\frac{\eps}{\sqrt{n}}\mright)$. A mixture-of-products distribution is then defined by choosing $2^d$ i.i.d.\ $Z_1,\dots, Z_{2^d}\in\{\pm 1\}^N$ uniformly at random, and setting $\p_{Z_1,\dots, Z_{2^d}}$ to be the distribution over $\{0,1\}^n$ which is uniform on the first $d$ bits, and where the first $d$ bits of $x$ are seen as the binary representation (i.e., a ``pointer'') for which $\p_{Z_i}$ will be used for the last $N$ bits of $x$. That is,
\begin{equation}
    \label{eq:mixture:products}
        \p_{Z_1,\dots, Z_{2^d}}(x) = \frac{1}{2^d}p_{Z_{\iota(x_1,\dots, x_d)}}(x_{d+1},\dots,x_{n}), \qquad x\in\{0,1\}^n
\end{equation}

where, analogously to Definition~\ref{def:mixture:trees}, $\iota\colon \{0,1\}^d \to [2^d]$ is the indexing function, mapping the binary representation (here on $d$ bits) to the corresponding number.

As mentioned in the preceding discussion, this construction was already used in~\citet[Theorem~13]{CanonneDKS20}, where the authors show an $\Omega ( 2^{d/2} \sqrt{n} / \eps^2 )$  sample complexity lower bound to distinguish a uniformly randomly chosen mixture-of-products distribution (which is a degree-$d$ Bayes net) from the uniform distribution (which is a product distribution). For their theorem (a lower bound on testing \emph{uniformity}), they then conclude from the easy fact that every such mixture-of-products distribution is $\eps$-far from the uniform distribution. This is not enough for us, as, to obtain the lower bound stated in Theorem~\ref{theorem:lb_2}, what we need is to show that every such mixture-of-products distribution (or at least \emph{most} of them) is far from \emph{every} product distribution, not just the uniform one. This is the only missing part towards proving Theorem~\ref{theorem:lb_2}, and is established in our next lemma:
  \begin{lemma}[Distance from Product distributions]
    \label{lemma:technical:mixture_of_products_far_from_product_distributions}
    For $p$ uniformly sampled from the mixture-of-products construction,
    \[ 
        \Pr \left[ \min_{q_1,q_2}d_{\tmop{TV}} (p, q_1 \otimes q_2) \geq \frac{\varepsilon}{750} \right] \geq \frac{9}{10}
    \]
    as long as $n \geq d + C_1$, for some constants $C_1 > 0$ and $n / 2 \geq d$.
\end{lemma}
This lemma will directly follow from Claim~\ref{claim:technical:distance_of_conditionals} (below) and Lemma~\ref{lemma:technical_approximated_lb_product_distributions}; the rest of this appendix is thus dedicated to proving the former, which states that most mixture-of-products distributions are far from the product of their marginals.

\begin{claim}
\label{claim:technical:distance_of_conditionals} 
Given a mixture-of-products distribution $p$ as in~\eqref{eq:mixture:products}, let $p_1$ be the marginal of $\p$ on the first $d$ variables (parent nodes) and $p_2$ the marginal on the $N$ last variables (child nodes). Note that $p_1\otimes p_2$ is then a product distribution on $\{0,1\}^n$.  Then, we have
\[ \Pr \left[ d_{\tmop{TV}} (p, p_1 \otimes p_2) \geq \frac{\varepsilon}{250}
    \right] \geq \frac{9}{10} \]
as long as $n \geq C_1 + d$, for some constants $C_1 \geq 0$ and $n \geq 2d$.
\end{claim}
\begin{proof}
  Fix any mixture-of-products distribution $p$. From Lemma~\ref{lemma:technical:TV_lower_bound_of_its_marginals} and the structure of $p$ as given in~\eqref{eq:mixture:products}, one can show that
  \[
  d_{\tmop{TV}} (p, p_1 \otimes p_2) \geq \frac{1}{2^{d - 1}}  \sum_{x_2,
    \ldots, x_d} d_{\tmop{TV}} (p (\cdummy \mid 0, x_2, \ldots, x_d), p
    (\cdummy \mid 1, x_2, \ldots, x_d)).
\]
Denoting $p (\cdummy \mid 0, x_2, \ldots, x_d)$ by $p_{\iota(x_2, \ldots, x_d)}$  and $p (\cdummy \mid 0, x_2, \ldots, x_d)$ by $q_{\iota(x_2, \ldots, x_d)}$ (where $\iota(x_2, \ldots, x_d)\in[2^{d-1}]$, abusing slightly the definition of the indexing function to extend it to $d-1$ bits), we can rewrite this as
  \[ d_{\tmop{TV}} (p, p_1 \otimes p_2) \geq \frac{1}{2^{d - 1}}  \sum_{t =
    1}^{2^{d - 1}} d_{\tmop{TV}} (p_t, q_t) \eqqcolon \overline{d_{\tmop{TV}}}\,.
  \]
  Now, since $d \leqslant n/2$, one can show that, for every fixed $t$, 
  \begin{equation}\label{lemma:single_pz_farness}
  \Pr [d_{\tmop{TV}} (p_t, q_t) < \varepsilon / 25] < e^{- C\cdot N}
  \end{equation}
  where the probability is taken over the choice of $p$ (i.e., its $2^d$ parameters $Z_1,\dots, Z_{2^d}$), and $C>0$ is an absolute constant.  We defer the proof of this inequality to the end of the appendix, and for now observe that the RHS is less than $1 / 10$ for $N$ greater than some (related) absolute constant $C_1>0$. We can then write, letting $D \eqdef 2^{d - 1}$ and $X_t \eqdef
  \mathbbm{1} \{d_{\tmop{TV}} (p_t, q_t)\}$
  \[ \overline{d_{\tmop{TV}}} \geq \frac{\varepsilon}{25} \cdot \frac{1}{2^{d
    - 1}}  \sum_{t = 1}^{2^{d - 1}} \mathbbm{1} \left\{ d_{\tmop{TV}}
    (p_t, q_t) \geq \frac{\varepsilon}{25} \right\} = \frac{\varepsilon}{25}
    \cdot \frac{1}{D}  \sum_{t = 1}^D X_t, \]
  where the $X_t$'s are i.i.d. Bernoullis with, by the above analysis, parameter $\alpha \geq 1 - e^{-
  C \cdot N} \geq 9 / 10$. We then have
    \[ 
  \Pr [d_{\tmop{TV}} (p, p_1 \otimes p_2) < \varepsilon / 250] \leq \Pr
    [\overline{d_{\tmop{TV}}} < \varepsilon / 250] \leq \Pr \left[ \sum_{t =
    1}^D X_t < \frac{D}{10} \right] ,
    \]
    so it remains to show that the RHS is less than $1/10$. Since $\mathbb{E}[X_t] \geq 9/10$ for all $t$, this readily follows from a Hoeffding bound, for $d\geq 1$.
\end{proof}
To conclude, we only need to prove~\eqref{lemma:single_pz_farness}, which (slightly rephrasing it) tells us that two independent parameterizations $p_Z, p_Z'$ will be at total variation distance at least $\Omega(\eps)$ far with overwhelming probability.
\begin{proof}[Proof of \eqref{lemma:single_pz_farness}]
Let distribution $p_Z, p_{Z'}$ be defined as
  in~\eqref{eq:far:products}, and $Z, Z'$ be i.i.d.\ and uniform on $\{\pm 1\}^N$. The statement to show is then
\begin{equation}
    \label{eq:whattoshow}
  \Pr\mleft[d_{\tmop{TV}} (p_Z, p_{Z'}) \geq \frac{\eps}{25}\mright] \geqslant 1 - e^{-N/18},
\end{equation}
We know (see, e.g., \citet[Lemma 6.4]{kamath2019privately}) that as long as $\delta \leqslant 1 / 6$ (which holds for $n\geq 36$), then the TV distance is related to the $\ell_2$ distance between mean vectors $\mu_Z,\mu_{Z'}\in[0,1]^N$ as
    \begin{equation}
      d_{\tmop{TV}} (p_Z, p_{Z'}) \geq \frac{1}{20}\sqrt{\sum_{i = 1}^N (\mu_{Z,i} - \mu_{Z',i})^2} .
      \label{eq:pz_TV_l2}
    \end{equation}
    Relating this $\ell_2$ distance between mean vectors the Hamming distance between $Z$ and $Z'$, we have
    \begin{equation}
      \sqrt{\sum_{i = 1}^N (\mu_{Z,i} - \mu_{Z',i})^2} = \sqrt{\sum_{i = 1}^N
      \left( \left( \frac{1}{2} - Z_i \delta \right) - \left( \frac{1}{2}-
      Z_i' \delta \right) \right)^2} = 2 \delta \sqrt{\tmop{Hamming} (Z, Z')} ,
      \label{eq:pz_l2_hamming}
    \end{equation}
    where $\tmop{Hamming} (Z, Z') = \sum_{i = 1}^N \mathbbm{1}  [Z_i \neq
    Z_i']$. Noting that Hamming$(Z, Z') \sim \tmop{Bin} (N, 1 / 2)$, we have
    the following via Hoeffding's inequality along with (\ref{eq:pz_TV_l2})
    and (\ref{eq:pz_l2_hamming}),
    \[ 
    \Pr \left[ d_{\tmop{TV}} (p_Z, p_{Z'}) \geqslant \frac{1}{20} 
       \sqrt{\frac{4 \delta^2 N}{3}} \right] \geqslant \Pr [\tmop{Hamming} (Z,
       Z') \geqslant N / 3] \geqslant 1 - e^{- N / 18} . 
       \]
    Since $n / 2 \geqslant d$ and thus, $\sqrt{\frac{4 \delta^2 N}{3}}
    \geqslant \frac{4}{5} \varepsilon$, we get~\eqref{eq:whattoshow}.
  \end{proof}
\noindent This concludes the proof of Lemma~\ref{lemma:technical:mixture_of_products_far_from_product_distributions}, and with it of Theorem~\ref{theorem:lb_2}.
\end{proof}
%
%
%

\end{document}